\newcounter{prooffactcounter}
\newtheorem{prooffact}[prooffactcounter]{Fact}
\newcommand{\resetprooffacts}{\setcounter{prooffactcounter}{0}}
\spnewtheorem*{notation}{Notation}{\itshape}{}  
\newcommand{\sendstep}[2]{\ensuremath{#1\rightarrow#2:}}
\newcommand{\range}{\ensuremath{\mathop{range}}}
\newcommand{\card}[1]{\left| #1 \right|}
\newcommand{\protocol}{\ensuremath{\tau}}
\newcommand{\prblemname}[1]{\ensuremath{\mathsf{#1}}}
\newcommand{\exstrat}{$\exists$-\prblemname{TStrat}}
\newcommand{\Kunde}{\ensuremath{\mathsf C}}
\newcommand{\Versand}{\ensuremath{\mathsf S}}
\newcommand{\Logistik}{\ensuremath{\mathsf D}}
\newcommand{\product}{\ensuremath{\mathsf{product}}}
\newcommand{\address}{\ensuremath{\mathsf{address}}}
\newcommand{\ptype}{\ensuremath{\mathsf{parceltype}}}
\newcommand{\delprice}{\ensuremath{\mathsf{deliveryprice}}}
\newcommand{\vars}{\ensuremath{\mathit{Vars}}}
\newcommand{\assign}{\ensuremath{\mathit{Assign}}}
\newcommand{\view}[1]{\ensuremath{\mathit{view}\left(#1\right)}}
\newcommand{\advcmd}[1]{\ensuremath{\mathsf{#1}}}
\newcommand{\set}[1]{\ensuremath\left\{#1\right\}}
\newcommand{\mathtext}[1]{\ensuremath{\mathrm{\text{#1}}}}
\newcommand{\var}[1]{\ensuremath{\vars(#1)}}
\newcommand\restr[2]{{
  \left.\kern-\nulldelimiterspace
  #1
  \vphantom{\big|}
  \right|_{#2}
  }}
\newcommand{\decisionproblemwidth}[4]{
\medskip
\vspace*{1mm}
\begin{tabular}{lp{#1}}
\textit{Problem:} & #2 \\
\textit{Input:} & #3 \\
\textit{Question:} & #4
\end{tabular}
\smallskip
\vspace*{1mm}
}
\let\doendproof\endproof
\renewcommand\endproof{~\hfill\qed\doendproof}
\newcommand{\keywords}[1]{\par\addvspace\baselineskip
\noindent\keywordname\enspace\ignorespaces#1}
\title{Active Linkability Attacks}
\author{Henning Schnoor \and Oliver Woizekowski}
\institute{Institut f\"ur Informatik, Christian-Albrechts-Universit\"at Kiel \\
       Olshausenstra\ss{}e 40, 24098 Kiel, Germany \\
       \email{$\{$henning.schnoor$|$oliver.woizekowski$\}$@email.uni-kiel.de}
       }
\begin{document}

\maketitle

\begin{abstract}
We study linking attacks on communication protocols. We show that an \emph{active} attacker is strictly more powerful in this setting than previously-considered passive attackers. We introduce a formal model to reason about active linkability attacks, formally define security against these attacks and give very general conditions for both security and insecurity of protocols. In addition, we introduce a composition-like technique that allows to obtain security proofs by only studying small components of a protocol.
\keywords{attack methods, privacy, anonymity, web services}
\end{abstract}

\section{Introduction}

A typical goal of a protocol using web services is to compute values based on information that is distributed among several parties:
A user may have a specific set of input values, a web service then can compute, given these values, a function
whose result---possibly combined with further data supplied by the user---is then used as an input to
a further web service.

Such protocols can be synthesized to respect the privacy of individual values (e.g., address or credit card number)~\cite{BhargavanCorinFournet-SECURE-SESSIONS-FOR-WEB-SERVICES-ACMTISS-2007,BackesMaffeiPecinaReischuk-G2C-GOAL-DRIVEN-PROTOCOL-SPECIFICATIONS}. 

In addition to privacy of values, a crucial aspect in such a setting is \emph{linkability}: If an adversary can connect different values to the \emph{same} user session, this may be a privacy violation. For example, it might be harmless if an adversary knows customer names and products sold by a shop as long as these values cannot be \emph{linked}, i.e., the adversary does not learn who ordered what. Linkability has been studied in the context of eHealth protocols~\cite{DongJonkerPang-FORMAL-ANALYSIS-EHEALTH-PROTOCOL-ESORICS-2012} and anonymous internet usage~\cite{BiryukovPustogarovWeinmann-TORSCAN-ESORICS-2012}, similar privacy-relation questions have been considered in~\cite{NarayananShmatikov-DE-ANONYMIZATION-SP-2008,Sweeney-ACHIEVING-K-ANONYMITY-JUFKS-2002,ArapinisChothiaRitterRyan-UNLINKABILITY-AND-ANONYMITY-APPLIED-PI-CALCULUS-CSF_2010,EignerMaffei-DIFFERENTIAL-PRIVACY-BY-TYPING-CSF-2013}.

To the best of our knowledge, work on linkability up to now only studied what an adversary can deduce who does not interfere with the actual protocol run. Such attacks by an ``honest-but-curious'' adversary are \emph{passive}. We show that these attacks have an \emph{active} counterpart: An adversary involved in the actual protocol run is strictly more powerful. (We are not concerned with active attacks on the \emph{cryptographic} security of the protocols, there is of course a large literature on such attacks).

We contribute to the development of the theory of linkability by introducing active linking attacks. Our contributions are as follows:
\begin{itemize}
 \item We define a formal model that takes into account anonymous channels and nested web service queries.
 \item We give a formal definition of active linking attacks, and formalize a class of such attacks, which we call \emph{tracking strategies}.
 \item We demonstrate an \emph{embedding} technique which generalizes composition. This technique can be used to simplify security proofs.
 \item For a large, natural class of protocols, we give a complete characterization of secure protocols and possible attacks: Active linkability attacks can be mounted if and only if tracking strategies exist.
\end{itemize}

There are technical similarities between our security proofs and results on lossless decomposition of databases, where complete database tuples can be reproduced from partial ones~\cite{MaierMendelzonSagiv-IMPLICATIONS-DATA-DEPENDENCIES-TDS-1979,AhoBeeriUllman-JOINS-RELATIONAL-DATABASES-TDS-1979}. However, in the database setting, the notion of an active attacker is not studied.

\subsection{An Example}

\begin{wrapfigure}[7]{r}{5.4cm}
\vspace*{-5mm}
\begin{tabular}{ll}
 \sendstep \Kunde\Versand & $(\mathsf{product})$ \\
 \sendstep \Versand\Kunde & $p=\mathsf{parceltype}(\mathsf{product})$ \\
 \sendstep \Kunde\Logistik & $(p,\mathsf{address})$ \\
 \sendstep \Logistik\Kunde & $\mathsf{deliveryprice}(p,\mathsf{address})$
\end{tabular}
\caption{Simple Protocol $\protocol_{ex}$}
\label{fig:simples protokoll}
\end{wrapfigure}

A customer $\Kunde$ with $\address$ wants to learn the shipping cost for $\product$ ordered from shop $\Versand$ with shipping company service $\Logistik$. $\Kunde$ knows the values $\address$ and $\product$, $\Versand$ knows the function $\ptype$, determining the type of parcel $p=\ptype(\product)$ needed to package $\product$, $p$ is a number between $0$ and some $n$. The company $\Logistik$ knows the function $\delprice$ determining the shipping cost $\delprice(p,\address)$ of a parcel of type $p$ to $\address$. This setting yields the straight-forward protocol given in Figure~\ref{fig:simples protokoll}. (We abstract from cryptographic properties and assume secure channels between all parties.)

$\Kunde$ expects that $\Versand$ and $\Logistik$ cannot link $\product$ and $\address$, even if they work together: $\Versand$ learns $\product$ but not $\address$; $\Logistik$ learns $\address$ but not $\product$. If many users run the protocol in parallel and $\Kunde$ cannot be identified by his IP address (e.g., uses an anonymity service),
and $\Kunde$ waits a while between his two messages to avoid linking simply due to timing, then ideally $\Versand$ and $\Logistik$ should be unable to determine which of their respective queries come from the same customer. 

This reasoning is indeed correct for a passive attacker. However, it overlooks that $\Versand$ and $\Logistik$ control part of the user's data---namely the value $p$---and therefore can mount the following \emph{active} attack:

 \begin{enumerate}
  \item $\Versand$ replies to the first received query with a $0$, every later query is answered with a $1$. $\Versand$ stores the value of $\mathsf{product}$ from the first query.
  \item $\Logistik$ waits for a query of the form $(0, \address)$ and sends $\address$ to $\Versand$.
  \item $\Versand$ knows that $\address$ received from $\Logistik$ comes from the same user as the first query and hence can link this user's $\address$ and $\product$.
 \end{enumerate}
 
 This allows $\Versand$ and $\Logistik$ to produce a ``matching'' pair of $\address$ and $\product$, even with many parallel protocol runs and anonymous connections from $\Kunde$ to $\Versand$ and $\Logistik$. After one such run, the value $0$ can be used to track another session. Similarly, $n-1$ sessions can be tracked in parallel. The strategy can be refined in order to track a session in which a particular product was ordered. 
 The attack uses the $0$-value for $p$ as a ``session cookie'' that identifies a particular session. We stress that this attack does not violate a cryptographic property, but abuses control over user data to violate a privacy property of the protocol. In particular, this attack cannot be avoided by purely cryptographic means.

 \begin{wrapfigure}[9]{r}{4cm}
\vspace*{-0.8cm}
\begin{tikzpicture}
 \node at (0,0) (ptype) {\ptype};
 \node [below=10mm of ptype] (xprod) {$x_{\mathsf{product}}$};
 \node [right=5mm of xprod] (xaddress) {$x_{\mathsf{address}}$};
 \node at (1,1) (delprice) {\delprice};
 \draw [->] (xprod) edge (ptype);
 \draw [->] (ptype) edge (delprice);
 \draw [->] (xaddress) edge (delprice);
\end{tikzpicture}
\captionof{figure}{Model of $\protocol_{ex}$}\label{figure:into protocol as graph}
\end{wrapfigure}

This paper is organized as follows: In Section~\ref{section:model}, we introduce our protocol model and state our security definition. In Section~\ref{section:tracking strategies}, we generalize the above strategy to tracking strategies, which can be applied to a large class of protocols. In Section~\ref{section:security}, we present techniques to prove security of protocols, including a detailed proof for an example protocol, two general results for what ``flat'' protocols, a composition-like technique we call embedding, and generalizations of our security results to non-flat protocols. In Section~\ref{sect:generalized security}, we briefly discuss and characterize a generalization of our security notion. We then conclude, in Section~\ref{sect:conclusion}, with some ideas for further research.

\section{Protocol model}\label{section:model}

Our model provides anonymous channels between the user and each web service, since linking is trivial if the user can be identified by e.g., an IP address. For simplicity, we assume that all web services relevant for a protocol are controlled by a single adversary. To model interleaving of messages from different users, we introduce a scheduler who determines the order in which messages are delivered.

\begin{wrapfigure}[10]{l}{5.5cm}
\vspace*{-0.8cm}
\begin{tikzpicture}
 \node at   (0,0) (g) {$g$};
 \node [below=10mm of g] (f2) {$f_2$};
 \node [left= of f2] (f1) {$f_1$};
 \node [right= of f2] (f3) {$f_3$};
 
 \node [below=10mm of f1] (dummy) {};
 \node [left=3mm of dummy] (a1) {$a_1$};
 \node [right=3mm of a1] (a2) {$a_2$};
 \node [right=3mm of a2] (b1) {$b_1$};
 \node [right=3mm of b1] (b2) {$b_2$};
 \node [right=3mm of b2] (c1) {$c_1$};
 \node [right=3mm of c1] (c2) {$c_2$};
 
 \draw [->] (a1) edge [bend left] (f1);
 \draw [->] (a2) edge [bend right] (f1); 

 \node [left=-3mm of b1] (dummy2) {};
 \node [above=-1mm of dummy2] (dummy3) {};
 
 \draw [->] (dummy3) edge [bend right] (f1); 
 
 \node [right=-1mm of dummy3] (dummy5) {};
 
 \draw [->] (dummy5) edge [bend left] (f2);
 \draw [->] (b2) edge [bend right] (f2); 
 
 \node [left=-3mm of c1] (dummy6) {};
 \node [above=-1mm of dummy6] (dummy7) {};
 
 \draw [->] (dummy7) edge [bend right] (f2); 
 
 \node [right=-1mm of dummy7] (dummy8) {};
 
 \draw [->] (dummy8) edge [bend left] (f3);
 \draw [->] (c2) edge [bend right] (f3); 
 
 \draw [->] (a1) .. controls (2,-4) and (4,-3) ..  (f3);
 
 \draw [->] (f1) edge [bend left] (g);
 \draw [->] (f2) edge (g);
 \draw [->] (f3) edge [bend right] (g);
\end{tikzpicture}
\vspace*{-12mm}
\captionof{figure}{A protocol}\label{figure:protocol private variables}
\end{wrapfigure}

The security of a protocol depends on the structure of the nested queries to the involved web services. Since query results can be used as inputs for later queries, we model a protocol as a directed acyclic graph. Each node in this graph represents a query to a single web service such as $\mathsf{parceltype}$ and $\mathsf{deliveryprice}$ in the above example. Typically, these have some semantics describing the web service. However, we take the (pessimistic) point of view that the adversary ignores these semantics and replies only with the goal to maximize her attack chances. Therefore, our formal treatment does not fix any semantics for the functions computed in a protocol; we only distinguish between variable nodes (these model user input values) and query nodes (these model queries to web services).

An edge $u\rightarrow f$ in a protocol models that the value of $u$ (either an input value or a query result) is used as input to $f$. For simplicity, we assume that all values and query results in $\protocol$ are Boolean; other values can be modelled by introducing function domains or by encoding them as sequences of Booleans. User's input values are represented in the protocol using variables, these are the special nodes from $\var{\protocol}$. The representation of a protocol is similar to Boolean circuits (see~\cite{VollmerCircuitBook-SPRINGER-1999}).

\begin{definition}
 A \emph{protocol} is a directed acyclic graph $\protocol=(V,E)$ with a subset $\emptyset\neq\var{\protocol}\subseteq V$ such that each node in $\var{\protocol}$ has in-degree $0$.
\end{definition}

In Figure~\ref{figure:into protocol as graph}, the protocol $\protocol_{ex}$ from the introduction is formalized in our model, another example is presented in Figure~\ref{figure:protocol private variables}. 
Our protocols do not fix the order of requests to different services (except that if $f_1\rightarrow f_2$ is an edge in $\protocol$, each user must query $f_1$ before $f_2$). However, our results also hold for the case that the protocol fixes a query order. 

We call nodes of $\protocol$ without outgoing edges \emph{output nodes}. If $\protocol$ only has a single output-node, this node is the \emph{root} of $\protocol$.
We often identify $\protocol$ and its set of nodes, i.e., talk about nodes $f\in\protocol$ and subsets $\protocol'\subseteq\protocol$. For $u,v\in\protocol$, we write $u\rightsquigarrow v$ if there is a directed path from $u$ to $v$.

For $f\in\protocol$, $\vars(f)$ denotes $\set{x\in \vars(\protocol)\ \vert\ x\rightsquigarrow f\mathtext{ is a path in }\protocol}$, i.e., the set of input values that influence the queries made at the node $f$. For a set $S\subseteq\protocol$, with $\vars(S)$ we denote the set $\cup_{u\in S}\vars(u)$.

\subsection{Protocol execution}

We first informally describe how a protocol $\protocol$ is executed in our model. We identify a \emph{user} with her \emph{local session} containing her input values: A user or local session is an assignment $I\colon\var\protocol\to\set{0,1}$. During a protocol run, users store the results of queries. To model this, local sessions will be extended to assignments $I\colon\protocol\to\set{0,1}$. For a non-variable node $f\in\protocol$, the value $I(f)$ then contains the query result of $f$ for user $I$. $\assign$ is the set of all such assignments $I\colon V\to\set{0,1}$, where $V\subseteq\protocol$.

A \emph{run} or \emph{global session} of $\protocol$ is based on a multiset $S=\set{I_1,\dots,I_m}$ of users. Each $I_i$ performs a query for each non-variable node $f$ of $\protocol$ as follows: Let $u_1,\dots,u_n$ be the predecessor nodes of $f$ ($u_j$ represents a user's input value if $u_j\in\var\protocol$, and a result of a preceeding query otherwise). The arguments for the $f$-query are the user's values for $u_1,\dots,u_n$, i.e., the values $I_i(u_1),\dots,I_i(u_n)$. The query consists of the pair $(f,(I_i(u_1),\dots,I_i(u_n)))$. Hence the adversary learns which service the user queries and the arguments for this query, but does \emph{not} see the value $i$ identifying the user.

The adversary can reply to $I_i$'s $f$-query immediately or first wait for further queries. When she eventually replies with the bit $r$, the user stores this reply: We model this by extending $I_i$ with the value $I_i(f)=r$.

An adversary \emph{strategy} chooses one of three options in every situation:

\begin{enumerate}
 \item \emph{reply} to a previously-received user query,
 \item \emph{wait} for the next query (even if there are unanswered queries),
 \item \emph{print} an $I\in\assign$; the adversary wins if $I\in S$, and fails otherwise.
\end{enumerate}

\subsubsection{Schedules and Global Sessions}

Queries can be performed in any order that queries the predecessors of each node before the node itself. Formally, an \emph{$n$-user schedule for $\protocol$} is a sequence of pairs $(i,f)$ where $i\in\set{1,\dots,n}$, $f\in\protocol\setminus\vars(\protocol)$ where each such pair appears exactly once, and if $f\rightarrow g$ is an edge in $\protocol$ with $f\notin\var\protocol$, then $(i,f)$ appears in $\protocol$ before $(i,g)$. The pair $(i,f)$ represents the $f$-query of the $i$th user.

A \emph{global session} for $\protocol$ is a pair $(S,\sigma)$ where $S$ is a multiset of local sessions for $\protocol$, and $\sigma$ is a $\card{S}$-user schedule for $\protocol$.

\subsubsection{Protocol state.}

A \emph{protocol state} contains complete information about a protocol run so far. It is defined as a pair $(s,\sigma)$, where $s$ is a sequence over $\protocol \times \assign \times \mathbb N \times \set{0,1,\bot} \times (\mathbb{N} \cup \set{\bot})$ and $\sigma$ is a suffix of a schedule encoding the queries remaining to be performed. An element $s_i=(f,I,i,r,t)$ in $s$ encodes the $f$-query of user $I_i$ as described above, here $I$ is the assignment defined as $I(u)=I_i(u)$ for all $u\in\protocol$ where $u\rightarrow f$ is an edge in $\protocol$. The value $r$ is the adversary's reply to the query, $t$ records the time of the reply (both $r$ and $t$ are $\bot$ for a yet unanswered query).

The initial state of a global session $(S,\sigma)$ is $((f,\emptyset,1,\bot,\card{S}),\sigma)$ for some $f\in\protocol$; this initializes $\sigma$ and tells the adversary the number of users.

Two types of events modify the protocol state: A user can perform a query, and the adversary can reply to a query. (The adversary's print-action ends the protocol run).
The above-discussed query of a service $f\in\protocol$ by user $I_i$ can be performed in a global state ($s,\sigma)$ where the first element of $\sigma$ is $(i,f)$. This action adds the tuple $(f,I,i,\bot,\bot)$ to the sequence $s$, where $I$ encodes the input values for $f$ (see above), and removes the first element of $\sigma$.
In a state $(s,\sigma)$, if $s$ contains an element $s_k=(f,I,i,\bot,\bot)$ representing an unanswered query by $I_i$, the adversary's reply to this query with bit $r$ exchanges $s_k$ in $s$ with the tuple $(f,I,i,r,t)$, if this is the $t$-th action performed in the protocol run. Additionally, as discussed above, the assignment $I_i$ is then extended with $I_i(x_f)=r$. The remaining schedule is unchanged.

\subsubsection{Adversary knowledge and strategies.}

An adversary strategy is a conditional plan which, for each protocol state $(s,\sigma)$, chooses an adversary action to take. This action may only depend on information available to the adversary, which is defined by $\view{(s,\sigma)}$ obtained from $s$ by erasing each tuple's third component and ignoring $\sigma$. This models that the adversary has complete information except for the index of the user session from which a request originates and the remaining schedule. An adversary strategy for $\protocol$ is a function $\Pi$ whose inputs are elements $\view{(s,\sigma)}$ for a state $(s,\sigma)$ of $\protocol$, and the output is one of the above actions (\emph{wait}, \emph{reply} to element $s_k$ with $r$, \emph{print} assignment $I$), with the following restrictions:

\begin{itemize}
 \item \emph{reply} can only be chosen if $s$ contains an unanswered query $(f,I,i,\bot,\bot)$,
 \item \emph{wait} is only available if the first query in $\sigma$ can be performed, i.e., the first element of $\sigma$ is $(i,f)$ where $I_i(u)$ is defined for all $u$ with $u\rightarrow f$.\footnote{Whether \emph{wait} is available does not follow from $\view{(s,\sigma)}$. We can extend $\view{(s,\sigma)}$ with a flag for the availability of \emph{wait}, for simplification we omit this.}
\end{itemize}

For a global session and an adversary strategy, the resulting $\protocol$-run is defined as the resulting sequence of states arising from performing $\Pi$ stepwise, until the remaining schedule is empty and all queries have been answered, or the adversary's print action has been performed.

\begin{definition}\label{definition:secure protocol}
A protocol $\protocol$ is \emph{insecure} if there is a strategy $\Pi$ such that for every global session $(S,\sigma)$ of $\protocol$, an action $\advcmd{print}(I)$ for some $I \in S$ occurs during the $\protocol$-run for $(S,\sigma)$ with strategy $\Pi$. Otherwise, $\protocol$ is \emph{secure}.
\end{definition}

\section{Insecure Protocols: Tracking Strategies}\label{section:tracking strategies}

We now generalize the tracking strategy discussed in the introduction. That strategy used the value $0$ produced by \ptype\ as a ``session cookie'' to track the input values of a designated user session.
In our definition below, the node $t_{init}$ plays the role of $\ptype$ in the earlier example. At this node, tracking a user session is initialized by first replying with the session cookie: When the first $t_{init}$-query in a global session is performed, the adversary replies with the session cookie's value, $0$. The adversary stores the user's values used as arguments to $t_{init}$, this gives a partial assignment $I_{track}$ which is later extended by additional values: When the user later queries a node $f$ with $t_{init}\rightarrow f$ with the value $0$ for the argument representing $t_{init}$'s value, this query belongs to the tracked user session. The arguments for this $f$-query that contain the values of additional variables are then used to extend the assignment $I_{track}$, and the query is answered with the session cookie to allow tracking this session in the remainder of the protocol. In general, the $f$-query will not have further user values as input, but instead receive return values from different queries. However, since the adversary controls the replies to these queries as well, she can use them to simply ``forward'' the value of a user variable. If the values of all user variables can be forwarded to a node where tracking in the above sense happens, then by repeating these actions, the adversary eventually extends $I_{track}$ to a complete local session, which constitutes a successful active linkability attack. The following definition captures the protocols for which this attack is successful:

\begin{definition}
\label{definition:generalized tracking strategy}
A set $T \subseteq \protocol$ is called a \emph{tracking strategy} if the following conditions hold:
\begin{enumerate}
  \item Synchronization condition: There is a $\leadsto$-smallest element $t_{init}$ in $T$, i.e. for every $u \in T$ we have $t_{init} \leadsto u$.
  \item Cover condition: For every $x\in \vars(\protocol)$ there is a path $p_x$ such that:
    \begin{enumerate}
      \item $x \leadsto t$ via path $p_x$ for some $t \in T$
      \item if $x \neq y$, then $p_x$ and $p_y$ do not share a node from the set $\protocol \setminus T$
    \end{enumerate}
\end{enumerate}
\end{definition}

The set $T$ contains the nodes which perform tracking, i.e., which use the session cookie as reply to track the user session. 
The remaining nodes are used to simply forward one input value of the user to a later part in the protocol.
The cover condition guarantees that all input variables can be forwarded in this fashion.

The synchronization condition requires some node $t_{init}$ that can initialize tracking. The strategy then ensures that the session cookie used by each node in $T$ identifies the same session. This cookie is passed on to tracking nodes appearing later in the protocol run (indirectly via the users, who echo results of a query $f$ for all $g$ with $f\rightarrow g$). Without such a $t_{init}$ the adversary might use the session cookie for different partial user sessions that do not necessarily originate from the same local session. 

\begin{wrapfigure}[10]{l}{3.5cm}
\centering
\begin{tikzpicture}
  \node at (0,0) (g) {$g$};
  \node at (-0.5, -1) (f1) {$f_1$};
  \node at (0.5, -1) (f2) {$f_2$};
  \node at (-1.0,-2) (x) {$x$};
  \node at (-0.5,-2) (y) {$y$};
  \node at (0.5,-2) (u) {$u$};
  \node at (1.0,-2) (v) {$v$};

  \draw[->] (x) edge (f1);
  \draw[->] (y) edge (f1);
  \draw[->] (u) edge (f2);
  \draw[->] (v) edge (f2);
  \draw[->] (f1) edge (g);
  \draw[->] (f2) edge (g);
\end{tikzpicture}
\caption{No synchronization}
\label{figure:simple nosync example}
\end{wrapfigure}
Consider the example in Figure \ref{figure:simple nosync example}. We demonstrate that there is no tracking strategy for this protocol. In a tracking strategy,
both $f_1$ and $f_2$ must be tracking in order to capture all input bits. Both $f_1$ and $f_2$ then each collect a partial local session and use the session cookie $0$ to identify these. The adversary then waits for a query $g(0,0)$---however, such a query may never happen: If the initial queries at $f_1$ and $f_2$ belong to different user sessions, then two $g$-queries $g(1,0)$ and $g(0,1)$ will occur, which belong to different local sessions. We will show in Example~\ref{example:simple secure protocol} that this protocol does not only fail to have a tracking strategy, but is indeed secure: The adversary does not have \emph{any} strategy for a successful linkability attack.

\begin{wrapfigure}[11]{r}{4cm}
\centering
\vspace*{-5mm}
\begin{tikzpicture}
  \node at (0,0) (g) {$g$};
  \node at (-0.5, -1) (g1) {$f_1$};
  \node at (0.5, -1) (g2) {$f_2$};
  \node at (0,-2) (f2) {$t_{init}$};
  \node at (-1.5,-3) (w) {$w$};
  \node at (-0.5,-3) (x) {$x$};
  \node at (0.5,-3) (y) {$y$};
  \node at (1.5,-3) (z) {$z$};

  \draw[->] (w) edge (g1);
  \draw[->] (x) edge (f2);
  \draw[->] (y) edge (f2);
  \draw[->] (z) edge (g2);
  \draw[->] (f2) edge (g1);
  \draw[->] (f2) edge (g2);
  \draw[->] (g1) edge (g);
  \draw[->] (g2) edge (g);
\end{tikzpicture}
\caption{$t_{init}$ is a synchronizer}
\label{figure:sync example}
\end{wrapfigure}
The situation is different 
for the protocol in Figure~\ref{figure:sync example}, where a \emph{synchronizer} $t_{init}$ is placed before $f_1$ and $f_2$: $t_{init}$ is the first node to see any data to be tracked, and generates the  cookie which is then passed on to $f_1$ and $f_2$; 
both $f_1$ and $f_2$ will at some point see a zero as its $t_{init}$-arguments. Then they contribute their knowledge to the local session $I_{track}$, since now they know they all work on the same local session.

Therefore, $T=\set{t_{init}, f_1, f_2}$ is a tracking strategy for the protocol in Figure~\ref{figure:sync example}. The synchronization condition is satisfied, since every node in $T$ can be reached from $t_{init}$ via a directed path, the cover condition is satisfied as well.

In this strategy, $t_{init}$ keeps track of the first encountered values of $x$ and $y$, which are stored in $I_{track}$. The node $f_1$ waits for a zero in its second argument and adds the received value for $w$ to $I_{track}$. Analogously, $f_2$ waits for a zero in its first argument and stores $z$. After both $f_1$ and $f_2$ have received this zero, $I_{track}$ is a complete assignment. The adversary thus can print the linked local session. Note that the root node is not involved in this process: The protocol remains insecure even without $g$.

A protocol for which a tracking strategy exists is always insecure:

\begin{restatable}{theorem}{theoremtrackingstrategieswork}\label{theorem:tracking strategies work}
 Let $\protocol$ be a protocol such that there exists a tracking strategy for $\protocol$. Then $\protocol$ is insecure.
\end{restatable}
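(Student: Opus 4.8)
The plan is to build one adversary strategy $\Pi$ that works against every global session $(S,\sigma)$: it singles out the user whose $t_{init}$-query is performed first --- call this local session $I^*$ --- reconstructs all input values of $I^*$, and finally performs $\advcmd{print}(I^*)$. I first normalize: replacing each cover path $p_x$ by its prefix up to its first node in $T$ (the tracking nodes being query nodes, no $p_x$ is trivial), I may assume every internal node of a $p_x$ lies in $\protocol\setminus T$; these internal nodes are then pairwise disjoint over $x\in\var{\protocol}$, and each $p_x$ ends in a node $t_x\in T$ with $t_{init}\leadsto t_x$.

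The strategy uses the bit $0$ as a \emph{session cookie}. At $t_{init}$ the adversary answers the first query by $0$ and every later one by $1$; this fixes $I^*$ as the unique user that received $0$ there. It then propagates the cookie: whenever it can recognize a query as coming from $I^*$ --- because a designated predecessor of the queried node carries the cookie, i.e.\ that argument equals $0$ --- it again answers $I^*$ by $0$ and all other users by $1$. Simultaneously, along each $p_x$ it \emph{forwards} the value of $x$: every internal node of $p_x$ answers each query by the value arriving along its $p_x$-edge. As soon as a forwarded value of $x$ reaches a node that also carries the cookie, the adversary has seen $I^*(x)$ and records it in a partial assignment $I_{track}$; once $I_{track}$ is total, it performs $\advcmd{print}(I_{track})$. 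Every choice here depends only on the query arguments and the index-free history, so $\Pi$ is a legal strategy. The only conflict is that a node may lie on a cover path and also be needed to carry the cookie onward; I resolve this by letting forwarding take precedence, and by letting a node switch to carrying the cookie the instant the variable it was forwarding has been recorded. To make this precise I fix a topological order of $\protocol$ and define, by simultaneous induction along it, the set $M$ of cookie-carrying nodes together with the set of already-recorded variables.

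Correctness has a routine half and a hard half. Routine half: by induction along the topological order, $I_j(v)=0$ holds iff $j$ is the tracked session, for every $v\in M$ and every user $j$; and the value that a querying user $j$ presents along $p_x$ is exactly $I_j(x)$ (here the cover-path disjointness is what makes the forwarding along each $p_x$ well defined). Hence every bit written into $I_{track}$ is the correct bit $I^*(x)$. Since $\Pi$ answers every query, no run deadlocks, so the whole schedule is executed; in particular every query of $I^*$ is performed, at which point $I_{track}$ is total and equal to $I^*\in S$, so the print wins.

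The hard half is the combinatorial claim that \emph{every} variable actually gets recorded, i.e.\ that each variable-flow meets the cookie-flow. Suppose some $x$ is never recorded. Walking $p_x$ from $x$ to $t_x$, no node on $p_x$ may have a predecessor in $M$, for at such a node the forwarded value of $x$ and the cookie would both be visible and $x$ would be recorded; hence every internal node of $p_x$ merely forwards $I(x)$, so the value of $x$ reaches $t_x$, and $t_x$ itself has no predecessor in $M$. Now the synchronization condition supplies a directed path $t_{init}=u_0\to u_1\to\dots\to u_m=t_x$. From the recursive definition of $M$ one reads off that any node with a predecessor in $M$ lies in $M$; applying this contrapositively along $u_m,u_{m-1},\dots,u_0$ forces $t_{init}\notin M$, contradicting $t_{init}\in M$. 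I expect this claim --- specifically, arranging the mutually recursive definitions of $M$ and of the recorded variables so that exactly this backward walk goes through --- to be the main obstacle; what remains is bookkeeping about scheduling and about the invariant that the value $0$ is never produced for a second session.
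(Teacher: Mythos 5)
Your overall plan (cookie $0$ at $t_{init}$, echoing along cover paths, recording where cookie and forwarded value meet, printing once $I_{track}$ is total) is exactly the paper's strategy, and your truncation of each $p_x$ at its first $T$-node matches what the paper does implicitly. The genuine gap is precisely the part you flag as the ``hard half'': your resolution of the conflict between forwarding and cookie-carrying is not worked out, and the argument you sketch for it does not go through as stated. Your backward walk along the synchronization path $t_{init}=u_0\to\dots\to u_m=t_x$ uses the rule ``any node with a predecessor in $M$ lies in $M$'', but your own precedence rule denies exactly this at nodes that are still forwarding a not-yet-recorded variable; the path from $t_{init}$ to $t_x$ may pass through interior nodes of \emph{other} cover paths $p_y$, and there the contrapositive chain breaks. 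The proposed repair --- letting a node ``switch'' into $M$ once its variable is recorded --- is itself dangerous: a node that spent part of the run echoing $I_j(y)$ may have replied $0$ to a non-tracked session, so after it joins $M$ a $0$ in that argument position no longer certifies the tracked session, which undermines the invariant ``$I_j(v)=0$ iff $j$ is tracked for $v\in M$'' on which both your recording step and your cookie propagation rely.

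The conflict you are fighting is an artifact of not normalizing $T$ itself. The paper's proof first replaces $T$ by its $\leadsto$-closure from $t_{init}$ (this is still a tracking strategy: synchronization is immediate, and disjointness outside a larger $T$ is a weaker requirement), and only then truncates the cover paths at their first $T$-node. After that, cookie carriers are exactly the nodes of $T$ --- each $u\in T\setminus\set{t_{init}}$ has a predecessor in $T$ whose value it echoes --- while forwarders are exactly the path interiors, which are disjoint from $T$; variables are harvested at the entry point of each $p_x$ into $T$. No node ever has to do both jobs, so no precedence rule, no switching, and no separate combinatorial claim that ``variable flow meets cookie flow'' is needed: your ``hard half'' disappears, and the two routine invariants you already state (the cookie identifies the tracked session at every $T$-node; forwarding along path interiors is faithful) are the paper's Facts and suffice. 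As submitted, however, the proposal leaves its central claim unproven and supported by an argument inconsistent with its own definitions, so it is not yet a proof.
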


We conjecture that the converse of Theorem~\ref{theorem:tracking strategies work} holds as well, i.e., that a protocol is insecure if and only if a tracking strategy exists. For a large class of protocols, we have proved this conjecture, see Theorem~\ref{theorem:each function has private variable}. Also, whether a tracking strategy exists for a protocol can be tested efficiently with a standard application of network flow algorithms.

\begin{wrapfigure}[5]{l}{2.2cm}
\vspace*{-8mm}
\begin{tabular}{l|l|ll|l}
      & $w$ & $x$ & $y$ & $z$ \\ \hline
$I_1$ & $1$ & $0$ & $0$ & $1$ \\
$I_2$ & $0$ & $0$ & $0$ & $0$
\end{tabular}
\captionof{figure}{$I_1$, $I_2$}\label{figure:global sessions inconsistency}
\end{wrapfigure}
We want to discuss one final point about the protocol from Figure~\ref{figure:sync example}, namely that the adversary's strategy here must be~\emph{inconsistent} in the following sense: Even when two local sessions agree on the values for $x$ and $y$, the strategy chooses diffeent replies to their $t_{init}$-queries. To see that this is necessary, consider a global session comprising only the two local sessions shown in Figure~\ref{figure:global sessions inconsistency} and a schedule as follows:
\begin{enumerate}
  \item First, $I_1$ queries $t_{init}$, which yields $I_1(t_{init})=0$ because $t_{init}$ is tracking. The adversary stores $I_{track}(x)=0$ and $I_{track}(y)=0$.
    \item Next, $I_2$ queries $t_{init}$, which gives $I_2(t_{init})=0$ because of consistency and $I_1$ and $I_2$ agree on $x$ and $y$.
    \item Lastly, $I_1$ queries $f_1$ with values $I_1(w)=1$ and $I_1(t_{init})=0$, which gives $0$ since $f_1$ is tracking and has not been queried before. The adversary stores $I_{track}(w)=1$. Analogously, $I_2$ queries $f_2$ with values $I_2(z)=0$ and $I_2(z)=0$, which also yields zero because of tracking. The adversary stores $I_{track}(z)=0$.  
\end{enumerate}
Now that both $f_1$ and $f_2$ have done their tracking, they combine their knowledge and print the local session. We see that consistent behavior in this case yields the output $\set{w=1, x=y=z=0}$, which is wrong.

\section{Secure Protocols and Security Proofs}\label{section:security}

We now present criteria implying security of protocols. We start with \emph{flat} protocols in Section~\ref{sect:two cases flat secure protocols}, for which we give a complete example security proof, and state two general security results. In Section~\ref{sect:deep:embedding}, we provide an embedding technique that allows to prove security of more complex protocols. We apply this technique in Section~\ref{sect:embedding application} to lift our results for flat protocols to protocols with arbitrary depth.

\subsection{Secure Flat Protocols: An Example and Two Results}\label{sect:two cases flat secure protocols}

A protocol is \emph{flat} if it has a root and its depth (i.e., length of longest directed path) is $2$. See Figure~\ref{figure:protocol private variables} for an example. A flat protocol can be written\footnote{One can without loss of generality assume that there is no variable $x$ and an edge $x\rightarrow g$ for the output node $g$ of a flat protocol.} as $\protocol=g(f_1(\overrightarrow{x_1}),\dots,f_n(\overrightarrow{x_n}))$, where $\overrightarrow{x_i}$ is a sequence of variables ($\overrightarrow{x_i}$ and $\overrightarrow{x_j}$ are not necessarily disjoint). For example, the protocol from Figure~\ref{figure:protocol private variables} can be written as $g(f_1(a_1,a_2,b_1),f_2(b_1,b_2,c_1),f_3(c_1,c_2,a_1))$.

We now present an example and two classes of secure flat protocols.

\begin{example}\label{example:simple secure protocol}
 The protocol $\protocol=g(f_1(x,y),f_2(u,v))$ (see Fig.~\ref{figure:simple nosync example}) is secure.
\end{example}

\begin{wrapfigure}[9]{l}{6.5cm}
 \begin{tabular}{c|cc|cc|c}
        & $x$ & $y$ & $u$ & $v$ & $g$-query \\ \hline
  $I_1$ & $0$ & $0$ & $0$ & $0$ & $(0,\gamma)$ \\
  $I_2$ & $0$ & $1$ & $0$ & $1$ & $(0,\delta)$ \\
  $I_3$ & $1$ & $0$ & $1$ & $0$ & $(\alpha,0)$ \\
  $I_4$ & $1$ & $1$ & $1$ & $1$ & $(\beta,0)$
 \end{tabular}\ \ \ 
 \begin{tabular}{c|cc|cc|c}
        & $x$ & $y$ & $u$ & $v$ & $g$-query \\ \hline
  $I_1$ & $0$ & $0$ & $0$ & $1$ & $(0,\delta)$ \\
  $I_2$ & $0$ & $1$ & $0$ & $0$ & $(0,\gamma)$ \\
  $I_3$ & $1$ & $0$ & $1$ & $1$ & $(\alpha,0)$ \\
  $I_4$ & $1$ & $1$ & $1$ & $0$ & $(\beta,0)$
 \end{tabular}
 \captionof{figure}{Two Global Sessions for $\protocol$}\label{figure:global sessions}
\end{wrapfigure}
As discussed above, there is no tracking strategy in the sense of Definition~\ref{definition:generalized tracking strategy} for $\protocol$. We now show that $\protocol$ is indeed secure, i.e., there is no strategy at all for $\protocol$.

\begin{proof}
We only consider global sessions consisting of $4$ different local sessions $I_1,\dots,I_4$, where each Boolean combination appears as input to $f_1$ and $f_2$: For each $\alpha,\beta\in\set{0,1}$, there is some $I_i$ with $I_i(x)=\alpha$ and $I_i(y)=\beta$, and a session $I_j$ with $I_j(u)=\alpha$ and $I_j(v)=\beta$. We only consider schedules $\sigma$ that first perform all $f_1$-queries, then all $f_2$-queries followed by all $g$-queries, and perform the queries for each service in lexicographical order. It suffices to show that the adversary does not have a strategy for this case, clearly then a general strategy does not exist either. In sessions as above, the adversary always receives the same set of queries for $f_1$ and $f_2$; therefore these queries contain no information for the adversary. Since no $f_i$ is queried more than once with the same arguments, we only need to consider consistent strategies. As a result, each adversary strategy $\Pi$ consists of functions $f_1,f_2\colon\set{0,1}^2\rightarrow\set{0,1}$ and a rule for the print action. We construct, depending on $f_1$ and $f_2$, a global session where $\Pi$ fails. Hence let $f_1$ and $f_2$ be functions as above.

Since $f_1$ and $f_2$ cannot be injective, we assume without loss of generality that $f_1(0,0)=f_1(0,1)=0$ and $f_2(1,0)=f_2(1,1)=0$. We further define $\alpha=f_1(1,0)$, $\beta=f_1(1,1)$, $\gamma=f_2(0,0)$, and $\delta=f_2(0,1)$. 

Now consider the global sessions in Figure~\ref{figure:global sessions}, with a schedule as above. For local sessions appearing, the tables list the values of the protocol variables as well as the parameters for the resulting query at the node $g$, which for the local session $I_i$ consists of the pair $(f_1(I_i(x),I_i(y)),f_2(I_i(u),I_i(v)))$. These $g$-queries are observed by the adversary. It turns out that for both global sessions, the adversary makes the exact same observations: At both $f_1$ and $f_2$, each Boolean pair is queried exactly once; the queries at $g$ are $(0,\gamma)$, $(0,\sigma)$, $(\alpha,0)$, and $(\beta,0)$---all performed in lexicographical order. Therefore, the adversary cannot distinguish these sessions and his strategy prints out the same assignment in both global sessions. Since the sessions have disjoint sets of local sessions, the adversary fails in at least one of them;  the protocol is indeed secure.
\end{proof}

The ideas from the above proof can be generalized to give the following theorem. Its proof is technically more involved in part due to the fact that here, we have to consider inconsistent adversary strategies. 
Roughly, the theorem states that if a flat protocol can be partitioned into two variable-disjoint components, neither of which grant the adversary enough ``channels'' to forward all user inputs to the output node, then it is secure.

\begin{restatable}{theorem}{theoremdisjointvariables}\label{theorem:disjoint variables}
 Let $\protocol$ be a protocol of the form $\protocol=g(f_1(\overrightarrow{x_1}),\dots,f_n(\overrightarrow{x_n}))$, such that $\set{1,\dots,n}=I_1\cup I_2$ with 
 \begin{itemize}
  \item if $i\in I_1$ and $j\in I_2$, then $\overrightarrow{x_i}\cap\overrightarrow{x_j}=\emptyset$,
  \item $\card{\cup_{i\in I_j}\overrightarrow{x_i}}>\card{I_j}\ge1$ for $j\in\set{1,2}$.
 \end{itemize}
 Then $\protocol$ is secure.
\end{restatable}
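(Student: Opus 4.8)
The plan is to generalize the indistinguishability argument from Example~\ref{example:simple secure protocol}. Fix an alleged winning strategy $\Pi$. I will build, for every such $\Pi$, two global sessions $(S,\sigma)$ and $(S',\sigma)$ that induce the same adversary view but have $S\cap S'=\emptyset$; since $\Pi$'s print action depends only on the view, it outputs the same $I$ in both runs, and $I$ cannot lie in both disjoint multisets, so $\Pi$ fails somewhere. The work is in choosing the local sessions so that the $g$-queries coincide as a multiset while the underlying sessions differ. As in the example I restrict attention to schedules that group queries service-by-service (all $f_1$-queries, then all $f_2$-queries, \dots, then all $g$-queries) in a fixed canonical order, so that the only potentially informative observations are the multiset of argument tuples at each $f_i$ and the multiset of $g$-queries; it then suffices to make these multisets agree.

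The key construction. Write $X_j=\cup_{i\in I_j}\overrightarrow{x_i}$ for $j\in\{1,2\}$; by hypothesis the $X_j$ are disjoint and $|X_j|>|I_j|$. Take $S$ to consist of all local sessions over $X_1\cup X_2$ of a suitably symmetric form — concretely, I would let the restrictions to $X_1$ range over enough assignments that every $f_i$ with $i\in I_1$ sees every possible argument tuple the same number of times (and symmetrically on $X_2$), e.g.\ take $S$ to be all $2^{|X_1|+|X_2|}$ assignments, or a balanced fractional-design-style subset. Then a query at any $f_i$, $i\in I_1$, reveals only which tuple in $\{0,1\}^{|\overrightarrow{x_i}|}$ was asked, and by the counting inequality $|X_1|>|I_1|$ the map $a\mapsto(f_i(a\!\restriction_{\overrightarrow{x_i}}))_{i\in I_1}$ from assignments of $X_1$ to the tuple of replies fed into $g$ cannot be injective: two distinct $X_1$-assignments $a\neq a'$ produce the same vector of inputs to $g$ coming from the $I_1$-side (again reducing to consistent strategies, since no $f_i$ is asked the same tuple twice in these sessions). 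The same holds on the $X_2$-side. Now define $S'$ by swapping, within the pair of local sessions realizing the collision, their $X_2$-parts (or symmetrically their $X_1$-parts): this changes the actual local sessions, so $S'\neq S$ as multisets and in fact I can arrange $S\cap S'=\emptyset$ by doing the swap on a fixed-point-free pairing, while leaving unchanged (i)~the multiset of $f_i$-arguments for every $i$ (the $X_1$-parts are permuted among themselves, likewise $X_2$), and (ii)~the multiset of $g$-queries, because the swapped sessions feed $g$ the same $I_1$-vector (by the chosen collision) and their $X_2$-parts are merely exchanged. Hence $\view{}$ is identical along the two runs.

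Main obstacle. The delicate point, exactly the one the authors flag, is handling \emph{inconsistent} strategies: a priori $\Pi$ may answer the same $f_i$-query differently depending on \emph{when} it is asked, so I cannot immediately replace $\Pi$ by fixed Boolean functions $f_i$. In the example this was finessed by making every $f_i$-tuple occur exactly once; here I must do the same — by taking $S$ large and symmetric enough that, under the canonical schedule, each service $f_i$ is queried with pairwise distinct arguments (achievable since $|\overrightarrow{x_i}|\ge\log|S_{\text{needed}}|$ is \emph{not} what we need; rather we need only that the chosen $S$ does not repeat an $f_i$-argument, which the $|X_j|>|I_j|$ slack and a careful choice of $S$ permit) — so that consistency is forced and $\Pi$ genuinely reduces to a tuple of functions plus a print rule. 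Making this reduction precise while simultaneously guaranteeing the non-injectivity collisions on \emph{both} sides and the disjointness $S\cap S'=\emptyset$ is the technical heart; everything after that is the bookkeeping of the previous paragraph. A secondary nuisance is verifying that reordering to the canonical schedule loses no adversary power, which follows because the adversary may always \emph{wait} and is never forced to use timing information, so any successful strategy restricted to canonical schedules would already have to win.
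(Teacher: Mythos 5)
Your overall plan---indistinguishable global sessions with a canonical service-by-service, lexicographically ordered schedule, so that a view-determined print action must fail somewhere---is exactly the paper's strategy, and your reduction to such schedules is sound. But the two points you yourself flag as the heart of the matter are precisely where the proposal breaks down, and the paper resolves them quite differently.

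First, forcing consistency by ``no $f_i$-argument is ever repeated'' is not achievable under the theorem's hypotheses. Your pigeonhole argument for a collision on the $I_1$-side needs strictly more than $2^{\card{I_1}}$ distinct $X_1$-parts present in $S$ (or even all $2^{\card{X_1}}$ of them, if you want the collision guaranteed by counting over total functions), while non-repetition at the $I_2$-side services caps $\card{S}$ at $2^{\card{\overrightarrow{x_i}}}\le 2^{\card{X_2}}$ for $i\in I_2$. Nothing in the hypotheses relates $\card{I_1}$ to $\card{X_2}$: e.g.\ with $\card{I_1}=10$, $\card{X_1}=11$, $\card{I_2}=1$, $\card{X_2}=2$ the two requirements are flatly incompatible, and taking $S$ to be all assignments of $X_1\cup X_2$ repeats every $f_i$-argument massively, so consistency is not forced there either. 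The paper does not sidestep inconsistency; it confronts it: after a WLOG step (adding edges so that every $f_i$ with $i\in I_j$ sees all of $\var{I_j}$, which preserves insecurity), it uses $2^{\card{\var{I_1}}}$ local sessions in which each $I_1$-assignment occurs once but each $I_2$-assignment occurs in $2^d$ distinct copies, records the adversary's reply \emph{per copy} (explicitly allowing different replies to identical tuples), and proves a counting fact that more than half the elements on each side are ``ambiguous'' (share their full reply vector with an element representing a different assignment). Your collision argument has to be replaced by this multiset/ambiguity machinery; it does not follow from the clean functional pigeonhole you propose.

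Second, your conclusion requires $S\cap S'=\emptyset$, and the ``fixed-point-free pairing'' that is supposed to deliver it is only asserted. A single swap inside one colliding pair leaves almost all local sessions common to $S$ and $S'$, so the printed session may survive the swap; to kill every session simultaneously you would need a perfect matching of valid equivalence-preserving swaps covering all rows, which the ``more than half ambiguous'' counting does not obviously provide. The paper avoids this stronger requirement entirely: it fixes one session $(S,\sigma)$ and constructs, \emph{for each} $s\in S$, a separate indistinguishable session $(S_s,\sigma_s)$ with $s\notin S_s$, using a careful table arrangement (ambiguous elements grouped by equivalence class, adjacent rows carrying different $I_2$-assignments) so that a single local swap already removes the one session that matters. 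Reworking your write-up along these two lines---per-copy replies plus the ambiguity count instead of forced consistency, and per-printed-session counterexample sessions instead of global disjointness---is essentially what the paper's proof does.
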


We now consider flat protocols $\protocol=g(f_1(\overrightarrow{x_1}),\dots,f_n(\overrightarrow{x_n}))$ where each $f_i$ has a private variable. This is a variable $x\in\overrightarrow{x_i}\setminus\cup_{j\neq i}\overrightarrow{x_j}$, i.e., a variable that is an input to $f_i$, but not an input to any of the other $f_j$. For these protocols we show that the converse of Theorem~\ref{theorem:tracking strategies work} is true as well: If there is no tracking strategy for $\protocol$, then the protocol is secure. 

\begin{restatable}{theorem}{theoremeachfunctionhasprivatevariable}\label{theorem:each function has private variable}
 Let $\protocol$ be a flat protocol where each $f_i$ has a private variable. Then $\protocol$ is insecure if and only if a tracking strategy for $\protocol$ exists.
\end{restatable}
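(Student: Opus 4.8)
Both directions have to be shown, but one is immediate: ``a tracking strategy exists $\Rightarrow\protocol$ insecure'' is exactly Theorem~\ref{theorem:tracking strategies work}. So the work is the converse, which I would prove in contrapositive form: if $\protocol$ has \emph{no} tracking strategy, then $\protocol$ is secure. The plan has two stages: first, translate ``no tracking strategy'' into a purely combinatorial statement about the incidence of variables and functions, using flatness and the private-variable hypothesis; second, use that statement to fool an arbitrary adversary, generalising the two-session argument from the proof of Example~\ref{example:simple secure protocol}.

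For the first stage, write $\protocol=g(f_1(\overrightarrow{x_1}),\dots,f_n(\overrightarrow{x_n}))$ and call a variable \emph{shared} if it occurs in at least two of the lists $\overrightarrow{x_i}$. Since a tracking node must be a node the adversary answers, it is a non-variable, and in a flat protocol the only possible $\leadsto$-minimal element $t_{init}$ of a tracking strategy $T$ is the root $g$ (then $T=\set{g}$) or some $f_i$ (then $T\subseteq\set{f_i,g}$, since $f_i\leadsto f_j$ fails for $i\neq j$). Checking the cover condition in these two cases is a routine Hall-type argument: a $T=\set{g}$ tracking strategy exists iff the variables of $\protocol$ can be matched injectively into $\set{f_1,\dots,f_n}$ respecting ``$x\in\overrightarrow{x_i}$'', and a tracking strategy with $t_{init}=f_i$ exists iff the variables outside $\overrightarrow{x_i}$ can be matched injectively into the remaining functions. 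The private variables $p_1,\dots,p_n$ already exhaust all $n$ functions in any matching, which makes both conditions transparent: the second fails for a given $i$ as soon as some shared variable lies outside $\overrightarrow{x_i}$, and the first fails as soon as there is any Hall deficiency. Hence ``$\protocol$ has no tracking strategy'' is equivalent to: $n\ge2$, no single $\overrightarrow{x_i}$ contains all shared variables, and the variable--function incidence admits a Hall-deficient set of variables; by Hall's theorem I would fix such a set $Y$ with $F_0=\set{i : \overrightarrow{x_i}\cap Y\neq\emptyset}$ satisfying $\card{F_0}<\card{Y}$.

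For the second stage, as in Example~\ref{example:simple secure protocol} restrict attention to a canonical family of global sessions, all using the same layered schedule (all $f_1$-queries in lexicographic order, then all $f_2$-queries, \dots, then all $g$-queries) and chosen so that each $f_i$ is queried with pairwise distinct arguments; then the $f_i$-level observations are the same for every session in the family, the adversary may be taken consistent, and its strategy reduces to Boolean functions $\widehat{f_1},\dots,\widehat{f_n}$ plus a print rule reading only the multiset of $g$-queries $(\widehat{f_1}(\restr{I}{\overrightarrow{x_1}}),\dots,\widehat{f_n}(\restr{I}{\overrightarrow{x_n}}))$. It then suffices to produce, for arbitrary $\widehat{f_i}$, two \emph{disjoint} families $S\neq S'$ in this class with the same $\overrightarrow{x_i}$-projection set for every $i$ and the same $g$-query multiset: then the adversary's view coincides on $S$ and $S'$, it prints the same assignment, and fails in one of them. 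Here I would generalise the swap used in Example~\ref{example:simple secure protocol}: since $\card{F_0}<\card{Y}$, the joint map of the functions in $F_0$ cannot separate all $Y$-patterns, so the goal is to build a non-trivial transformation $\psi$ that modifies only the values of the variables in $Y$, maps the family to a disjoint one, preserves the $\overrightarrow{x_i}$-projection set for every $i$ (automatically for $i\notin F_0$ since $\overrightarrow{x_i}\cap Y=\emptyset$, and for $i\in F_0$ by taking the family symmetric in the $Y$-pattern), and preserves the $g$-query multiset (because the $F_0$-functions take equal values on the $\psi$-paired sessions); then $S'=\psi(S)$ finishes the proof.

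\textbf{Main obstacle.} The crux is this last construction in full generality. Theorem~\ref{theorem:disjoint variables} already yields security when $\protocol$ splits into two variable-disjoint halves, but ``no tracking strategy'' permits much more entangled incidence structures (e.g.\ three functions sharing variables in a cyclic pattern), for which no such split exists, so $\psi$ must be built directly from the Hall deficiency; moreover $\psi$ has to be made fixed-point-free \emph{and} compatible with ``each $f_i$ is queried with distinct arguments'', i.e.\ one must reconcile having enough collisions to permute with having enough room to keep the $f_i$-queries distinct. A secondary difficulty, already present in the proof of Theorem~\ref{theorem:disjoint variables}, is that when distinctness cannot be arranged the adversary need not behave consistently, and such strategies have to be handled by a more careful scheduling argument in place of the clean reduction above.
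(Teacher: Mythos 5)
Your overall architecture matches the paper's: the direction ``tracking strategy $\Rightarrow$ insecure'' is indeed just Theorem~\ref{theorem:tracking strategies work}, and the converse is proved there, as you propose, by restricting to canonical layered schedules, reducing the adversary to Boolean functions $\widehat f_1,\dots,\widehat f_n$ plus a print rule, and exhibiting two disjoint global sessions with identical views. But the step you yourself label the \emph{main obstacle} is precisely the mathematical content of the theorem, and your proposal does not supply it: you need a fixed-point-free pairing $\psi$ that (i) changes only variables of the Hall-deficient set $Y$, (ii) preserves every $\overrightarrow{x_i}$-projection set and the distinctness of all $f_i$-queries, and (iii) makes the unknown functions $\widehat f_i$, $i\in F_0$, agree on paired sessions. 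The inequality $\card{F_0}<\card{Y}$ only guarantees that the joint $F_0$-map collides \emph{somewhere} on the $Y$-patterns; it does not place those collisions compatibly with (i) and (ii), and no argument is given that all three requirements can be met simultaneously. So the converse direction remains unproved. There is also a concrete flaw in your stage-1 equivalence: if some $f_j$ has two or more private variables, a tracking strategy can fail to exist even though some $\overrightarrow{x_i}$ contains all shared variables (e.g.\ $g(f_1(p_1,p_1'),f_2(p_2,s),f_3(p_3,s))$), so your criterion misclassifies such protocols and your plan then provides no security argument for them; the paper dispatches exactly these cases separately via Lemma~\ref{lemma:simple cases} and Theorem~\ref{theorem:disjoint variables}.

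It is worth seeing how the paper gets around the obstacle, because the move is quite different from attacking the raw incidence structure: it \emph{normalizes} the protocol first. Since adding edges preserves insecurity and (by Lemma~\ref{lemma:tracking strategies for private variables}) cannot create a tracking strategy as long as no $f_i$ sees all shared variables, and since variables seen by every $f_i$ can be deleted, one may assume each $f_i$ sees exactly its single private variable $x_i$ together with all shared variables except one, with $\card{V}\ge3$ (degenerate cases again go through Theorem~\ref{theorem:disjoint variables}). In this uniform normal form the indistinguishability argument is run on complete sessions containing all $2^{\card{V}}$ assignments to the shared variables, so every $f_i$-query occurs exactly once and consistency comes for free; the combinatorial heart is then that any printed session must be based on an ``all-flipping'' shared-variable assignment (one where each $\widehat f_i$ depends on $x_i$), that exactly two such assignments exist, one of each parity, and that flipping the ``polarities'' of the private variables at these two assignments produces the disjoint, view-identical global session. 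That is Theorem~\ref{theorem:classification for private variable normal form}, and it---together with the normalization---is exactly the construction your proposal leaves open.
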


The proof of Theorem~\ref{theorem:each function has private variable} converts $\protocol$ to a normal form, and applies a rather involved combinatorial construction to construct a global session $(S,\sigma)$ such that for every local session $I\in S$, there is a global session $(S_I,\sigma_I)$ which is indistinguishable from $(S,\sigma)$ for the adversary but does not contain $I$. Due to this indistinguishability, each adversary strategy has to print the same local session on $(S,\sigma)$ and each $(S_I,\sigma_I)$, and hence fails on $(S,\sigma)$ or on some $(S_I,\sigma_I)$. The simplest non-trivial example for which Theorem~\ref{theorem:each function has private variable} implies security is the protocol shown in Figure~\ref{figure:protocol private variables}, which itself already requires a surprisingly complex security proof.

\subsection{A Notion of Composition}\label{sect:deep:embedding}

We introduce an \emph{embedding} technique to compare security of protocols: If a copy of a secure protocol $\protocol'$ appears as a (loosely speaking) ``component'' of a protocol $\protocol$, then $\protocol$ is secure as well---provided that the ``copy'' of $\protocol'$ plays a meaningful part in $\protocol$: The copy must have control over its input data, and be applied to ``relevant'' input values.

This notion is interesting for several reasons. First, it is a powerful tool to prove security of protocols into which a known secure protocol can be embedded (see Section~\ref{sect:embedding application} for such applications). Second, embedding is closely related to \emph{composition}, a technique establishing implications between the security of a protocol and its components. Various notions of composition have been successfully applied in the study of cryptographic protocols~\cite{Canetti-UCMODEL-FOCS-2001,BackesPfitzmannWaidner-CryptographicLibrary-CCS-2003,CiobacaCortiar-PROTOCOL-COMPOSITION-ARBITRARY-PRIMITIVES-CSF-2010,ChevalierDelauneKremerRyan-COMPOSITION-PASSWORD-BASED-PROTOCOLS-FMSD-2013}. Usually, composition uses the output of one protocol as input for others. In a sense, embedding is more general, as it also captures the case where (copies of) the nodes of the embedded protocol $\protocol'$ appear spread out over different parts of the ``large'' protocol $\protocol$.

Our embedding notion only requires \emph{one} component of the composed protocol to be secure, whereas composition techniques usually compose a number of secure protocols to obtain another, equally secure, protocol. The reason for this difference lies in our notion of security: In a linkability attack, the adversary needs to reconstruct a \emph{complete} user session to be successful. Clearly, if a subset of the user's variables is protected by a secure sub-protocol, then even complete knowledge of all other values does not help the adversary. In contrast, definitions of cryptographic security in the literature usually require that the adversary cannot attack any part of the protocol, in which case a single insecure sub-protocol usually renders the entire protocol insecure. See Section~\ref{sect:generalized security} for a relaxation of the requirement that the adversary must construct a complete user session.

We now state our embedding definition and then discuss it in detail. For a function $\varphi\colon\protocol'\to\protocol$, with $\varphi(\protocol')$ we denote the set $\set{\varphi(u)\ \vert\ u\in\protocol'}$.

\begin{definition}
 Let $\protocol$ and $\protocol'$ be protocols. A function $\varphi\colon\protocol'\rightarrow\protocol$ is an \emph{embedding} of $\protocol'$ into $\protocol$, if the following holds:
 \begin{itemize}
  \item If $v\in\protocol'$ is a node of $\protocol'$ with a successor in $\protocol'$ and $\varphi(v)\rightsquigarrow u$ is a path in $\protocol$, then there is some $w\in\protocol'$ such that $u\rightsquigarrow\varphi(w)$ is a path in $\protocol$.
  \item If $u\neq v\in\protocol'$ with a path $\varphi(u)\rightsquigarrow\varphi(v)$ in $\protocol$ whose intermediate nodes are not elements of $\varphi(\protocol')$, then there is an edge $u\rightarrow v$ in $\protocol'$.
  \item there is an injective function $\chi\colon\vars(\protocol')\rightarrow\vars(\protocol)$ such that for each $x\in\vars(\protocol')$, there is a path $\chi(x)\rightsquigarrow\varphi(x)$ in $\protocol$, and if $x,y\in\var{\protocol'}$ with $\chi(x)\rightsquigarrow w$, $\chi(y)\rightsquigarrow w$ are paths in $\protocol$, then there is a node $g\in\protocol'\setminus\var{\protocol'}$ such that both of these paths visit $\varphi(g)$.
 \end{itemize}
\end{definition}

The definition requires $\varphi(\protocol')$, which is the copy of $\protocol'$ appearing in $\protocol$, to have sufficient control a set of revelant user input values:
\begin{enumerate}
 \item Part one of the definition states that when information ultimately leaves $\varphi(\protocol')$, it does so through the output interface of $\protocol'$. Hence $\varphi(\protocol')$ retains some control over its data, even though data may be processed by nodes from $\protocol\setminus\varphi(\protocol')$ in the meantime. This prevents ``internal'' data of $\varphi(\protocol')$ from being copied to $\protocol\setminus\varphi(\protocol')$ without any control by $\varphi(\protocol')$.
 \item The second point requires that in $\varphi(\protocol')$, there are no connections that do not appear in the original protocol $\protocol'$. This is needed since e.g., the transitive closure of almost every protocol is insecure. 
 \item The third part demands that $\varphi(\protocol')$ is used on ``exclusive'' data: Each variable $x$ of  $\protocol'$ must correspond to some $\chi(x)\in\var{\protocol}$, such that $\chi(x)$ and $\chi(y)$ can be linked only through their interaction within $\varphi(\protocol')$. Otherwise, security of $\protocol'$ cannot prevent linking $\chi(x)$ and $\chi(y)$.
\end{enumerate}

Embedding preserves security: A protocol with a secure component is secure itself. Hence, to prove a protocol secure, it suffices to find a secure component which then forms an obstruction to any adversary strategy.

\begin{restatable}{theorem}{theoremembedding}\label{theorem:embedding}
  Let $\protocol'$ be a secure protocol with a root, and let $\varphi$ be an embedding of $\protocol'$ into a protocol $\protocol$. Then $\protocol$ is secure as well.
\end{restatable}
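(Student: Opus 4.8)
The plan is to prove the contrapositive: if $\protocol$ is insecure, then $\protocol'$ is insecure, which contradicts the hypothesis. So fix an adversary strategy $\Pi$ that wins every $\protocol$-run, and let $(S',\sigma')$ be an arbitrary global session of $\protocol'$. I will describe a strategy $\Pi'$ that wins on $(S',\sigma')$ by running $\Pi$ as a black box against a \emph{simulated} $\protocol$-run in which the copy $\varphi(\protocol')$ of $\protocol'$ plays exactly the part of $(S',\sigma')$. Concretely, each local session $I'_i\in S'$ is extended to a local session $\hat I_i$ of $\protocol$ by setting $\hat I_i(\chi(x))=I'_i(x)$ for every $x\in\vars(\protocol')$, assigning a fixed dummy value to every other variable of $\protocol$, and letting $\hat I_i$'s value at each non-variable node be whatever $\Pi$ replies in the simulation. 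Every $u$-query of $I'_i$ observed in the real run is relayed, after the necessary fabricated preparatory queries, as the $\varphi(u)$-query of $\hat I_i$; every query to a node of $\protocol\setminus\varphi(\protocol')$ is fabricated by $\Pi'$ itself; whenever $\Pi$ replies with a bit $r$ to a relayed $\varphi(u)$-query, $\Pi'$ answers the corresponding real $u$-query with $r$; and when $\Pi$ performs $\advcmd{print}(I)$, $\Pi'$ performs $\advcmd{print}(x\mapsto I(\chi(x)))$.

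The argument then has three parts. First, one shows that the simulated object is a legal global session $(S,\sigma)$ of $\protocol$ and a faithful extension of $(S',\sigma')$: here the second embedding condition guarantees that tracing the $\protocol$-predecessors of $\varphi(u)$ backwards, until they leave $\varphi(\protocol')$, yields precisely the images $\varphi(v)$ of the $\protocol'$-predecessors $v$ of $u$ together with some variables of $\protocol$, and the third condition shows that none of those variables can be of the form $\chi(x)$ (any path from $\chi(x)$ into $\varphi(u)$ must re-enter $\varphi(\protocol')$ at some $\varphi(g)$). Consequently the arguments of the relayed $\varphi(u)$-query are fully determined by the arguments of the real $u$-query, the fixed dummies, and replies $\Pi$ has already produced. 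The first embedding condition ensures that data leaving $\varphi(\protocol')$ does so only through its output node, so the copy retains control of its data and the dependency structure of the simulated run is genuinely that of a global session built on top of $(S',\sigma')$. Second, one checks that $\Pi'$ is a \emph{legal} strategy, i.e.\ its move depends only on $\view{(s',\sigma')}$: this follows from the first part, since the index-erased state of the simulated run is computable from the index-erased state of the real run --- $\Pi'$ never needs to know which session a query belongs to --- and $\Pi$ itself consults only the index-erased simulated view. Third, correctness: since $\Pi$ wins on $(S,\sigma)$, the printed $I$ agrees on $\var\protocol$ with some $\hat I_j$, and because $\hat I_j(\chi(x))=I'_j(x)$ with $\chi$ injective, the assignment $x\mapsto I(\chi(x))$ equals $I'_j\in S'$. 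Thus $\Pi'$ wins on $(S',\sigma')$, and as this session was arbitrary, $\protocol'$ is insecure.

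I expect the main obstacle to be running the simulation \emph{online}, without deadlock and with a termination guarantee, since $\Pi'$ must interleave its own moves in the real run with the moves it feeds to and reads from $\Pi$ while knowing neither the future of $\sigma'$ nor when $\Pi$ will choose to \emph{wait}. The way around this is that whenever $\Pi'$ is forced to act in the real run it consults $\Pi$ on the current simulated state; if $\Pi$ waits, $\Pi'$ performs the next schedulable query of the simulated run --- a relay of the next revealed real query if one is available, and a fabricated node otherwise --- which strictly shrinks the outstanding simulated schedule; and when no simulated query is schedulable, \emph{wait} is unavailable to $\Pi$ as well, so $\Pi$ must reply or print, which unblocks the real run. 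Because $\Pi$ wins on the simulated global session that run is finite, so this loop terminates with a $\advcmd{print}$ action, which is precisely what the argument needs. The hypothesis that $\protocol'$ has a root is used both here and in the first step: it makes ``information leaves $\varphi(\protocol')$ only through its output'' refer to a single node, so that the control the first embedding condition retains over $\varphi(\protocol')$'s data suffices to keep the structure of the simulated run pinned down.
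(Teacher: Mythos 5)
Your overall route --- assume $\protocol$ insecure, fix a winning strategy $\Pi$, and turn it into a winning strategy for $\protocol'$ by one monolithic online simulation --- is a legitimate alternative to the paper's proof, which instead factors the contrapositive into a chain of elementary insecurity-preserving rewritings (bypassing intermediate nodes, cloning to make $\varphi$ injective, deleting irrelevant variables, contracting the chains $\chi(x)\rightsquigarrow\varphi(x)$ by unsplitting, and finally restricting to a closed sub-protocol, each step with its own small simulation). However, as written your argument has a genuine gap at its central step. You claim that the arguments of a relayed $\varphi(u)$-query are ``fully determined by the arguments of the real $u$-query, the fixed dummies, and replies $\Pi$ has already produced,'' justified by reading the third embedding condition as saying that no variable $\chi(x)$ can occur among the $\protocol$-predecessors feeding $\varphi(u)$ from outside $\varphi(\protocol')$. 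That is not what the condition says: it only constrains \emph{pairs} of variables $x,y$ whose paths $\chi(x)\rightsquigarrow w$, $\chi(y)\rightsquigarrow w$ meet, and even then it only asks that both paths visit some $\varphi(g)$. Nothing in the definition prevents $\chi(x)$ from being a direct predecessor of some $\varphi(u)$ with $x\not\rightarrow u$ in $\protocol'$, nor from feeding nodes of $\protocol\setminus\varphi(\protocol')$ whose queries your simulator must fabricate. At every such point the simulator has to inject the raw value $I'_i(x)$ into a fabricated query \emph{of the correct simulated session}, but it learns $I'_i(x)$ only from real queries at $\protocol'$-successors of $x$, and it cannot link those real queries to one another --- session attribution is exactly the unlinkability problem the theorem is about, and your ``fully determined'' claim silently assumes it away. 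The paper's transformation steps exist precisely to eliminate these configurations (and to pin down how $\chi(x)$ reaches $\varphi(x)$) before any local strategy translation is attempted; a one-shot simulation must instead prove an invariant of the form ``whenever a $\chi$-variable value is needed in the simulation, it is needed only as a just-in-time prerequisite of a relayed query that reveals it, and two $\chi$-variables are never combined except through data already carried by a single real query,'' and that invariant is where the embedding conditions (under their intended reading) must actually be used.

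A second concrete omission: you treat $\varphi$ as if it were injective. If $\varphi(u)=\varphi(v)=w$ for $u\neq v$, each simulated session may query $w$ only once, while the real session performs separate $u$- and $v$-queries, in general with different argument lists and at different times; your relay map is then not well defined, and you must argue that a single reply of $\Pi$ at $w$ can consistently serve both real queries and that the simulated run stays legal. This is exactly what the paper's cloning lemma handles, and its $\protocol_{\mathtext{sec}}$/$\protocol_{\mathtext{insec}}$ example shows the cloning correspondence preserves only insecurity, not security, so the bookkeeping in this case is not automatic. Your remarks on the online scheduling loop and on the use of the root are fine in spirit (the paper uses the root for the closed-sub-protocol step), but without repairing the two points above the proof does not go through.
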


The proof of Theorem~\ref{theorem:embedding} shows that $\protocol'$ can be obtained from $\protocol$ by a sequence of insecurity-preserving transformations. Hence, if $\protocol$ is insecure, then $\protocol'$ is insecure as well. Technically, the proof uses fairly natural translations of strategies between different intermediate protocols, however we have to be careful to account for inconsistent strategies.

As an example, we discuss one of these transformations, which we call \emph{cloning}. This discussion also allows to compare security of two different protocol approaches.

Cloning is used in the proof of Theorem~\ref{theorem:embedding} in the case that $\varphi(u)=\varphi(v)=w$ for some nodes $u$ and $v$ of $\protocol'$ with $u\neq v$. Since the goal of the transformation is to obtain \emph{exactly} the protocol $\protocol'$, the nodes $\varphi(u)$ and $\varphi(v)$ have to be ``separated'' in $\protocol$. To this end, cloning introduces a new ``copy'' $w'$ of $w$ with the same predecessors as $w$ into $\protocol$, and replaces some edges $w\rightarrow t$ with edges $w'\rightarrow t$. The result is a protocol where instead of caching the $w$-result and reusing the value as input for later queries, the user performs the query to $w$ again. Clearly, this is not a good idea as the adversary gains additional opportunities to interfere. We give an example showing that cloning can indeed introduce insecurities into a previously secure protocol.

\medskip

\hspace*{1cm}\begin{minipage}{4cm}
\begin{tikzpicture}
 \node at   (0,0) (g) {$g$};
 \node at (-1,-0.2) (dummy) {$f_3$};
 \node at (-1.5  , -1.5)  (f1)    {$f_1$};
 \node at (1.5,    -1.5)  (f2)    {$f_2$};
 
 \node at (-1.8,-2.5)   (x)    {$x$};
 \node at (-1,-2.5)   (y)    {$y$};
 \node at (1.1,-2.5)    (u)    {$u$};
 \node at (1.9,-2.5)    (v)    {$v$};
 
 \draw [->] (x) edge [bend left=20]  (f1);
 \draw [->] (y) edge [bend right=20]  (f1);
 \draw [->] (u) edge [bend left=20]  (f2);
 \draw [->] (v) edge [bend right=20]  (f2);
 \draw [->] (f1) edge [bend left] (dummy);
 \draw [->] (dummy) edge [bend left] (g);
 \draw [->] (f1) edge [bend right] (g);
 \draw [->] (f2) edge [bend right] (g);
\end{tikzpicture}
\captionof{figure}{Protocol $\protocol_{\mathtext{sec}}$}
\label{figure:secure protocol}
\end{minipage}\ \ \ \ 
\begin{minipage}{4cm}
 \begin{tikzpicture}
 \node at   (0,0) (g) {$g$};
 \node at (-1,-0.2) (dummy) {$f_3$};
 \node at (-1.5  , -1.5)  (f1)    {$f_1$};
 \node at (-0.5  , -1.5)  (f1clone)    {$f'_1$};
 
 \node at (1.5,    -1.5)  (f2)    {$f_2$};
 
 \node at (-1.3,-2.5)   (x)    {$x$};
 \node at (-0.5,-2.5)   (y)    {$y$};
 \node at (1.1,-2.5)    (u)    {$u$};
 \node at (1.9,-2.5)    (v)    {$v$};

 \draw [->] (x) edge [bend left=20]  (f1);
 \draw [->] (y) edge [bend right=20]  (f1);
 \draw [->] (x) edge [bend left=20]  (f1clone);
 \draw [->] (y) edge [bend right=20]  (f1clone);
 
 \draw [->] (u) edge [bend left=20]  (f2);
 \draw [->] (v) edge [bend right=20]  (f2);

 \draw [->] (f1) edge [bend left] (dummy);
 \draw [->] (dummy) edge [bend left] (g);
 \draw [->] (f1clone) edge [bend left] (g);
 \draw [->] (f2) edge [bend right] (g);
\end{tikzpicture}
\captionof{figure}{Protocol $\protocol_{\mathtext{insec}}$}
\label{figure:insecure protocol}
\end{minipage}

\medskip

Consider the protocols $\protocol_{\mathtext{sec}}$ and $\protocol_{\mathtext{insec}}$ from Figures~\ref{figure:secure protocol} and~\ref{figure:insecure protocol}. 
$\protocol_{\mathtext{insec}}$ is the result of cloning $f_1$ in $\protocol_{\mathtext{sec}}$, the copy is called $f_1'$. In the original protocol $\protocol_{\mathtext{sec}}$, the answer to the query of $f_1$ is used as input for both the $f_3$ and the $g$-queries. In contrast, the protocol $\protocol_{\mathtext{insec}}$ contains the same query to $f_1$ (whose result is used in the later $f_3$-query) and a further query to $f_1'$, using the same input values as the $f_1$-query (namely $x$ and $y$). The result of this query is  used as input for the $g$-query. This change to the protocol introduces insecurity: While $\protocol_{\mathtext{sec}}$ is secure, the protocol $\protocol_{\mathtext{insec}}$ is insecure. Hence cloning preserves insecurity, but does not preserve security.

To see that $\protocol_{\mathtext{sec}}$ is secure, first note that the node $f_3$ can be removed without affecting security. In the resulting protocol, $g$ receives two copies of the result of $f_1(x,y)$, which is equivalent to $g$ only getting a single copy of that bit. This protocol is identical to the one from Figure~\ref{figure:simple nosync example}, and hence $\protocol_{\mathtext{sec}}$ is secure by Example~\ref{example:simple secure protocol}.

On the other hand, $\protocol_{\mathtext{insec}}$ has a tracking strategy given by $T=\set{t_{init},g}$, with $t_{init}=f_2$. Therefore, $\protocol_{\mathtext{insec}}$ is insecure due to Theorem~\ref{theorem:tracking strategies work}. 

This shows that introducing an additional query into a secure protocol can render the protocol insecure, even if the exact same set of input values for the new query is already used for an existing query in the protocol. Therefore, queries to an adversial network should be kept to a minimum.

\subsection{Security Proofs for Deeper Protocols}\label{sect:embedding application}

We now use Theorem~\ref{theorem:embedding}, to ``lift'' the results obtained for flat protocols in Section~\ref{sect:two cases flat secure protocols} to protocols of arbitrary depth.
For simplicity, we only consider \emph{layered} protocols $\protocol$, i.e., $\protocol=L_0\cup\dots\cup L_n$, where $L_i\cap L_j=\emptyset$ for $i\neq j$, each predecessor of a node in $L_i$ is in $L_{i+1}$, and variable nodes only appear in $L_n$. One can easily rewrite every protocol into a layered one without affecting security. Our first result in this section generalizes our security result for flat protocols that consist of two ``disjoint'' components (Theorem~\ref{theorem:disjoint variables}) to protocols of arbitrary depth:

\begin{restatable}{corollary}{corollarydisjointvariablesgeneralization}\label{corollary:disjoint variables generalization}
 Let $\protocol$ be a layered protocol with levels $L_0,\dots,L_n$ with some $i$ such that $L_i=I_1\cup I_2$ with $\var{I_1}\cap\var{I_2}=\emptyset$ and $\card{\var{I_1}}>\card{I_1}$, $\card{\var{I_2}}>\card{I_2}$. Then $\protocol$ is secure.
\end{restatable}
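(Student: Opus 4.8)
The plan is to derive Corollary~\ref{corollary:disjoint variables generalization} from Theorem~\ref{theorem:embedding} by exhibiting an embedding of a suitable secure flat protocol (one covered by Theorem~\ref{theorem:disjoint variables}) into the given layered protocol $\protocol$. Concretely, fix the level $L_i$ with $L_i = I_1\cup I_2$ as in the hypothesis. For $j\in\set{1,2}$ and each node $f\in I_j$, let $\overrightarrow{y_f}$ enumerate the variables in $\vars(f)$ (the variables that reach $f$ in $\protocol$); since variable nodes live only in $L_n$ and $\var{I_1}\cap\var{I_2}=\emptyset$, the variable sets attached to $I_1$-nodes and $I_2$-nodes are disjoint. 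Now define the flat protocol
\[
  \protocol' \;=\; g\bigl(\, (f(\overrightarrow{y_f}))_{f\in I_1\cup I_2}\,\bigr),
\]
i.e.\ introduce a fresh root node $g$ fed by one second-level node per element of $L_i$, with that node's variable-inputs being exactly the variables of $\protocol$ reaching the corresponding node. Setting $I_1' , I_2'$ to be the index sets of $I_1$ and $I_2$ in this flat protocol, the disjointness hypothesis gives $\overrightarrow{y_f}\cap\overrightarrow{y_{f'}}=\emptyset$ whenever $f\in I_1, f'\in I_2$, and $\card{\var{I_j}} > \card{I_j}\ge 1$ is exactly $\card{\cup_{f\in I_j'}\overrightarrow{y_f}} > \card{I_j'}\ge 1$. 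Hence Theorem~\ref{theorem:disjoint variables} applies and $\protocol'$ is secure; it clearly has a root, so Theorem~\ref{theorem:embedding} is applicable once we produce an embedding.

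Next I would define the embedding $\varphi\colon\protocol'\to\protocol$. Each second-level node $f$ of $\protocol'$ maps to the corresponding node $f\in L_i\subseteq\protocol$; the root $g$ of $\protocol'$ maps to any fixed output node of $\protocol$ reachable from all of $L_i$ (if $\protocol$ has several output nodes one first collapses them, or picks a node of $L_0$ that all of $L_i$ reaches — using that in a layered protocol every node of $L_i$ has a descendant in $L_0$; if necessary one adds a dummy root, which does not affect security). On variables, $\chi\colon\vars(\protocol')\to\vars(\protocol)$ is the identity (every variable of $\protocol'$ is literally a variable of $\protocol$), which is injective. I then have to verify the three embedding conditions. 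For the first: a node $v$ of $\protocol'$ with a successor in $\protocol'$ is one of the $f\in L_i$; any path $\varphi(v)=f\rightsquigarrow u$ in $\protocol$ goes to a node $u$ in some level $L_{<i}$, and since $f$'s only successor direction is toward the (chosen) output node $\varphi(g)$, the path can be extended to $\varphi(g)$ — so $u\rightsquigarrow\varphi(g)$, as required. For the second: the only pairs $u\ne v$ in $\protocol'$ are $(f,g)$ with $f\in L_i$, and there is indeed an edge $f\to g$ in $\protocol'$ by construction, so this condition is vacuously satisfied (and we must make sure no two distinct $f,f'\in L_i$ have a direct $\varphi(f)\rightsquigarrow\varphi(f')$ path with no intermediate node in $\varphi(\protocol')$ — but both are in the same level $L_i$, so no path between them exists at all). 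For the third: $\chi(x)\rightsquigarrow\varphi(x)=x$ trivially (length-zero path, or: $x$ is literally the node), and if $\chi(x)=x$ and $\chi(y)=y$ both reach a common node $w$ in $\protocol$, then since variables are at level $L_n$ and $x,y$ each reach their respective $f$-node in $L_i$, and all of $x$'s and $y$'s descendants below level $i$ factor through the $L_i$-nodes they feed, the common node $w$ lies below $L_i$, so the two paths both pass through some $\varphi(f)$ with $f\in L_i$; picking that $f$ as the required $g\in\protocol'\setminus\var{\protocol'}$ discharges the condition.

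The main obstacle I anticipate is the third embedding condition, and in particular making precise the claim that two variables reaching a common descendant in $\protocol$ must do so via a shared $L_i$-node. This is where the \emph{layered} structure and the hypothesis $\var{I_1}\cap\var{I_2}=\emptyset$ do the real work: a variable's path downward is forced to cross level $L_i$, and entering $L_i$ it must enter at some specific node $f$; if two variables $x\in\var{I_1}$, $y\in\var{I_2}$ entered at \emph{different} nodes $f\ne f'$, one would need those nodes already to be in different components and yet share a later descendant — which is fine, they share $\varphi(g)$, but I must check $\varphi(g)$ is itself of the form $\varphi(f)$; it is, since $g\in\protocol'\setminus\var{\protocol'}$ and the condition only asks for \emph{some} such node, and $\varphi(g)$ literally is the image of the non-variable node $g$. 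So the subtlety is purely bookkeeping about which nodes count, not a genuine difficulty. A secondary technical point is the reduction to a single output node and the possible insertion of a dummy root — I would handle this exactly as in Section~\ref{sect:embedding application}, noting such rewriting does not affect security, and then apply Theorem~\ref{theorem:embedding} to conclude $\protocol$ is secure.
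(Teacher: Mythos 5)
Your proposal is correct and follows essentially the same route as the paper: build the flat protocol $g\bigl((f(\overrightarrow{y_f}))_{f\in L_i}\bigr)$ from the level $L_i$ with each node's reachable variables, invoke Theorem~\ref{theorem:disjoint variables} for its security, and embed it into $\protocol$ via the identity on $L_i$-nodes and variables with the root mapped to (a possibly added) root of $\protocol$, concluding by Theorem~\ref{theorem:embedding}. The paper's proof states the embedding verification only as ``clearly,'' so your explicit check of the three conditions is just a more detailed rendering of the same argument.
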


Our result on flat protocols $g(f_1(\overrightarrow{x_1}),\dots,f_n(\overrightarrow{x_n}))$ where each of the $f_i$ has a private variable can be generalized as follows. Note that for such a protocol, a tracking strategy exists if and only if there is one service $f_i$ that has access to all variables except the one private variable for each other $f_j$ ($f_i$ itself may have more than one private variable), i.e., if there is one node $f_i$ that ``sees'' all variables except for $n-1$ many private ones.

\begin{restatable}{corollary}{corollaryprivatevariablesgeneralization}\label{corollary:private variables generalization}
 Let $\protocol$ be a layered protocol with levels $L_0,\dots,L_n$ with some $i$ such that for each $f\in L_i$, there is a variable $x_f\in\vars(f)\setminus\var{L_i\setminus\set{f}}$ and there is no $f\in L_i$ with $\var{f}\supseteq\var{L_i}\setminus\set{x_{f'}\ \vert\ f'\in L_i}$. Then $\protocol$ is secure.
\end{restatable}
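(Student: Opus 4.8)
The plan is to derive Corollary~\ref{corollary:private variables generalization} from Theorem~\ref{theorem:each function has private variable} by constructing an embedding of a suitable flat protocol $\protocol'$ into the layered protocol $\protocol$, and then invoking Theorem~\ref{theorem:embedding}. The candidate for $\protocol'$ is the flat protocol whose nodes are the level $L_i$ together with its variables: concretely, $\protocol'=g'(f_1(\overrightarrow{x_1}),\dots,f_m(\overrightarrow{x_m}))$ where $L_i=\set{f_1,\dots,f_m}$, the $\overrightarrow{x_j}$ lists one fresh variable per element of $\var{f_j}$, and $g'$ is a new root collecting all the $f_j$. The hypothesis of the corollary — each $f\in L_i$ has a variable $x_f$ private within $L_i$, and no $f\in L_i$ sees $\var{L_i}$ minus the private variables — is precisely the statement that $\protocol'$ has each $f_j$ equipped with a private variable and has \emph{no} tracking strategy (using the characterization of tracking strategies for private-variable flat protocols recalled just before the corollary). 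Hence Theorem~\ref{theorem:each function has private variable} gives that $\protocol'$ is secure, and it has a root by construction.

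The core of the proof is then to exhibit the embedding $\varphi\colon\protocol'\to\protocol$. First I would send each $f_j\in\protocol'$ to the corresponding node in $L_i\subseteq\protocol$, and send the new root $g'$ to any fixed output node of $\protocol$ reachable from all of $L_i$ (layeredness guarantees that following edges upward from $L_i$ lands in $L_0,\dots,L_{i-1}$, and one can arrange a single common descendant, or handle multiple output nodes by a minor preprocessing of $\protocol$ adding a dummy root — the paper already notes every protocol can be rewritten without affecting security). For the variable map $\chi$, each variable $x_f^{(k)}$ of $\protocol'$ feeding $f_j$ should be mapped to an actual variable node of $\protocol$ that lies on a path into $f_j$; layeredness means $\var{f_j}\subseteq L_n$, and I would pick for each $f_j$ and each element of $\var{f_j}$ a distinct such variable node, with the private variable $x_{f_j}$ mapped to a $\protocol$-variable not in $\var{L_i\setminus\set{f_j}}$, which exists by hypothesis. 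The three embedding conditions then need to be checked: condition one (information leaves $\varphi(\protocol')$ only through its output interface) follows because $L_i$ is a full level and every node above it in $\protocol$ is a descendant of some node of $L_i$, so any path leaving $\varphi(\protocol')=L_i\cup(\text{chosen roots/vars})$ upward re-enters through $\varphi(g')$; condition two (no extra connections inside the copy) holds since within a single layer $L_i$ there are no edges at all, and $g'$ has no descendants in $\protocol'$, so the only $\varphi(u)\rightsquigarrow\varphi(v)$ paths with intermediate nodes outside $\varphi(\protocol')$ run from an $f_j$ to the common root, matching the edges $f_j\to g'$ of $\protocol'$; condition three (exclusive data) reduces to the injectivity of $\chi$ together with the requirement that any $\protocol$-variable feeding two distinct $\chi(x),\chi(y)$ be forced through some $\varphi(g)$ — here $g=g'$ works, since a $\protocol$-variable shared between paths into $f_j$ and $f_{j'}$ ($j\neq j'$) would have to be an element of $\var{f_j}\cap\var{f_{j'}}$, and one must verify the chosen $\chi$ avoids pairing up shared variables in a way that breaks this; if $x$ and $y$ feed the \emph{same} $f_j$, both paths visit $\varphi(f_j)$.

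I expect the main obstacle to be condition three of the embedding together with the bookkeeping of $\chi$: the corollary's hypothesis only controls the private variables, and one must choose, for the non-private variables of each $f_j$, $\protocol$-variable nodes so that $\chi$ is injective and so that whenever a $\protocol$-variable sits below two of the chosen nodes, those two nodes share a $\varphi(g)$. The clean way to handle this is to observe that we are free to choose $\protocol'$ so that $f_j$ receives exactly $\card{\var{f_j}}$ variables and to let $\chi$ be a section of the obvious surjection from the multiset of $\protocol'$-variables to $\var{L_i}$; a shared $\protocol$-variable below $\chi(x)$ and $\chi(y)$ with $x,y$ feeding distinct $f_j,f_{j'}$ then forces both paths through the common root $\varphi(g')$, which is what we need. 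A secondary nuisance is ensuring $\protocol$ (or a security-equivalent rewriting of it) has a single output node reachable from all of $L_i$ so that $\varphi(g')$ is well-defined and condition one is clean; this is routine given the paper's remark that protocols can be rewritten into layered form and that adding a dummy root preserves security. Once the embedding is in place, Theorem~\ref{theorem:embedding} immediately yields that $\protocol$ is secure, completing the proof. The parallel argument with Theorem~\ref{theorem:disjoint variables} in place of Theorem~\ref{theorem:each function has private variable} gives Corollary~\ref{corollary:disjoint variables generalization}, so the two corollaries can be proved in a unified way.
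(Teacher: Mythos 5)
Your high-level route is the paper's: form a flat protocol from the level $L_i$, obtain its security from Theorem~\ref{theorem:each function has private variable}, and transfer it to $\protocol$ via Theorem~\ref{theorem:embedding}, adding a dummy root if $\protocol$ has several output nodes. However, your construction of $\protocol'$ contains a genuine flaw: you give each $f_j$ \emph{fresh} variables, one per element of $\var{f_j}$, which destroys exactly the sharing structure the whole argument depends on. First, with pairwise disjoint variable vectors every variable of $\protocol'$ is private, so the corollary's hypothesis is \emph{not} ``precisely'' the statement that $\protocol'$ has no tracking strategy: the characterization recalled before the corollary is about the shared non-private variables, which your $\protocol'$ has erased, so Theorem~\ref{theorem:each function has private variable} is being applied to a protocol that no longer reflects the hypothesis. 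Second, the embedding need not exist at all: $\chi$ must be injective on $\vars(\protocol')$, but your $\protocol'$ has $\sum_j\card{\var{f_j}}$ variables, which exceeds $\card{\var{\protocol}}$ as soon as $L_i$ shares variables (a level with the sharing pattern of Figure~\ref{figure:protocol private variables} gives $9$ fresh variables against $6$ variables of $\protocol$); the ``section of the obvious surjection'' you propose goes in the wrong direction and does not yield an injective $\chi$ defined on all of $\vars(\protocol')$. Third, even when the counting happens to work out, condition two of the embedding breaks: if $x\in\var{f_j}\cap\var{f_{j'}}$ and the copy feeding $f_j$ is mapped to the variable node $x$ of $\protocol$, then $\protocol$ contains a path from that image to $\varphi(f_{j'})$ whose intermediate nodes lie in layers strictly between $L_n$ and $L_i$ and hence avoid $\varphi(\protocol')$, yet $\protocol'$ has no edge from that copy to $f_{j'}$.

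The repair is not bookkeeping but simply not disjointifying: as in the paper, take $\vars(\protocol')=\var{L_i}$ itself, i.e.\ $\protocol'=g(f_1(\vars(f_1)),\dots,f_m(\vars(f_m)))$ with the variable sets exactly as in $\protocol$, and let $\varphi$ and $\chi$ be the identity on variables and on $L_i$, with $\varphi(g)=r_\protocol$. Then the two hypotheses of the corollary literally say that each $f_j$ has a private variable and, by the characterization preceding the corollary, that no tracking strategy exists, so Theorem~\ref{theorem:each function has private variable} gives security of $\protocol'$; $\chi$ is trivially injective; and the paths induced by shared variables correspond to edges actually present in $\protocol'$, so the embedding conditions hold essentially for the reasons you sketch. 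With that change your argument coincides with the paper's proof, and the same identity-style embedding also yields Corollary~\ref{corollary:disjoint variables generalization}, as you note.
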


Both of these results follow easily by embedding a flat protocol satisfying the prerequisites of Theorem~\ref{theorem:disjoint variables} or~\ref{theorem:each function has private variable} into $\protocol$.

\section{Generalized Security}\label{sect:generalized security}

In many situations, our security definition is too weak. In an internet shopping protocol, we want the adversary to be unable to link the user's address and the product, but probably do not care about whether she can link these values to the bit indicating express delivery. Yet, our  definition deems a protocol secure as soon as the latter bit cannot be linked to the former values. Similarly, one might be tempted to apply Theorem~\ref{theorem:embedding} to make a protocol $\protocol$ secure by embedding into it a known secure protocol $\protocol'$, applied to dummy variables. Clearly, this does not give any meaningful security for $\protocol$, as linking the dummy variables is irrelevant.

This illustrates that often, we consider a protocol $\protocol$ insecure as soon as the adversary can link a subset of ``relevant'' variables $X_r\subseteq\var{\protocol}$. We call such a  protocol $X_r$-insecure, and $X_r$-secure otherwise. This condition is clearly stronger than security in the sense of  Definition~\ref{definition:secure protocol}, since every $X_r$-secure protocol trivially is secure. Our results can be directly applied to $X_r$-security via the following straightforward relationship:

\begin{proposition}\label{prop:partial insecurity characterization}
 Let $\protocol$ be a protocol and let $X_r\subseteq\var{\protocol}$. Then $\protocol$ is $X_r$-secure if and only if $\restr{\protocol}{X_r}$ is secure, where $\restr{\protocol}{X_r}$ is obtained from $\protocol$ by removing all variables $x\in\var{\protocol}\setminus X_r$ and their outgoing edges.
\end{proposition}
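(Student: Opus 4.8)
The plan is to show both directions by exhibiting a correspondence between adversary strategies for $\protocol$ attacking only the variables in $X_r$ and adversary strategies for the restricted protocol $\restr{\protocol}{X_r}$. The key observation is that $\restr{\protocol}{X_r}$, obtained by deleting the variables in $\var{\protocol}\setminus X_r$ together with their outgoing edges, has exactly the same non-variable nodes as $\protocol$, and every query node $f$ retains all of its incoming edges from non-variable predecessors; only some incoming edges from deleted variables disappear. Thus a global session $(S,\sigma)$ of $\restr{\protocol}{X_r}$ is ``the same'' as a global session of $\protocol$ in which the adversary simply ignores (does not receive) the extra argument bits coming from the removed variables, and vice versa: given a global session of $\protocol$, restricting each local session $I\colon\var{\protocol}\to\set{0,1}$ to $X_r$ yields a local session of $\restr{\protocol}{X_r}$, and the schedule is unchanged since the set of query nodes is unchanged.

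First I would make precise the notion of $X_r$-security: $\protocol$ is $X_r$-insecure if there is a strategy $\Pi$ such that for every global session $(S,\sigma)$ of $\protocol$, during the run an action $\advcmd{print}(J)$ occurs with $J\colon X_r\to\set{0,1}$ such that $J=\restr{I}{X_r}$ for some $I\in S$. For the direction ``$\restr{\protocol}{X_r}$ insecure $\Rightarrow$ $\protocol$ is $X_r$-insecure'': given a winning strategy $\Pi'$ for $\restr{\protocol}{X_r}$, I would build a strategy $\Pi$ for $\protocol$ that, upon seeing a view of a $\protocol$-state, deletes from each recorded query tuple the coordinates corresponding to arguments from removed variables (these coordinates are at fixed, syntactically determined positions in each $f$-query), thereby producing the view of the corresponding $\restr{\protocol}{X_r}$-state, and then copies the action chosen by $\Pi'$ — where a $\advcmd{print}(J)$ action of $\Pi'$ on $\var{\protocol}\restriction X_r$ is emitted verbatim as an $X_r$-print. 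Since every global session of $\protocol$ projects to a global session of $\restr{\protocol}{X_r}$ on which $\Pi'$ prints some $I\restriction X_r$ with $I$ in the projected multiset, $\Pi$ is a winning $X_r$-attack on $\protocol$. For the converse, given a winning $X_r$-attack $\Pi$ on $\protocol$, I would build $\Pi'$ for $\restr{\protocol}{X_r}$ by the reverse translation: a $\restr{\protocol}{X_r}$-view is padded with the missing argument coordinates — but here is the subtlety, since $\Pi$ may want those bits. The point is that every extension of a $\restr{\protocol}{X_r}$-global-session to a $\protocol$-global-session (choosing arbitrary values for the removed variables) is consistent with the same observations once those coordinates are erased, so $\Pi'$ can fix the removed coordinates to, say, all-zero; $\Pi$ then wins on that particular extension, printing some $I\restriction X_r$ with $I\in S$, which is a winning print for $\Pi'$ on the original $\restr{\protocol}{X_r}$-session.

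The main obstacle, as usual in this paper, is bookkeeping around inconsistent and history-dependent strategies, and ensuring the state/view translations commute with the step relation (query events and reply events) so that a run of $\Pi$ on $\protocol$ and the run of $\Pi'$ on the image session really do stay in lockstep; in particular one must check that the ``erase coordinates'' map on views is well-defined (the positions of removed-variable arguments in each $f$-tuple depend only on $\protocol$, not on the run) and that it is surjective onto $\restr{\protocol}{X_r}$-views, and that $\advcmd{wait}$-availability is preserved under the correspondence. Once the translations are shown to commute with single steps, a straightforward induction on run length gives that the two runs produce matching sequences of print actions, and both directions of the equivalence follow. I expect this to be routine but tedious, entirely analogous to the strategy-translation arguments used for Theorem~\ref{theorem:embedding}.
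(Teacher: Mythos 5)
Your proposal is correct, and it matches what the paper intends: the paper states Proposition~\ref{prop:partial insecurity characterization} without an explicit proof (calling the relationship ``straightforward''), and your two strategy translations---erasing the removed-variable coordinates of each query to turn a winning strategy for $\restr{\protocol}{X_r}$ into an $X_r$-attack on $\protocol$, and padding those coordinates with zeros (i.e., simulating $\Pi$ on the all-zero extension of the restricted session) for the converse---are exactly the routine argument, made sound by your observations that the non-variable nodes, the edges among them, the schedules, and the availability of \emph{wait} are identical in $\protocol$ and $\restr{\protocol}{X_r}$. No gaps.
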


Similarly, there are natural generalizations of our security defintion such as not simply tracking the first session in a global session, but waiting for a session with ``interesting'' values for user variables (values of these variables must be present at the node $t_{init}$ in a tracking strategy), 
and clearly an arbitrary number of sessions can be tracked consecutively by ``recycling'' the session cookie after the successful linking of a session. Our techniques can easily be applied to these situations as well.
 
\section{Conclusion}\label{sect:conclusion}

We have initiated the study of active linking attacks on communication protocols which exploit that the adversary has control over user data. We introduced a model, formalized a security definition and gave a sound criterion---tracking strategies---for insecurity of protocols. We also gave sound criteria (via embedding) for security of protocols. For a large class of protocols, these criteria are complete, i.e., they detect all insecure protocols. The question whether this completeness holds in general remains open.

Further interesting questions for future research concern relaxing the  ``worst-case'' assumptions underlying our approach:

We allow the adversary to ignore the intended semantics of the webservices under her control, and to choose arbitrary query replies for a linking attack. Realistically, the adversary will try to honestly answer a majority of the queries in order to avoid detection. Further, we require the adversary to be successful for every possible schedule. Realistically, while interleaving certainly happens (and can be enforced to some extent by a privacy-aware user), an adversary strategy that is only successful against schedules with a limited amount of interleaving might be enough to consider a protocol insecure. 

Other extensions include randomized adversary strategies, which only have to be successful with a certain probability, and scenarios where where only a subset of the web services is controlled by the adversary.

\begin{appendix}
   
\section{Proof of Theorem~\ref{theorem:tracking strategies work}}

\theoremtrackingstrategieswork*

\begin{proof}
 \resetprooffacts
We sketch how to construct a strategy $\Pi$ which does tracking for nodes in $T$ and uses the other nodes to transmit bits through the protocol unmodified. Without loss of generality we assume that $T$ is $\leadsto$-closed, i.e. it contains all nodes reachable from $t_{init}$.
For every variable node $x$ we fix a path $p_x$ as mentioned in the cover condition of definition \ref{definition:generalized tracking strategy}.
The adversary can construct a function $p : \tau \setminus \left(\vars(\tau) \cup \set{t_{init}}\right) \to \tau \cup \set{\bot}$ with the following properties: For $u \in T$ let $p(u)=v$ with $v \rightarrow u$ and $v \in T$, for $u \not\in T$ let $p(u)$ be the predecessor of $u$ if for some variable node $x$ there is a path $p_x$ which ends in $u$, and let $p(u)=\bot$ otherwise. This function expresses whether a node will echo back one of its inputs as a reply, and if so, which one. Note that $(T \setminus \set{t_{init}}) \subseteq \mathit{dom}(p)$; this is no contradiction, since for a node $u \in T$, $u \neq t_{init}$ projection to the input parameter supplying the earlier generated tracking cookie is indistinguishable from generating an ``own'' tracking cookie - there's effectively only one cookie generated by $t_{init}$ and then passed on.

Now let $\beta$ be an adversary view. First we assume the first local session to address $t_{init}$ still has pending unanswered queries. Let $(u,I,\bot,\bot)$ be the earliest unanswered query in $\beta$. If $u=t_{init}$ we set $\Pi(\beta)=\mathsf{reply}(k,r)$, where $k$ is the index of that query and $r=0$ if there's no query to $t_{init}$ in the prefix of $\beta$ up to $(u,I,\bot,\bot)$ and $r=1$ otherwise. This models the tracking behavior of $t_{init}$ with inconsistent replies. In the case of $u\neq t_{init}$, we make a further distinction: If $p(u)=\bot$ the node $u$ isn't vital to the strategy and we just set $\Pi(\beta)=\mathsf{reply}(k,1)$; if we have $p(u)\neq \bot$ however, we set $\Pi(\beta)=\mathsf{reply}(k, I(p(u)))$, each with an appropriate choice of $k$ as above. Note that the symbol $I$ in the latter case is the partial assignment from the view.
Now assume the first local session to ask $t_{init}$ has been served completely. In this case, we set $\Pi(\beta)=\mathsf{print}(I_{out})$ with $I_{out}$ extracted from $\beta$ as follows: Let $(u,I,r,t)$ an element of $\beta$ with $u \in T$, $r=0$ and there is a variable node $x$ such that $p_x$ ends in $u$. Then let $I_{out}(x)=I(x)$.

It remains to show that $\Pi$ is successful. Since the session to ask $t_{init}$ first will at some point be complete, a view will finally be attained which makes $\Pi$ emit a $\mathsf{print}$ action. $\Pi$ will then produce the output $I_{out}$. We now show that for a given global session $(S,\sigma)$ we have $I_{out} \in S$. Assume $S=\set{I_1, \ldots, I_N}$ and let $I_{t_0}$ be the session which asks $t_{init}$ first, i.e. the session the adversary wants to track.
\begin{prooffact}
\label{prooffact:tracking works 1}
If $u \in T$ then $I_i(u)=0$ if and only if $i=t_0$.
\end{prooffact}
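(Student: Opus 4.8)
The plan is to prove Fact~\ref{prooffact:tracking works 1} by well-founded induction along the edge relation of the DAG $\tau$ restricted to the set $T$. The first point to verify is that this induction is set up correctly: by the synchronization condition, $t_{init}$ is $\leadsto$-below every node of $T$, and since we assumed $T$ to be $\leadsto$-closed, every $u\in T$ with $u\neq t_{init}$ has an immediate predecessor that again lies in $T$ (namely the penultimate node on any directed path from $t_{init}$ to $u$). Hence for each such $u$ the value $p(u)$ is some $v\in T$ with $v\to u$ an edge of $\tau$, and $v$ is strictly $\tau$-below $u$, so induction along the relation ``$v\to u$ with $v=p(u)$'' is legitimate and bottoms out at $t_{init}$.

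For the base case $u=t_{init}$: by definition of $t_0$, the session $I_{t_0}$ performs the first $t_{init}$-query in the run of $(S,\sigma)$ under $\Pi$. Since $\Pi$ always answers the earliest pending query, and answers a $t_{init}$-query with $0$ precisely when the current view contains no earlier $t_{init}$-query, the first $t_{init}$-query gets reply $0$ and every later one gets reply $1$; thus $I_{t_0}(t_{init})=0$ and $I_i(t_{init})=1$ for all $i\neq t_0$. For the inductive step, let $u\in T$, $u\neq t_{init}$, and set $v=p(u)\in T$. Because $v\to u$ is an edge and $v\notin\vars(\tau)$, in every schedule the $v$-query of $I_i$ precedes its $u$-query, so when $\Pi$ answers the $u$-query of $I_i$ the value $I_i(v)$ has already been fixed (it is $\Pi$'s reply to that earlier $v$-query) and is recorded in the partial assignment appearing in the view; by construction $\Pi$ then replies with $I_i(p(u))=I_i(v)$, so $I_i(u)=I_i(v)$. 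Applying the induction hypothesis to $v$ gives $I_i(v)=0\iff i=t_0$, hence also $I_i(u)=0\iff i=t_0$, which closes the induction.

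\textbf{Main obstacle.} There is no hard computation here; the work is in keeping the bookkeeping of the run straight. One must make sure that (i) $p(u)$ genuinely lands inside $T$ for every non-initial node of $T$ — this is exactly where the synchronization condition together with the $\leadsto$-closedness normalization is used — and (ii) that ``$I_i(u)$'' is only invoked once the corresponding query has actually been answered, so that it equals the reply $\Pi$ produced (this follows from the edge constraint of schedules and the fact that $\Pi$ never stalls a pending query). It is also worth stressing that the \emph{inconsistent} behaviour of $\Pi$ at $t_{init}$ (answering $0$ only to the very first $t_{init}$-query, even when a later session agrees with $I_{t_0}$ on the arguments of $t_{init}$) is precisely what makes the ``only if'' direction go through; a consistent $\Pi$ would falsify Fact~\ref{prooffact:tracking works 1}, as the discussion around Figure~\ref{figure:global sessions inconsistency} illustrates. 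Everything else is a direct unfolding of the definition of $\Pi$.
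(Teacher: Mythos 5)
Your proof is correct and is essentially the paper's argument: the paper phrases it as a minimal-counterexample over the temporal order of queries (take the earliest query $(i,t)$ violating the claim, note $t\neq t_{init}$, and use $p(t)\in T$ plus the forwarding reply $I_i(t)=I_i(p(t))$ to contradict minimality), which is just your well-founded induction along $p$ repackaged. The ingredients you isolate --- $p(u)\in T$ for $u\in T\setminus\set{t_{init}}$ via $\leadsto$-closedness, the schedule forcing the $p(u)$-query to be answered before the $u$-query, and the inconsistent $0$/$1$ behaviour at $t_{init}$ --- are exactly those used in the paper.
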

\begin{proof}
Assume there is a query which violates this property. Let $(i,t)$ with $t \in T$ be the earliest such query. Then clearly $t \neq t_{init}$ by the construction of $\Pi$. We have $p(t)=v$, where $v \in T$ and $v \rightarrow t$ by definition of $p$. Therefore the node $v$ has already been asked and since $(i,t)$ is the earliest counterexample, we get $I_i(v)=0$ if and only if $i=t_0$, i.e. the statement holds for $v$. Because of $v \rightarrow t$ we also get $I_i(v)=I_i(t)$, i.e. $v$'s reply is an input to $t$, and this yields $I_i(t)=0$ if and only if $i=t_0$, contradicting our assumption. Therefore no query can violate the stated property.
\end{proof}
With the above Fact we can justify calling $t_{init}$ a synchronizer. It states that, given a tracking strategy $T$, all nodes in $T$ really know the unique session to be tracked. With the next Fact we show that the adversary can correctly channel variable contents to a tracking node.
\begin{prooffact}
\label{prooffact:tracking works 2}
For a variable node $x$, let $p_x$ be of the form $x \leadsto u \leadsto t$, where $x \in \vars(\tau)$, $u \not\in T$ and $t \in T$. Then $I_i(x)=I_i(u)$ when local session $I_i$ has asked node $u$.
\end{prooffact}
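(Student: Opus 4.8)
The plan is to prove Fact~\ref{prooffact:tracking works 2} by induction along the path $p_x$, tracing how the adversary's forwarding function $p$ carries the bit $I_i(x)$ from the variable $x$ towards $t$. I would first normalize the chosen paths: by truncating each $p_x$ at the first node it shares with $T$, one may assume without loss of generality that $p_x$ meets $T$ only in its final node $t$, since truncation only removes nodes and hence preserves both clauses of the cover condition. Write $p_x = (x = v_0, v_1, \dots, v_\ell, t)$; then $v_1,\dots,v_\ell \notin T$, and because each $v_{j+1}$ has the incoming edge $v_j \to v_{j+1}$ while variables have in-degree $0$, none of $v_1,\dots,v_\ell$ is a variable. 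The node $u$ of the statement is one of these $v_j$, and the subpath of $p_x$ from $x$ to $u$ avoids $T$. By clause~(b) of the cover condition each $v_j$ lies on $p_x$ for a unique $x$, so $p(v_j)$ is unambiguously the path-predecessor $v_{j-1}$; in particular $p(v_j)\neq\bot$ for $j\ge 1$.

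The core step is the claim that whenever $I_i(v_j)$ is defined, $I_i(v_j) = I_i(x)$; I would prove it by induction on $j$. The base case $j=0$ is immediate since $v_0 = x$. For the step, assume $I_i(v_{j+1})$ is defined with $j<\ell$. As $v_{j+1}$ is not a variable, this value is the bit $r$ with which the adversary answered the $v_{j+1}$-query of $I_i$. Since $\Pi$ only replies while the first $t_{init}$-session is still incomplete, and then always answers the earliest unanswered query, and since here $v_{j+1}\neq t_{init}$ with $p(v_{j+1})\neq\bot$, the definition of $\Pi$ forces $r = I(p(v_{j+1}))$, where $I$ is the partial assignment recorded in the view for that query. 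Because $v_j\to v_{j+1}$ is an edge, the $v_{j+1}$-query could only have been performed once $I_i(v_j)$ was already defined, so $I(p(v_{j+1})) = I(v_j) = I_i(v_j)$, which by the induction hypothesis equals $I_i(x)$. Hence $I_i(v_{j+1}) = I_i(x)$, and the Fact follows by applying the claim at $u=v_j$.

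I expect the only real obstacle to be the bookkeeping rather than any deep idea: one must be sure that at the moment $I_i(v_{j+1})$ gets defined, the predecessor value $I_i(v_j)$ is already present in the adversary's view, so that the forwarding rule has a bit to copy, and that this value has already been ``certified'' by the induction hypothesis. Both facts come for free from the schedule constraint forcing $(i,v_j)$ before $(i,v_{j+1})$ together with the rule that a query is performable only once all of its inputs are defined. The remaining care is the path normalization at the outset, which guarantees that no tracking node sits on the subpath from $x$ to $u$, so that $p$ genuinely forwards along all of it; with these points settled the argument is short and combinatorial, with no further case analysis.
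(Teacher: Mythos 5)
Your proof is correct and essentially the same as the paper's: the paper argues via the first node $v$ on $p_x$ with $I_i(v)\neq I_i(x)$ and derives a contradiction from the reply rule $\mathsf{reply}(k,I(p(v)))$, which is exactly your induction along the path in contrapositive form. Your initial truncation of $p_x$ at the first $T$-node is harmless but unnecessary, since the paper's assumption that $T$ is $\leadsto$-closed already guarantees the prefix $x\leadsto u$ avoids $T$.
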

\begin{proof}
Assume $I_i(x) \neq I_i(u)$ and let $v$ be the first node in the path with $I_i(x) \neq I_i(v)$. By definition of function $p$ we get $p(v)=w$ for a node $w$ with $w \rightarrow v$. Since $w$ is a predecessor of $v$, we have $I_i(x)=I_i(w)$. Now consider a view in which $v$ is the next node to be asked by $I_i$; according to the definition of $\Pi$, the adversary will emit $\mathsf{reply}(k, I_i(p(v))) = \mathsf{reply}(k, I_i(w)) = \mathsf{reply}(k, I_i(x))$ for some $k$. We must have $I_i(x)=I_i(v)$ prior to asking $u$, which contradicts our assumption. Therefore no such $v$ can exist.
\end{proof}
We combine both Facts to obtain the result that tracking nodes get to know the input values of session $I_{t_0}$.
\begin{prooffact}
\label{prooffact:tracking works 3}
For a variable node $x$,  let $p_x$ be of the form $x \leadsto u \rightarrow t$, where $x \in \vars(\tau)$, $u \not\in T$, $t \in T$ and $t$ replies with a zero when asked. Then $I_i(u)=I_{t_0}(x)$, i.e. $u$ will supply $t$ with a value from the session to be tracked.
\end{prooffact}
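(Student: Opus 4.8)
The plan is to obtain Fact~\ref{prooffact:tracking works 3} by combining the two preceding facts, with essentially no new combinatorics. The crucial point is that the hypothesis ``$t$ replies with a zero when asked'' pins down \emph{which} local session triggered that reply. By the construction of $\Pi$, a node $t\in T$ answers a query with $0$ in only two ways: either $t=t_{init}$, and then $\Pi$ replies $0$ exactly to the first $t_{init}$-query, which by definition is the query of the tracked session $I_{t_0}$; or $t\neq t_{init}$, and then $\Pi$ replies with the value $I_i(p(t))$ of the designated predecessor $p(t)\in T$, so a zero answer means $I_i(p(t))=0$, which by Fact~\ref{prooffact:tracking works 1} forces $i=t_0$. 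In either case, the zero-answered $t$-query is the $t$-query of $I_{t_0}$, i.e.\ $i=t_0$.

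Next I would identify the argument that $u$ contributes to that query. Since $p_x$ ends with the edge $u\rightarrow t$, the value $I_{t_0}(u)$ is one of the inputs of the $t$-query of $I_{t_0}$. If $u=x$ — the only way $u$ can be a variable, since variable nodes have in-degree zero — this input is literally $I_{t_0}(x)$; otherwise $u\notin\var{\protocol}$, so every schedule performs $I_{t_0}$'s $u$-query before its $t$-query, and hence $I_{t_0}(u)$ is already defined at that point. Applying Fact~\ref{prooffact:tracking works 2} to $p_x=x\leadsto u\leadsto t$ and the session $I_{t_0}$ then yields $I_{t_0}(u)=I_{t_0}(x)$. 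Either way, chaining these gives $I_i(u)=I_{t_0}(u)=I_{t_0}(x)$, which is the claim.

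The only place that needs care — and the closest thing to an obstacle — is the first step: one must verify that the case analysis of how a zero reply at $t$ can arise is exhaustive and faithfully matches the clauses in the definition of $\Pi$, in particular that the deliberately inconsistent behaviour of $t_{init}$ (answering $0$ only to its very first query) cannot let any session other than $I_{t_0}$ receive a zero at a node of $T$. Once Fact~\ref{prooffact:tracking works 1} is invoked to rule this out, the rest is a direct substitution. Fact~\ref{prooffact:tracking works 3} then feeds into the wrap-up of the theorem: letting $x$ range over $\vars(\protocol)$ and using that the paths $p_x$ are node-disjoint outside $T$ by the cover condition, the adversary reads off $I_{out}(x)=I_{t_0}(x)$ for every variable, so $I_{out}=I_{t_0}\in S$ and $\Pi$ is successful.
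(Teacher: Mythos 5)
Your proposal is correct and follows essentially the same route as the paper: use Fact~\ref{prooffact:tracking works 1} to conclude that the zero-answered $t$-query belongs to $I_{t_0}$, then apply Fact~\ref{prooffact:tracking works 2} along $p_x$ to get $I_i(u)=I_{t_0}(u)=I_{t_0}(x)$. The only cosmetic difference is that the paper invokes Fact~\ref{prooffact:tracking works 1} directly on $t$ itself (the zero reply means $I_i(t)=0$), whereas you unpack the definition of $\Pi$ and route the $t\neq t_{init}$ case through $p(t)$; both amount to the same argument.
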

\begin{proof}
Since $t \in T$ and $I_i(t)=0$ after $t$ has been asked, we can apply Fact \ref{prooffact:tracking works 1} which shows that $i=t_0$ must hold. Using Fact \ref{prooffact:tracking works 2} we can see that $I_{t_0}(x)=I_{t_0}(u)=I_i(u)$.
\end{proof}
We can now prove that $\Pi$ is indeed a successful strategy. Let $\beta$ be a view in which session $I_{t_0}$ has been completed. Then $\Pi(\beta)=\mathsf{print}(I_{out})$ and by construction of $\Pi$ we see that $I_{out}$ is the union of all $I$ such that $(u,I,r,t)$ is an element in this view with $u \in T$, $r=0$ and there is a variable node $x$ such that $p_x$ ends in $u$. Using Fact \ref{prooffact:tracking works 3} we see that $I_{out} \subseteq I_{t_0}$. Since we have a path $p_x$ for every variable node $x$, we know that $I_{out}$ contains an assignment for every input value.
Therefore $I_{out}=I_{t_0} \in S$.

\end{proof}

\section{Finding Tracking Strategies in Polynomial Time}

We now prove that determining whether there exists a tracking strategy for a given protocol $\protocol$ can be done in polynomial time.

\decisionproblemwidth{10cm}{\exstrat}{Protocol $\protocol$}{Is there a tracking strategy for $\protocol$?}

\begin{theorem}
 The problem \exstrat\ can be solved in polynomial time.
\end{theorem}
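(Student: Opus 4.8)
The plan is to reduce \exstrat\ to polynomially many maximum-flow computations. The first observation is that a tracking strategy is essentially determined by its $\rightsquigarrow$-smallest element: if $T$ is a tracking strategy with $\rightsquigarrow$-smallest element $r=t_{init}$, then the set $T_r:=\set{u\in\protocol\ \vert\ r\rightsquigarrow u}$ is a tracking strategy with $\rightsquigarrow$-smallest element $r$ as well. Indeed, the synchronization condition for $T_r$ is immediate from the definition of $T_r$, and passing from $T$ to the larger set $T_r\supseteq T$ only makes the cover condition easier to satisfy: each path $p_x$ witnessing the cover condition for $T$ still ends in $T\subseteq T_r$, while its disjointness requirement now concerns the smaller set $\protocol\setminus T_r\subseteq\protocol\setminus T$. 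Hence a tracking strategy for $\protocol$ exists if and only if there is some node $r\in\protocol$ such that $T_r$ satisfies the cover condition. As there are only $\card{\protocol}$ candidates for $r$ and each $T_r$ is obtained by a single graph search, it suffices to decide, for a fixed set $T$, whether $T$ satisfies the cover condition in polynomial time.

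To do this I would first bring the cover condition into a convenient normal form: one may assume that every witnessing path $p_x$ stops at the \emph{first} node of $T$ it reaches. Truncating the paths in this way preserves both the property of ending in $T$ and pairwise disjointness on $\protocol\setminus T$, since a truncated path uses only a subset of the nodes of the original one; moreover no intermediate node of a truncated path is a variable (variables have in-degree $0$). Thus the cover condition is equivalent to the existence of paths $(p_x)_{x\in\vars(\protocol)}$, where $p_x$ leads from $x$ to a node of $T$, all intermediate nodes of $p_x$ lie in $\protocol\setminus(\vars(\protocol)\cup T)$, and the $p_x$ are pairwise node-disjoint except possibly at their terminal $T$-node.

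This is now a node-capacitated disjoint-paths problem, which I would encode as a maximum-flow instance $N_T$ in the usual way: introduce a source $s$ and a sink $t^*$; add an edge $s\to x$ of capacity $1$ for each $x\in\vars(\protocol)$; split each $v\in\protocol\setminus(\vars(\protocol)\cup T)$ into an in-copy $v_{\mathit{in}}$ and an out-copy $v_{\mathit{out}}$ joined by an edge of capacity $1$ (this enforces node-disjointness off $T$), leaving the nodes of $\vars(\protocol)$ and of $T$ unsplit; turn every edge $a\to b$ of $\protocol$ into a capacity-$\infty$ arc from the out-copy of $a$ to the in-copy of $b$; and add a capacity-$\infty$ edge from each node of $T$ to $t^*$. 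I claim that $T$ satisfies the cover condition if and only if the maximum $s$--$t^*$ flow in $N_T$ has value $\card{\vars(\protocol)}$. The easy direction routes one unit of flow along $s\to x\to\cdots\to(\text{terminal }T\text{-node})\to t^*$ for each $x$, which respects all capacities precisely because the truncated paths are node-disjoint off $T$, while the cut consisting of the edges leaving $s$ bounds any flow by $\card{\vars(\protocol)}$. Conversely, an integral maximum flow of value $\card{\vars(\protocol)}$ saturates all edges out of $s$, decomposes (after discarding cycles) into $\card{\vars(\protocol)}$ paths from $s$ to $t^*$, one starting at each variable, each of which reaches $t^*$ through a $T$-node; truncating at the first such node recovers witnessing paths whose internal disjointness off $T$ is guaranteed by the capacity-$1$ node gadgets. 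The algorithm then iterates over the $\card{\protocol}$ choices of $r$, computes $T_r$ and a maximum flow in $N_{T_r}$ (replacing each $\infty$ by $\card{\vars(\protocol)}$, so all capacities are integers at most $\card{\protocol}$ and a standard augmenting-path algorithm runs in polynomial time), and accepts iff some such flow attains value $\card{\vars(\protocol)}$.

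The routine maximum-flow machinery is not the difficulty; the points requiring care are the two reductions. I would need to verify that restricting attention, for each choice of $t_{init}$, to the maximal candidate $T=T_r$ is genuinely without loss of generality, and that the ``truncate at the first $T$-node'' normal form is faithful and is modelled correctly by the node-splitting construction---in particular that intermediate nodes of truncated paths never lie in $\vars(\protocol)\cup T$, and that the nodes of $T$ must be left uncapacitated, since several witnessing paths may legitimately terminate at the same one.
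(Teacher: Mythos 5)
Your proposal is correct and follows essentially the same route as the paper: iterate over candidate nodes $t_{init}$, take the reachability set $T_{t_{init}}$, and decide the cover condition by a unit-capacity node-splitting maximum-flow construction (the paper contracts $T$ into a single sink where you attach a separate sink $t^*$, a cosmetic difference). You in fact spell out the reduction steps (maximality of $T_r$, truncation at the first $T$-node) more explicitly than the paper does.
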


\begin{proof}
  Clearly, a tracking strategy exists if and only if there is some note $t_{init}\in\protocol$ such that for the set $T_{t_{init}}=\set{u\in\protocol\ \vert\ t_{init}\rightsquigarrow u}$, the cover condition is satisfied. We test this property for each non-variable node $t_{init}$ with a standard application of network flow. For this, we modify $\protocol$ to obtain a network-flow instance as follows:
  
  \begin{itemize}
   \item Contract all nodes $T$ into a single node $t$, i.e., introduce a new node $t$, then for each edge $(u,v)$ with $u\notin T$ and $v\in T$, add an edge $(u,t)$, and finally remove all nodes in $T$.
   \item Introduce a node $X$ and edges $X\rightarrow x$ with capacity $1$ for each $x\in\var{\protocol}$.
   \item For each node $u\in\protocol$, introduce two new nodes $u_{in}$ and $u_{out}$. Replace all incoming edges $(v,u)$ with an edge $(v,u_{in})$ (with capacity $1$), and all outgoing edges $(u,v)$ with an edge $(u_{out},v)$ (also with capacity $1$). Add an edge $(u_{in},u_{out})$ with capacity $1$.
  \end{itemize}
  
  Then the cover property is satisfied if and only if there is a network flow from $X$ to $T$ with value $\card{\var{\protocol}}$:
  
  \begin{itemize}
   \item Assume that there is a network flow of value $\card{\var{\protocol}}$. Then the flow uses each outgoing edge of $X$, since there are $\card{\var{\protocol}}$ many of these, and each has capacity $1$. Since each internal edge $(u_{in},u_{out})$ has capacity $1$, it can be used only once in the network flow. Therefore, the flow must consist of $\card{\var{\protocol}}$ node-disjoint paths to $t$, which can be translated into node-disjoint paths into the set $T$. Hence the cover condition is satisfied.
   \item For the converse, assume that the cover condition is satisfied. Then there are node-disjoint paths from each $x\in\var{\protocol}$ to a node in $T$, which correspond to node-disjoint paths from each $x$ to $t$ in the modified protocol. By extending these paths with edges $(X,x)$, they clearly form a network flow from $X$ to $T$ of value $\card{\var{\protocol}}$.
  \end{itemize}

\end{proof}

\section{Security Proofs for Flat Protocols}

In this section we present the proof of the theorems from Section~\ref{sect:two cases flat secure protocols}. First, we give a simple criterion for the existence of tracking strategies for flat protocols.

\begin{proposition}\label{prop:flat tracking strategy characterization}
 Let $\protocol=g(f_1(\overrightarrow{x_1}),\dots,f_n(\overrightarrow{x_n}))$ be a flat protocol. Then a tracking strategy for $\protocol$ exists if and only if there is a function $t\colon\set{1,\dots,n}\rightarrow\vars{\protocol}$ such that 
 \begin{itemize}
  \item $\cup_{i=1}^n t(i)=\vars{\protocol}$,
  \item there is at most one $i$ with $\card{t(i)}\ge 2$.
 \end{itemize}
 In particular, if such a function exists, then $\protocol$ is insecure.
\end{proposition}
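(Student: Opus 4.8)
The plan is to classify which sets $T\subseteq\protocol$ can be tracking strategies in a flat protocol $\protocol=g(f_1(\overrightarrow{x_1}),\dots,f_n(\overrightarrow{x_n}))$ and match them with functions $t$ as in the statement. Throughout I use the normalization from the footnote: no variable has an edge directly to $g$, so every directed path starting at a variable node $x$ has the form $x\to f_i$ or $x\to f_i\to g$ for some $i$ with $x\in\overrightarrow{x_i}$; in particular such a path passes through exactly one of the $f_i$. I also use that a tracking node performs a web-service query, so $T$ contains no variable node, and I read $t$ as assigning to each $i$ a subset $t(i)\subseteq\overrightarrow{x_i}$ (as the union and cardinality conditions in the statement require).

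For the ``only if'' direction, let $T$ be a tracking strategy with $\leadsto$-smallest element $t_{init}$. First, $T$ contains at most one of the $f_i$: if $f_i,f_j\in T$ with $i\neq j$, then $t_{init}\leadsto f_i$ and $t_{init}\leadsto f_j$; but the only nodes from which $f_i$ is reachable are $f_i$ itself and the variables in $\overrightarrow{x_i}$, and $t_{init}$ is not a variable, so $t_{init}=f_i$, and likewise $t_{init}=f_j$, a contradiction. Now apply the cover condition of Definition~\ref{definition:generalized tracking strategy} to fix, for each $x\in\vars(\protocol)$, a path $p_x$ through a unique node $f_{i(x)}$, with $p_x,p_y$ sharing no node outside $T$ whenever $x\neq y$. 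Put $t(i)=\set{x\ \vert\ i(x)=i}$; then $\bigcup_i t(i)=\vars(\protocol)$, and $t(i)\subseteq\overrightarrow{x_i}$ because the first edge of $p_x$ is $x\to f_{i(x)}$. If $f_i\notin T$, then $f_i$ is a node outside $T$, hence lies on at most one $p_x$, so $\card{t(i)}\le1$. Since at most one $f_i$ lies in $T$, at most one index $i$ has $\card{t(i)}\ge2$.

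For the ``if'' direction, given $t$, choose a selector $\iota\colon\vars(\protocol)\to\set{1,\dots,n}$ with $x\in t(\iota(x))$ for every $x$ (possible since $\bigcup_i t(i)=\vars(\protocol)$). Each fiber $\iota^{-1}(i)$ is contained in $t(i)$, so at most one fiber has two or more elements. If no fiber does, take $T=\set{g}$ (with $t_{init}=g$) and route $p_x\colon x\to f_{\iota(x)}\to g$; the intermediate nodes $f_{\iota(x)}$ are pairwise distinct since every fiber has size $\le1$, so the $p_x$ are node-disjoint outside $T$. If the fiber over some $i_0$ has at least two elements (the unique such fiber), take $T=\set{f_{i_0},g}$ with $t_{init}=f_{i_0}$, routing $p_x\colon x\to f_{i_0}$ when $\iota(x)=i_0$ and $p_x\colon x\to f_{\iota(x)}\to g$ otherwise; disjointness outside $T$ holds because $f_{i_0}\in T$ and the remaining intermediate nodes lie in singleton fibers, and $f_{i_0}\leadsto g$ gives the synchronization condition. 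All edges $x\to f_{\iota(x)}$ exist since $x\in t(\iota(x))\subseteq\overrightarrow{x_{\iota(x)}}$. This yields a tracking strategy, establishing the equivalence; the final ``in particular'' is then immediate from Theorem~\ref{theorem:tracking strategies work}.

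The main obstacle is the bookkeeping in the ``only if'' direction: showing that the synchronization condition forces $T$ to contain at most one of the $f_i$, and then translating ``node-disjoint outside $T$'' into the bound $\card{t(i)}\le1$ for $f_i\notin T$ while allowing that single $f_i\in T$ to carry arbitrarily many variables. The two normalizations---removing direct variable-to-$g$ edges and excluding variable nodes from $T$---are what keep the path structure simple enough for this argument to go through cleanly; without them one would also have to account for paths of length $0$ and $1$ terminating at $g$.
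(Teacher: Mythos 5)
Your proof is correct and follows essentially the same route as the paper, which simply asserts the direct correspondence between flat tracking strategies $t$ and tracking sets $T$ (with $t_{init}$ the unique $f_i$ carrying $\card{t(i)}\ge2$, respectively a default node otherwise) and then invokes Theorem~\ref{theorem:tracking strategies work}. You merely spell out both directions of that correspondence in detail, making explicit the implicit conventions (no variable nodes in $T$, $t(i)\subseteq\overrightarrow{x_i}$, no variable-to-$g$ edges) that the paper leaves unstated.
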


For a flat protocol, we also call a function $t$ as above a \emph{flat tracking strategy}. For flat protocols, flat tracking strategies and tracking strategies directly correspond to each other with $t_{\text{init}}=f_i$, where $i$ is the unique number with $\card{t(i)}\ge2$ if such an $i$ exists, and $i=1$ otherwise. This, together with an application of Theorem~\ref{theorem:tracking strategies work} proves Proposition~\ref{prop:flat tracking strategy characterization}.

\subsection{Proof of Theorem~\ref{theorem:disjoint variables}}\label{sect:secure flat:disjoint}

\theoremdisjointvariables*

\begin{proof}
 \resetprooffacts
  The main idea of the proof is as follows: To show that every adversary-strategy $\Pi$ fails, we construct, depending on $\Pi$, a global session $(S,\sigma)$ and then for each local session $s\in S$, a global session $(S_s,\sigma_s)$ such that the adversary-view for both sessions is identical, but $s\notin S_s$. If $\Pi$ fails on $(S,\sigma)$, then $\Pi$ is not a winning strategy. If $\Pi$ succeeds on $(S,\sigma)$, then $\Pi$ prints a correct local session $s\in S$. Since the global session $(S_s,\sigma_s)$ is indistinguishable from $(S,\sigma)$ for the adversary, the strategy $\Pi$ prints the same session $s$ for the global session $(S_s,\sigma_s)$. Since $s\notin S_s$, the strategy $\Pi$ thus fails on $(S_s,\sigma_s)$, and we have shown that $\Pi$ is not a winning strategy.
  
 Hence let $\Pi$ be an adversary strategy. Without loss of generality, we assume that $\card{\var{I_1}}\ge\card{\var{I_2}}$, and thus $\card{\var{I_1}}=\card{\var{I_2}}+d$ for some $d\ge0$. We fix some terminology for our construction:
 \begin{itemize}
  \item For $j\in\set{1,2}$, an \emph{$I_j$-assignment} is a function $a\colon\var{I_j}\rightarrow\set{0,1}$. Note that a local session for the protocol $\protocol$ is a union of an $I_1$-assignment and an $I_2$-assignment, and, since $\var{I_1}\cap\var{I_2}=\emptyset$, any such union is a local session.
  \item For a local session $s$ and $j\in\set{1,2}$, the \emph{$I_j$-component} of $s$ is the restriction of $s$ to $\var{I_j}$. 
 \end{itemize}
 
 In the proof, we only consider global sessions with the following properties:
  
 \begin{itemize}
  \item there are exactly $2^{\card{\var{I_1}}}$ local sessions,
  \item for each $I_1$-assignment $a$, there is exactly one local session $s$ such that the $I_1$-component of $s$ is $a$,
  \item for each $I_2$-assignment $a$, there are exactly $2^d$ local sessions $s$ such that the $I_2$-component of $s$ is $a$,
  \item the schedule first performs all queries to $f_1$, then all questions to $f_2$, etc, the $g$-questions are scheduled last. For each service, the questions are scheduled in lexicographical order of the arguments.
 \end{itemize}

 In particular, for each $i\in\set{1,\dots,n}$, the $f_i$-queries are \emph{all} possible assignments to the variables in $\overrightarrow{x_i}$, asked in lexographical order, each assignment possibly several times. Let $A_1$ be the set of $I_1$-assignments, and let $A_2$ be the set containing $2^d$ distinct copies of each $I_2$-assignment (hence $\card{A_1}=\card{A_2}=2^{\card{\var{I_1}}}$). All global sessions satisfying the above generate the same adversary-view until the first $g$-query is performed. In particular, the strategy $\Pi$ will return the same answers to the $f_i$-questions for all global sessions with the above structure. Since adding edges to an insecure protocol leaves the protocol insecure, we can assume that for $i\in I_j$, the input variables to $f_i$ in $\protocol$ is the entire set $\var{I_j}$. Hence we use the following notation: For an element $a\in A_j$ and $i\in I_j$, with $f_i(a)$, we denote the values returned by the adversary for the $f_i$-question when asked for the assignment $a$, and stress that these values are identical for all global sessions following the above pattern. 
 
 For $j\in\set{1,2}$, we say that elements $a_1,a_2\in A_j$ are $j$-equivalent and write $a_1\sim_j a_2$ if $f_i(a_1)=f_i(a_2)$ for all $i\in I_j$.
 We note that the adversary can give different replies to different copies of the same $I_2$-assignment, i.e., even if $a_1$ and $a_2$ are copies of the the same $I_2$-assignment, then $f_i(a_1)\neq f_i(a_2)$ may still hold for $i\in I_2$, and hence for these elements we may have $a_1\not\sim_2 a_2$. This issue does not occur for elements of $A_1$, since $A_1$ contains only one copy of each $I_1$-assignment.
 
 Since each $\sim_j$-equivalence class is determined by the values of the functions $f_i$ for $i\in I_j$, there are at most $2^{\card{I_j}}$ $\sim_j$-equivalence classes. Our construction relies on the existence of a sufficient number of elements $a\in A_j$, for both $j\in\set{1,2}$, such that there is an element $a'\in A_j$ with $a\sim_ja'$, and $a'$ is a copy of a different $I_j$-assignment than $a$. We call such an element \emph{$j$-ambigious}, since the $I_j$-assignment represented by $a$ is not uniquely determined by its $\sim_j$-equivalence class. These elements are helpful for the following reason: When the adversary receives a question for the service $g$ from a given local session, this question contains the answers given by the adversary for all $f_i$-question of the same local session, and hence the adversary ``sees'' the $\sim_j$-equivalence class of both $I_j$-components of this local session. If one of these components is $j$-ambigious, the adversary cannot determine the $I_j$-component directly from the inputs to $g$, and hence cannot directly reconstruct the local session from the inputs received for the $g$-question.
  
 We now show that there are ``enough'' ambigious elements for our purposes:
  
 \begin{prooffact}\label{prooffact:number of ambigious elements}
  For each $j\in\set{1,2}$, more than half of the elements in $A_j$ are $j$-ambigious.
 \end{prooffact}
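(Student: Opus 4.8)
The plan is to bound from above the number of elements of $A_j$ that are \emph{not} $j$-ambiguous and compare this with $\card{A_j}=2^{\card{\var{I_1}}}$. Write $c_j$ for the number of copies of each $I_j$-assignment contained in $A_j$, so that $c_1=1$, $c_2=2^d$, and $\card{A_j}=c_j\cdot 2^{\card{\var{I_j}}}$ in both cases; this lets me treat $j=1$ and $j=2$ uniformly.

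The first step is to characterise the non-$j$-ambiguous elements: an element $a\in A_j$ fails to be $j$-ambiguous exactly when every element of its $\sim_j$-class is a copy of the same $I_j$-assignment as $a$ --- call such a class \emph{pure} (for $j=1$ this just means a singleton class). Thus every non-$j$-ambiguous element lies in a pure class, and a pure class belonging to an assignment $v$ has at most $c_j$ elements, since $A_j$ contains only $c_j$ copies of $v$ altogether. Since the $\sim_j$-class of an element is determined by the values $f_i(a)$ for $i\in I_j$, there are at most $2^{\card{I_j}}$ classes, hence at most $2^{\card{I_j}}$ pure classes; so the number of non-$j$-ambiguous elements is at most $c_j\cdot 2^{\card{I_j}}$.

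The second step is the comparison. From $\card{\var{I_j}}>\card{I_j}\ge 1$ we get $\card{\var{I_j}}\ge\card{I_j}+1$, hence $c_j\cdot 2^{\card{I_j}}\le c_j\cdot 2^{\card{\var{I_j}}-1}=\card{A_j}/2$, so at least half of $A_j$ is $j$-ambiguous. To upgrade this to the strict inequality claimed, I would suppose the non-ambiguous count equals $\card{A_j}/2$ and trace back which of the above inequalities must then be equalities: this forces $\card{\var{I_j}}=\card{I_j}+1$, and it forces all $2^{\card{I_j}}$ equivalence classes to be pure and of full size $c_j$, hence in bijection with the set of all $I_j$-assignments; but there are $2^{\card{\var{I_j}}}=2^{\card{I_j}+1}$ such assignments, a contradiction. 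I expect this final strictness argument to be the only delicate point --- the characterisation of non-ambiguous elements and the class count are routine.
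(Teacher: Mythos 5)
Your proof is correct and follows essentially the same counting argument as the paper: bound the non-ambiguous elements by (number of $\sim_j$-classes) $\times$ (copies per $I_j$-assignment) $\le 2^{\card{I_j}}\cdot c_j$ and compare with $\card{A_j}$ using $\card{\var{I_j}}>\card{I_j}$. The only difference is how strictness is obtained---the paper notes that at least one class must be impure and so subtracts one class from the count, while you run an equality-case analysis (all classes pure of full size, contradicting the number of $I_j$-assignments)---and both versions are sound.
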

 
 \begin{proof}
   This follows with a simple counting argument. To treat the cases $j=0$ and $j=1$ uniformly, let $d_0=0$ and $d_1=d$. Since the number of $\sim_j$-equivalence classes is $2^{\card{I_j}}$, and each $I_j$-assignment occurs $2^{d_j}$ times in $A_j$, there can be at most $2^{\card{I_j}}\cdot 2^{d_j}=2^{\card{I_j}+d_j}$ non-$j$-ambigious elements. In particular, some elements are $j$-ambigious, and thus there is at least one $\sim_j$-equivalence class containing elements that represent different $I_j$-assignments. Therefore, there can be at most $2^{\card{I_j}}-1$ equivalence classes containing non-$j$-ambigious elements, and therefore there are at most $(2^{\card{I_j}}-1)\cdot 2^{d_j}\leq2^{\card{I_j}+d_j}-1$ non-$j$-ambigious elements. Since $\card{A_j}=2^{\card{\var{I_j}}+d_j}$ and $\card{\var{I_j}}=\card{I_j}+e$ for some $e>0$, it follows that $\card{A_j}=2^{\card{I_j}+d_j+e}$, hence less than half of the elements of $A_j$ are non-$j$-ambigious as claimed.
 \end{proof}
 
  We now construct the global session $(S,\sigma)$. Since $\var{I_1}\cap\var{I_2}=\emptyset$, a global session matching the above criteria is uniquely determined by a bijection between $A_1$ and $A_2$, and two such global sessions result in the same view for the adversary if and only if the same questions are asked for the service $g$. We represent the bijection as a table with $2^{\card{\var{I_1}}}$ rows and two columns, where the first column contains elements from $A_1$, and the second column contains elements from $A_2$. The bijection determining the global session then relates the elements that occur in the same row. The set $S$ of local sessions can simply be read off the rows of the table in the natural way. We therefore identify local sessions $s\in S$ and rows in the table. The $g$-queries resulting from the session represented in the table can also be read off the table: For a local session $s$ consisting of the elements $a_1\in A_1$ and $a_2\in A_2$, the input to the $g$-query from that session is the tuple $(f_1(a_1),\dots,f_k(a_1),f_{k+1}(a_2),\dots,f_n(a_2))$, if $I_1=\set{1,\dots,k}$ and $I_2=\set{k+1,\dots,n}$.
  
  For the global session $(S,\sigma)$, the elements of $A_1$ and $A_2$ are now distributed in the table as follows:
 
 \begin{itemize}
  \item In the first column, we first list all elements of $A_1$ that are $1$-ambigious, ordered by equivalence class, and then all elements that are not $1$-ambigious. Due to Fact~\ref{prooffact:number of ambigious elements}, more than half of the rows contain $1$-ambigious elements in their first column.
  \item We distribute the $A_2$-elements in the second column such that the rows in the lower half and the last row of the upper half of the table contain only $2$-ambigious elements in their second column. This is possible since due to Fact~\ref{prooffact:number of ambigious elements}, a majority of the elements in $A_2$ are $2$-ambigious. We re-order the elements in the upper half of the table such that no row in the upper half has a neighboring row with an element representing the same $I_2$-assignment. This is possible since $\card{\var{I_2}}>\card{I_2}\ge1$, and thus there are at least $4$ different $I_2$-assignments, and hence no $I_2$-assignment can occur in more than half of the rows in the upper half of the table.
 \end{itemize}

  By construction, the ambigious elements make up at least half of each column. Hence each row in the table contains an $1$-ambigious element in the first column (this is true at least for the upper half of the rows), or a $2$-ambigious element in the second column (this is true at least for the lower half of the table). 
 
 We now construct, for each $s\in S$, the session $(S_s,\sigma_s)$. For this it suffices to define the set $S_s$, the schedule $\sigma_s$ is then determined by the above criteria. The set $S_s$ is obtained from the above table-representation of $S$ as follows:
 
 \begin{itemize}
   \item If the row $s$ has a $2$-ambigious $A_2$-element $a$ in the second column, we swap the positions of $a$ and an element $a'$ that represents a different $I_2$-assignment than $a$, and for which  $a'\sim_{\Pi,2}a$ holds. Such an element exists since $a$ is $2$-ambigious. Clearly, the $I_1$-component of $s$ is paired with a different $I_2$-component after the swap, and hence the local session $s$ is removed from $S$.
   \item Otherwise, $s$ appears in the upper half of the table, and is not the last row in the upper half of the table. In particular, $s$ has a $1$-ambigious $A_1$-element $a$ in the first column. Since the $A_1$-elements appear ordered by equivalence class, one of the rows directly above or below $s$ contains an element $a'\neq a$ with $a'\sim_1a$ in the first column. Since $s$ is not the last row of the upper half of the table, $a'$ also appears in the upper half of the table. We now swap $a$ and $a'$ in the first column. By distribution of the $A_2$-elements in the upper half of the second column, the two rows involved in the exchange contain elements corresponding to different $I_2$-assignments, and since each $I_1$-assignment appears only once in $A_1$, we also know that $a$ and $a'$ represent different $I_1$-assignments. Therefore, as above, the $I_1$-component of $s$ is paired with a different $I_2$-component after the swap, and hence again, the local session $s$ is removed from $S$.
 \end{itemize}

 In both cases, we removed one occurrence of the local session $s$ from $S$. Since each $I_1$-component appears only once in $S$, the local session $s$ also appears only once in $s$, and hence we removed the only appearance of $s$ from $S$. Therefore, the new global session $(S_s,\sigma_s)$ does not contain any local session $s$ anymore. Since the session $(S_s,\sigma_s)$ is obtained from $(S,\sigma)$ by exchanging some element from $A_j$ with an $\sim_j$-equivalent element, each row in the table generates the same questions to the $g$-service as the session $(S,\sigma)$, and therefore, the set of $g$-queries in the global session $(S_s,\sigma_s)$ is the same as in the session $(S,\sigma)$. Since the individual queries are performed in lexicographical order, the adversary-view of $(S,\sigma)$ and $(S_s,\sigma_s)$ is identical, which concludes the proof.
\end{proof}

\subsection{Proof of Theorem~\ref{theorem:each function has private variable}}\label{sect:secure flat:private variables}

In this section, we show that a flat protocol $\protocol=g(f_1(x^1_1,\dots,x^1_{m_1}),f_2(x^2_1,\dots,x^2_{m_2}),\linebreak\dots,f_n(x^n_1,\dots,x^n_{m_n}))$ where every $f_i$ has at least one private variable is insecure if and only if there is a tracking strategy for $\protocol$. We know that tracking strategies always imply insecurity of the protocol, hence the interesting part of the result is to show that in the absence of a tracking strategy, the protocol is secure. For our proof, we introduce some notation: For each $f_i$, let $X_i$ be the set of private variables for $f_i$. Clearly, $X_i\cap X_j=\emptyset$ if $i\neq j$. Let $V$ be the set of non-private variables, i.e., $V=\var{\protocol}\setminus\cup_{i=1}^n X_i$.
 
We first cover a number of ``simple'' cases:

\begin{lemma}\label{lemma:simple cases}
  Let $\protocol$ be a flat protocol for which no tracking strategy exists and which satisfies one of the following conditions:
  \begin{enumerate}
   \item\label{lemma:simple cases:all X1 singletons} There is some $i$ with  $\card{X_i}\ge2$,
   \item \label{lemma:simple cases:not all variables from v seen by the same fi} there is an $i\in\set{1,\dots,n}$ with $V\subseteq\var{f_i}$,
   \item $\card{V}\leq 2$.
  \end{enumerate}
  Then $\protocol$ is secure.
\end{lemma}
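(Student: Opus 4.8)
The plan is to derive all three cases from Theorem~\ref{theorem:disjoint variables}, in Cases~1 and~2 after passing to a restriction $\restr{\protocol}{X_r}$ via Proposition~\ref{prop:partial insecurity characterization}, and in Case~3 by applying it to $\protocol$ directly; the absence of a tracking strategy is used, through Proposition~\ref{prop:flat tracking strategy characterization}, to produce the variable-disjoint two-part decomposition Theorem~\ref{theorem:disjoint variables} needs. Two preliminaries come first. (i) If some variable $z$ occurs in no $\var{f_i}$, the adversary never observes it: for the single-session global session $\set{I}$, replacing $I$ by the local session obtained from it by flipping the value of $z$ yields a view-equivalent global session not containing $I$, so no strategy succeeds on both, and $\protocol$ is secure. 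Hence we may assume every variable is an input of some $f_i$; in particular every $z\in V$ lies in $\var{f_i}\cap\var{f_j}$ for distinct $i,j$. (ii) Unwinding Proposition~\ref{prop:flat tracking strategy characterization}: since a private variable of $f_k$ can only be forwarded by $f_k$ itself, and all but one function in a flat tracking strategy may absorb at most one variable, a tracking strategy for $\protocol$ exists iff there is an index $i_0$ with $\card{X_k}=1$ for every $k\neq i_0$ and $V\subseteq\var{f_{i_0}}$. Equivalently, no tracking strategy exists iff for every $i$, either $\card{X_k}\geq 2$ for some $k\neq i$, or $V\not\subseteq\var{f_i}$.

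For Case~1, fix $i_0$ with $\card{X_{i_0}}\geq 2$ and private variables $x,x'\in X_{i_0}$. Applying (ii) at $i=i_0$ gives two sub-cases. If $\card{X_k}\geq 2$ for some $k\neq i_0$, take $y,y'\in X_k$ and $X_r=\set{x,x',y,y'}$; the only functions with an input in $X_r$ are $f_{i_0}$ and $f_k$, so $\restr{\protocol}{X_r}$ is --- after discarding the function nodes that have lost all inputs --- the protocol $g(f_{i_0}(x,x'),f_k(y,y'))$, which is secure by Example~\ref{example:simple secure protocol} (or Theorem~\ref{theorem:disjoint variables} with the two singleton parts). If instead $V\not\subseteq\var{f_{i_0}}$, pick $z\in V\setminus\var{f_{i_0}}$; by (i), $Z=\set{m\mid z\in\var{f_m}}$ satisfies $\card Z\geq 2$ and $i_0\notin Z$, and each $f_m$, $m\in Z$, has a private variable $x_m$. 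With $X_r=\set{x,x',z}\cup\set{x_m\mid m\in Z}$ the restriction collapses to $g\bigl(f_{i_0}(x,x'),(f_m(z,x_m))_{m\in Z}\bigr)$, and Theorem~\ref{theorem:disjoint variables} applies with $I_1=\set{i_0}$ ($2$ variables, $1$ function) and $I_2=Z$ ($\card Z+1$ variables, $\card Z$ functions), the parts being variable-disjoint. In either sub-case $\restr{\protocol}{X_r}$ is secure, so $\protocol$ is $X_r$-secure and hence secure. (One may equally avoid the restriction and embed the protocol $g(f_{i_0}(x,x'),f_k(y,y'))$ (first sub-case) or $g(f_{i_0}(x,x'),(f_m(z,x_m))_{m\in Z})$ (second sub-case) into $\protocol$ and invoke Theorem~\ref{theorem:embedding}.)

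Case~2 reduces to Case~1: if $V\subseteq\var{f_{i_0}}$ then (ii) at $i=i_0$ forces $\card{X_k}\geq 2$ for some $k\neq i_0$. For Case~3, if $\card{X_i}\geq 2$ for some $i$ we are again in Case~1; otherwise every $X_i$ is a singleton $\set{x_i}$ and (ii) gives $V\not\subseteq\var{f_i}$ for all $i$. Since $\card V\leq 1$ would put $V$ inside some $\var{f_i}$ by (i), we get $V=\set{z_1,z_2}$ with no $f_i$ seeing both. Put $I_1=\set{i\mid z_1\in\var{f_i}}$ and $I_2=\set{1,\dots,n}\setminus I_1$: both are nonempty since $z_1,z_2$ are used, $z_1\notin\var{I_2}$ and $z_2\notin\var{I_1}$, and as each $\var{f_i}$ equals $\set{x_i}$ together with at most one of $z_1,z_2$, the parts are variable-disjoint with $\card{\var{I_1}}=\card{I_1}+1$ and $\card{\var{I_2}}=\card{I_2}+1$. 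Theorem~\ref{theorem:disjoint variables} applies to $\protocol$ itself.

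The only delicate point is the bookkeeping in Case~1: one must verify that $\restr{\protocol}{X_r}$ really reduces to a protocol satisfying both the variable-disjointness and the strict-cardinality hypotheses of Theorem~\ref{theorem:disjoint variables} --- in particular that no stray function $f_p$ with $p\notin\set{i_0}\cup Z$ keeps an input, and that the surviving functions which lose all inputs can be deleted without affecting security (this last step is what the embedding alternative sidesteps, at the cost of checking the three conditions of the embedding definition). Everything else follows mechanically from (i), (ii), Example~\ref{example:simple secure protocol}, and Theorems~\ref{theorem:disjoint variables} and~\ref{theorem:embedding}.
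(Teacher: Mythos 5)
Your argument is correct in substance and runs on the same engine as the paper's proof --- Proposition~\ref{prop:flat tracking strategy characterization} to exploit the absence of a tracking strategy, and Theorem~\ref{theorem:disjoint variables} (plus Example~\ref{example:simple secure protocol}) to conclude security --- but the mechanics of Cases~1 and~2 are organized differently. The paper orders the $X_i$ by size, disposes of the case of two indices with $\card{X_i}\ge2$ by deleting \emph{all} of $V$ (justified by letting the adversary simulate the missing inputs with $0$-values) and then applies Theorem~\ref{theorem:disjoint variables} with the split $I_1=\set{1}$, $I_2=\set{2,\dots,n}$ directly to $\protocol$, treating Case~2 as an outright contradiction with the no-tracking-strategy hypothesis; you instead pass to a hand-picked set $X_r$ of three to five variables via Proposition~\ref{prop:partial insecurity characterization} (or embed the corresponding tiny secure flat protocol via Theorem~\ref{theorem:embedding}) and reduce Case~2 to Case~1, while your Case~3 is essentially the paper's argument verbatim. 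Your variant even has an advantage: since every variable you retain except $z$ is private and every function seeing $z$ is kept, the variable-disjointness hypothesis of Theorem~\ref{theorem:disjoint variables} is verified explicitly, whereas the paper's Case~1 application with $I_1=\set{1}$, $I_2=\set{2,\dots,n}$ quietly passes over possible shared $V$-variables between $f_1$ and the other $f_j$. The price is the step you flag yourself: in $\restr{\protocol}{X_r}$ the function nodes that have lost all variable inputs remain, and they must be removed before Theorem~\ref{theorem:disjoint variables} applies (left in place they can destroy the counting condition $\card{\var{I_j}}>\card{I_j}$, and such a node could in principle be abused as a session counter, so its harmlessness does require an argument). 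This is a real but minor hole of exactly the kind the paper itself fills informally --- in the proof of Theorem~\ref{theorem:embedding}, nodes without variable ancestors are deleted with the remark that the adversary can simulate their queries herself --- and your embedding alternative closes it cleanly: the three conditions of the embedding definition do hold for your sub-protocols, precisely because all chosen variables other than $z$ are private and all functions with $z$ as input are included, so any path in $\protocol$ linking two of the chosen variables passes through the image of the embedded protocol.
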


\begin{proof}
  Without loss of generality assume that $\card{X_1}\ge\card{X_2}\ge\dots\ge\card{X_n}\ge1$. We first consider the case $\card{X_2}\ge 2$. In this case, we also have $\card{X_1}\ge2$. Clearly, no tracking strategy can exist. To show that the protocol is secure, assume indirectly that it is insecure. We drop all variables from $V$, the resulting protocol $\protocol'$ is still insecure. (This can easily be seen as a consequence of Theorem~\ref{theorem:embedding}, or simply by observing that the adversary can simulate all ``missing'' variables by using $0$-values). However, we can choose $I_1=\set{1}$ and $I_2=\set{2,\dots,n}$, then security of the protocol follows from Theorem~\ref{theorem:disjoint variables}, since (using the notation of the theorem), $\var{I_1}=X_1$ and $\var{I_2}=X_2\cup\dots\cup X_n$, with $\card{\var{I_1}}\ge2$, and $\card{\var{I_2}}\ge n$. Therefore, for the remainder of the proof we assume that $\card{X_2}=\dots=\card{X_n}=1$.
  
  Next, we consider the case $V\subseteq\var{f_1}$. In this case, a flat tracking strategy exists with $t(1)=X_1\cup V$ and $t(i')=X_{i'}$ for all $i'\neq i$. The theorem thus follows with Proposition~\ref{prop:flat tracking strategy characterization}. From now on, we assume that $V\not\subseteq\var{f_1}$, in particular, this implies $V\neq\emptyset$.
  
  We now prove the lemma:

  \begin{enumerate}
    \item In this case, it follows that $\card{X_1}\ge2$. Due to the above, we can assume that there is a variable $v\in V\setminus\var{f_1}$. We use Theorem~\ref{theorem:disjoint variables} to prove security of $\protocol$, with $I_1=\set{1}$ and $I_2=\set{2,\dots,n}$. We have that $\var{I_1}\supseteq X_1$ and thus $\card{\var{I_1}}>\card{I_1}=1$, and $\var{I_2}\supseteq\set{v}\cup \cup_{i=2}^n X_i$, hence $\card{\var{I_2}}\ge n>n-1=\card{I_2}$.
    \item Due to point~\ref{lemma:simple cases:all X1 singletons}, we can assume that $\card{X_i}=1$ for $1\leq i\leq n$. If there is an $i\in\set{1,\dots,n}$ with $V\subseteq\var{f_i}$, then there is a tracking strategy $t$ for $\protocol$, defined as $t(i)= X_i\cup V$ and $t(i')=X_{i'}$ for $i'\neq i$, but from the prerequisites we know that there is no tracking strategy for $\protocol$.
    \item Due to the above, we can assume that $V\ge 2$, since otherwise there will be some $i$ with $V\subseteq\var{f_i}$. Thus assume that $\card{V}=2$, and let $V=\set{v_1,v_2}$. Again due to point~\ref{lemma:simple cases:not all variables from v seen by the same fi}, we know that there is no $i$ with $v_1,v_2\in\var{f_i}$. We show that the protocol is secure using Theorem~\ref{theorem:disjoint variables}. To apply the theorem, we define $I_1=\set{i\in\set{1,\dots,n}\ \vert\ v_1\in\var{f_i}}$, and $I_2=\set{1,\dots,n}\setminus I_1$. Note that $v_2\in\var{I_2}$, and hence $\var{I_j}=\cup_{i\in I_j}X_i\cup\set{v_j}$ for both $j\in\set{1,2}$. In particular, $\card{\var{I_j}}>\card{I_j}$ holds for both $j$. Security of the protocol thus follows from Theorem~\ref{theorem:disjoint variables}.
  \end{enumerate}
\end{proof}

By Lemma~\ref{lemma:simple cases}, we can in particular assume that for each $f_i$, there is a single private variable. For protocols of this form, there is a trivial characterization of the cases in which a tracking strategy exists:

\begin{lemma}\label{lemma:tracking strategies for private variables}
 If $\protocol$ is a protocol where each $f_i$ has exactly one private variable, then there is a tracking strategy for $\protocol$ if and only if there is some $i$ with $V\subseteq\var{f_i}$.
\end{lemma}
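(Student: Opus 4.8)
The plan is to read off this characterization directly from the flat tracking strategy characterization of Proposition~\ref{prop:flat tracking strategy characterization}: a tracking strategy for the flat protocol $\protocol$ exists if and only if there is a function assigning to each $i\in\set{1,\dots,n}$ a set $t(i)\subseteq\var{f_i}$ with $\bigcup_{i=1}^{n}t(i)=\var{\protocol}$ such that at most one index $i$ has $\card{t(i)}\ge 2$. Write $x_i$ for the unique private variable of $f_i$, so $\var{\protocol}=\set{x_1,\dots,x_n}\cup V$, the sets $\set{x_1,\dots,x_n}$ and $V$ are disjoint, and $x_i\notin\var{f_j}$ whenever $i\neq j$. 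If $V=\emptyset$, then both sides of the claimed equivalence hold trivially, so I assume $V\neq\emptyset$ throughout.

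For the implication from right to left, assume $V\subseteq\var{f_{i_0}}$ for some $i_0$. I would then exhibit the flat tracking strategy $t(i_0)=\set{x_{i_0}}\cup V$ and $t(j)=\set{x_j}$ for $j\neq i_0$: each $t(j)$ is a subset of $\var{f_j}$ (using $x_j\in\var{f_j}$ and the hypothesis $V\subseteq\var{f_{i_0}}$), the union of all $t(j)$ equals $\set{x_1,\dots,x_n}\cup V=\var{\protocol}$, and only $t(i_0)$ has more than one element, so Proposition~\ref{prop:flat tracking strategy characterization} produces a tracking strategy. For the converse, fix a function $t$ with the properties above. The key point is that a private variable is forced to land at its own function: $x_j\in t(k)\subseteq\var{f_k}$ gives $x_j\in\var{f_k}$, hence $k=j$, so $x_j\in t(j)$ for every $j$ and each $t(j)$ is nonempty. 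Now take an arbitrary $v\in V$ and let $k$ be an index with $v\in t(k)$. Then $t(k)$ also contains $x_k$, and $x_k\neq v$ since $x_k$ is private while $v$ is not, so $\card{t(k)}\ge 2$. As at most one index may carry a set of size at least $2$, there is a single such index $i_0$, and $v\in t(i_0)$ for every $v\in V$; hence $V\subseteq t(i_0)\subseteq\var{f_{i_0}}$, which is what we wanted.

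I do not expect a serious obstacle, since once Proposition~\ref{prop:flat tracking strategy characterization} is available the argument is pure bookkeeping. The one point that must be stated carefully is the constraint $t(i)\subseteq\var{f_i}$ that is implicit in the notion of a flat tracking strategy: in a depth-$2$ protocol the path $p_x$ from the cover condition of Definition~\ref{definition:generalized tracking strategy} can only end at a tracking node of the form $f_i$ with an edge $x\to f_i$, so the variables that can be ``routed to'' $f_i$ are exactly those in $\var{f_i}$. This is essentially the content of the correspondence between flat tracking strategies and tracking strategies noted after Proposition~\ref{prop:flat tracking strategy characterization}, and the only genuinely special situation, $V=\emptyset$, is handled at the outset.
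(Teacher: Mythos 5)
Your proposal is correct and follows essentially the same route as the paper: both directions are read off from Proposition~\ref{prop:flat tracking strategy characterization}, with the private variables forcing $x_i\in t(i)$ so that the single index allowed to have $\card{t(i)}\ge2$ must absorb all of $V$. Your explicit handling of the implicit constraint $t(i)\subseteq\var{f_i}$ and of the $V=\emptyset$ case only spells out what the paper's shorter argument leaves tacit.
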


\begin{proof}
 This easily follows from Proposition~\ref{prop:flat tracking strategy characterization}: If an $i$ satisfying the condition exists, a flat tracking strategy $t$ is given by $t(i)=\var{f_i}$, and $t(i')=X_{i'}$ for all $i'\neq i$. On the other hand, assume that a flat tracking strategy $t$ exists. Since each $X_i$ contains exactly one private variable for $f_i$, we know that $X_i\subseteq t(i)$ for all $i$. Since there is at most one $i$ with $\card{t(i)}>1$, for this $i$ we must have $V\subseteq\var{f_i}$. If there is no such $i$, then $V=\emptyset\subseteq\var{f_i}$.
\end{proof}

In the security proof, it will be convenient to make some assumptions about the structure of the protocol $\protocol$. We call protocols meeting these assumptions a protocol in \emph{normal form}. The main part of the proof is showing that protocols in normal form for which no tracking strategy exists are secure, from this we will later easily deduce that the result also holds for all flat protocols where each service-node has a private variable.

\begin{definition}
 A flat protocol $\protocol$ where each $f_i$ has a single private variable is in \emph{normal form}, if the following conditions are satisfied:
 \begin{itemize}
  \item For each $i\in\set{1,\dots,n}$, we have that $\var{f_i}=X_i\cup \left(V\setminus{v_i}\right)$ for some $v_i\in V$,
  \item for each $v\in V$, there is some $i$ with $v\notin\var{f_i}$,
  \item $\card{V}\ge3$.
 \end{itemize}
 In particular, in this case we have $\card{\var{f_i}}=\card{V}$ for all $i$ and $n\ge\card{V}$.
\end{definition}

The main ingredient of the proof of our classification is the result that the classification holds for protocols in normal form:

\begin{theorem}\label{theorem:classification for private variable normal form}
 Let $\protocol$ be a flat protocol where each $f_i$ has a private variable and which is in normal form. Then $\protocol$ is insecure if and only if a tracking strategy for $\protocol$ exists.
\end{theorem}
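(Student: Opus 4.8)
The plan is to prove the only non-trivial direction. By Lemma~\ref{lemma:tracking strategies for private variables}, a protocol in normal form admits no tracking strategy, since $v_i\notin\var{f_i}$ means that no $f_i$ sees all of $V$; hence Theorem~\ref{theorem:tracking strategies work} gives the ``insecure $\Leftarrow$ tracking strategy'' implication vacuously, and what remains is to show that \emph{every} flat protocol in normal form is secure. So I fix an adversary strategy $\Pi$ and, following the blueprint of the proof of Theorem~\ref{theorem:disjoint variables}, I build one canonical global session $(S,\sigma)$ and, for each local session $I\in S$, a companion global session $(S_I,\sigma_I)$ that yields the same adversary view as $(S,\sigma)$ but with $I\notin S_I$. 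Since $\Pi$ must print the same assignment on indistinguishable runs, $\Pi$ either fails on $(S,\sigma)$ or on the $(S_I,\sigma_I)$ corresponding to the session it prints on $(S,\sigma)$; either way $\Pi$ is not winning.

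First I would neutralise the $f_i$-queries. I fix the schedule to perform all $f_1$-queries, then all $f_2$-queries, \dots, then all $f_n$-queries, and finally all $g$-queries, each batch in lexicographic order of the arguments, and restrict attention to global sessions whose local sessions make every argument tuple of every $f_i$ appear equally often. A convenient choice is to take $S$ to consist of all assignments to $\var{\protocol}=\set{x_1,\dots,x_n}\cup V$ in which an even number of the private variables $x_1,\dots,x_n$ are $1$: because $n\ge\card{V}\ge3$, the projection of $S$ onto the $\card{V}$ coordinates $\var{f_i}=\set{x_i}\cup(V\setminus\set{v_i})$ is the full cube and is balanced, so each $f_i$-argument is queried $2^{n-1}$ times. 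Then the adversary's view up to the first $g$-query is the same for all global sessions of this shape, so $\Pi$'s replies to the $f_i$-queries form a fixed bit sequence; and since we get to choose, within each lexicographic block, which local session occupies which query slot, $\Pi$'s effect on $f_i$ is captured by the numbers $z_i(a)$ of sessions with $f_i$-argument $a$ that are answered $0$ — this is what lets us cope with inconsistent strategies, which the example in Figure~\ref{figure:global sessions inconsistency} shows we cannot avoid. After the $f_i$-queries, the adversary's only information about a local session $s$ is thus the tuple $(\rho_1(s),\dots,\rho_n(s))$ of replies it received, and since the $g$-queries are scheduled lexicographically the entire view is determined by the multiset of these tuples.

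The combinatorial heart is to lay out the local sessions so that $(S_I,\sigma_I)$ arises from $(S,\sigma)$ by exchanging the values of one, or boundedly many, shared variables among a small set of local sessions, in a way that (i) destroys the session $I$, (ii) leaves the multiset of $f_i$-argument tuples unchanged for every $i$ (so the $f_i$-part of the view is unchanged), and (iii) admits a slot assignment for $(S_I,\sigma_I)$ under which the multiset of $g$-query tuples again coincides with that of $(S,\sigma)$ — possible because such an exchange only moves queries between positions carrying $f_i$-arguments on which $\Pi$ cannot discriminate (an appropriate $\sim$-equivalence on arguments, taken copy-sensitively to handle inconsistency, exactly as with $\sim_{\Pi,2}$ in the proof of Theorem~\ref{theorem:disjoint variables}). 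As in Fact~\ref{prooffact:number of ambigious elements}, one shows by a counting argument that enough such exchanges exist: each $f_i$ crushes $\card{V}$ input bits down to a single reply bit and so is massively non-injective, which forces a majority of its argument tuples to be ``ambiguous'' — some $\sim$-equivalent tuple represents a different assignment — and these ambiguities can be arranged across the layout so as to cover every local session of $S$.

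The main obstacle is reconciling (iii) with the coupled overlap structure. Unlike Theorem~\ref{theorem:disjoint variables}, where the index set splits into two variable-disjoint blocks and swapping an $I_2$-component disturbs only the functions indexed by $I_2$, here the sets $\var{f_i}$ overlap cyclically on $V$ — each $f_i$ misses exactly one shared variable — so exchanging the value of a shared variable $v$ simultaneously perturbs every $f_i$ with $v_i\neq v$; the counting must therefore make the reply patterns of several functions cooperate at once, and a single $(S,\sigma)$ must admit an eliminating exchange for \emph{every} one of its local sessions, not just for a constant fraction. The leverage is exactly the private variables: $x_i$ is visible only through $f_i$, so $f_i$'s one reply bit cannot simultaneously encode $x_i$ and let the adversary reconstruct the session's values of $V\setminus\set{v_i}$, and those values are themselves only visible through the $x_j$-polluted replies of the other services — this information deficit is what guarantees the ambiguities exist. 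Turning this heuristic into an explicit, globally consistent layout, and verifying it against an arbitrary (possibly inconsistent) $\Pi$, is the delicate part; once it is done, indistinguishability of $(S,\sigma)$ from each $(S_I,\sigma_I)$ forces $\Pi$ to fail, which proves security and hence the theorem.
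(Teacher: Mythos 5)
Your reduction of the theorem to ``every normal-form protocol is secure'' is fine (normal form indeed rules out tracking strategies via Lemma~\ref{lemma:tracking strategies for private variables}), and your high-level blueprint --- build global sessions that are indistinguishable to the adversary yet differ in which local sessions they contain --- is the same as the paper's. But the proof has a genuine gap: the combinatorial core is not carried out, and you say so yourself (``turning this heuristic into an explicit, globally consistent layout \dots is the delicate part; once it is done \dots''). That delicate part \emph{is} the theorem. Your counting heuristic is modelled on Fact~\ref{prooffact:number of ambigious elements}, but that argument leans on the two blocks being variable-disjoint, so that swapping an $I_2$-component perturbs only the $I_2$-side of the $g$-queries; here every exchange of a shared variable's value perturbs almost all $f_i$ simultaneously, which is exactly the obstacle you name without overcoming. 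Nothing in the proposal shows that ``ambiguous'' argument tuples can be composed into exchanges that (i) kill the printed session and (ii) leave the multiset of $g$-queries invariant under an arbitrary, possibly inconsistent $\Pi$.

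The paper resolves this with machinery your plan does not contain, and its absence is why the gap is not easily patched from where you stand. First, the paper restricts to \emph{normal} global sessions with exactly $2^{\card{V}}$ pairwise distinct local sessions, so that every argument tuple of every $f_i$ is queried \emph{exactly once}; this is what turns $\Pi$ into honest Boolean functions $f_1,\dots,f_n$ and makes inconsistency a non-issue. Your choice of $S$ (all assignments with even parity on $x_1,\dots,x_n$, each $f_i$-argument queried $2^{n-1}$ times) reintroduces multiplicities, hence inconsistency, and your proposed remedy (the counts $z_i(a)$ plus freedom in assigning sessions to copies within a lexicographic block) is never developed into an argument. Second, the paper exploits the freedom in the \emph{private} variables, not in slot assignment: it defines $i$-flipping and all-flipping assignments, proves (using the two-copies sessions $S_{\mathrm{even}}$, $S_{\mathrm{odd}}$) that the printed session must be all-flipping and that there is exactly one all-flipping $V$-assignment of each parity, and then chooses the \emph{polarities} of the $x_i$ in a complete normal session, depending on $\Pi$, so that reversing selected polarities yields an indistinguishable session missing both candidate outputs --- with a case distinction on whether $g_{\mathrm{even}}$ and $g_{\mathrm{odd}}$ differ in one or in several variables of $V$. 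By fixing the private-variable pattern a priori through a parity constraint, your construction gives away precisely the lever the paper uses, and no substitute argument is supplied. As it stands, the proposal is a plausible plan sharing the paper's outer shell, but the load-bearing construction is missing.
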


\begin{proof}
  \resetprooffacts
  Due to Theorem~\ref{theorem:tracking strategies work}, we know that if there is a tracking strategy for $\protocol$, then $\protocol$ is indeed insecure. Hence assume that there is no tracking strategy for $\protocol$, and indirectly assume that $\protocol$ is insecure. Let $\Pi$ be a corresponding successful adversary strategy. We introduce some notation for the proof:
   
 \begin{itemize}
  \item Since $\protocol$ is in normal form, we know that for each $i$, $\card{X_i}=1$, i.e., there is a single private variable for $f_i$. We denote this variable with $x_i$.
  \item In this proof, the index $i$ always ranges over the set $\set{1,\dots,n}$.
  \item A $V$-assignment is a function $g\colon V\rightarrow\set{0,1}$. Such an assignment is \emph{even}, if $\oplus_{y\in V}v(y)=0$ and \emph{odd}, if $\oplus_{y\in V}v(y)=1$. 
  \item For any $i$, a $V_i$-assignment is a function $g\colon V\cap\var{f_i}\rightarrow\set{0,1}$, i.e., a truth assignment to the non-private variables visible for $f_i$.
  \item We assume that $V=\set{y_1,\dots,y_m}$ for some $m\ge3$.
  \item We say that a local session $s$ is \emph{based} on a $V$-assignment $g$ ($V_i$-assignment $g$) if $s(y_k)=g(y_k)$ for all $y_k\in V$ (for all $y_k\in\var{f_i}\cap V$). The local session $s$ is even (odd) if it is based on an even (odd) assignment.
 \end{itemize}
 
 Similarly to the proof of Theorem~\ref{theorem:disjoint variables}, in the remainder of this proof we will only consider global sessions $(S,\sigma)$ satisfying the following conditions:
 
 \begin{enumerate}
  \item There are exactly $2^{\card{V}}$ local sessions, all of which are different assignments to the variables in $\var{\protocol}$,
  \item the set of queries at each $f_i$ is the set of all possible assignments $g\colon\var{f_i}\rightarrow\set{0,1}$,
  \item the schedule first performs all queries to $f_1$, then all questions to $f_2$, etc, the $g$-questions are scheduled last. For each service, the queries are scheduled in lexicographical order of the arguments.
 \end{enumerate}
  
  We call a global session satisfying the above a \emph{normal} global session. We will explicitly construct normal global sessions below. All normal global sessions look the same until the first $g$-query is performed. Therefore, similarly as in the proof for Theorem~\ref{theorem:disjoint variables}, $\Pi$ gives the same answer to each $f_i$ question in every normal global session. Since every $f_i$-question appears exactly once in each normal global session, $\Pi$ therefore is characterized by Boolean functions representing the answers to the $f_i$-questions and the output action performed at the end of the protocol run. (Clearly, the answers to the $g$-queries are not relevant.) We simply call these functions $f_1,\dots,f_n$, and for a local session $s$ of a normal global session simply write $f_i(s)$ for $f_i(\restr{s}{\var{f_i}})$. In the following, we identify a normal global session $S$ with the set of local sessions appearing in $S$, since the schedule is uniquely determined by this set and the conditions for normal global sessions. The adversary's view of a normal global session $S$ is completely determined by the set of $g$-queries performed in this session, i.e., the set of tuples $\set{(f_1(s),f_2(s),\dots,f_n(s))\ \vert\ s\in S}$.

  We now construct a global session $S_{\mathtext{even}}$ as follows: For each even $V$-assignment $g$, there are two local sessions $s^g_1$ and $s^g_2$ in $S_{\mathtext{even}}$ that are based on $g$. For each $i$, we have $s^g_1(x_i)=\overline{s^g_2(x_i)}$. We only fix this relationship now, and will fix the concrete values for the private variables later, depending on the adversary strategy $\Pi$. The global session $S_{\mathtext{odd}}$ is defined analogously, using the odd $V$-assignments. 
 
 \begin{prooffact}
  $S_{\mathtext{even}}$ and $S_{\mathtext{odd}}$ are normal global sessions.
 \end{prooffact}
 
 \begin{proof}
  By construction we have exactly $2^{\card{V}}$ local sessions, which represent pairwise different assignments to the variables. It remains to show that for each $f_i$ and each assignment $g\colon\var{f_i}\rightarrow\set{0,1}$, there is exactly one local session $s\in S_{\mathtext{even}}$ with $\restr{s}{\var{f_i}}=g$. Since $\card{\var{f_i}}=\card{V}$, it suffices to show that each two local sessions $s\neq s'\in S$ differ for at least one variable in $\var{f_i}$. Hence assume indirectly that  $\restr{s}{\var{f_i}}=\restr{s'}{\var{f_i}}$. In particular, $s(x_i)=x'(x_i)$. By construction of $S_{\mathtext{even}}$, $s$ and $s'$ are based on different $V$-assignments. Since $s$ and $s'$ agree on all $y_k\in V\cap\var{f_i}$, there is a unique $y_k\in V$ with $s(y_k)\neq s'(y_k)$. Hence $s$ and $s'$ cannot both be even. This is a contradiction, since $S_{\mathtext{even}}$ contains only local sessions based on even assignments. The proof for $S_{\mathtext{odd}}$ is analogous.
 \end{proof}
 
 For the variable $x_i$, the function $f_i$ is the only function that ``sees'' the value of $x_i$ in a local session. Therefore, $f_i$ has to ``report'' any information about the value of $x_i$ that the adversary wants to use. In particular, we are interested in $V_i$-assignments $g$ for which the value of $f_i$ differs depending on whether $g$ is extended by assigning $0$ or $1$ to $x_i$. These assignments are those where $f_i$ ``tries to'' keep track of the value of the variable $x_i$. We make this notion formal:

 \begin{notation}
  A $V$-assignment $g$ is $i$-flipping if $f_i(s)\neq f_i(s')$, where $\restr{s}V=\restr{s'}{V}=\restr{g}{\var{f_i}}$ and $s(x_i)\neq s'(x_i)$. We say that $g$ is all-flipping if $g$ is $i$-flipping for all $i$. A local session $s$ is $i$-flipping (all-flipping) if $\restr{s}{V}$ is $i$-flipping (all-flipping).
 \end{notation}
 
 Whether a $V$-assignment $g$ is $i$-flipping only depends on $\restr{v}{\var{f_i}}$. We therefore also apply the notion $i$-flipping to $V_i$-assignments with the obvious meaning.
  
 \begin{prooffact}\label{prooffact:adv strategy depends on private variables}\label{prooffact:one all-flipping even and odd assignment}
  \begin{enumerate}
   \item Let $s$ be the local session returned by $\Pi$ when a normal global session is run. Then $\restr{s}{V}$ is all-flipping.
   \item There are exactly two all-flipping $V$-assignments, one of which is even and one of which is odd. We denote these with $g_{\mathtext{even}}$ and $g_{\mathtext{odd}}$.
  \end{enumerate}
 \end{prooffact}
 
 \begin{proof}
  \begin{enumerate}
   \item Assume that this is false for the normal global session $S$. Then there is some $i$ such that $s$ is not $i$-flipping. Since $S$ is a normal global session, there is a session $s'\in S$ based on the same $V_i$-assignment as $s$ with $s'(x_i)=\overline{s(x_i)}$. Since $s$ is not $i$-flipping, we have that $f_i(s)=f_i(s')$. Let $S'$ be the global session obtained from $S$ by reversing the truth value of $x_i$ in both $s$ and $s'$. Since $f_i(s)=f_i(s')$, the value of $f_i$ remains unchanged for the sessions $s$ and $s'$.  Clearly, the result is a normal global session which does not contain the session $s$, and which is indistinguishable from $S$ for the adversary. Therefore, $\Pi$ again returns the session $s$, and thus fails on $S'$, a contradiction.
   \item Existence of the assignments directly follows from the above since $\Pi$ is a successful strategy and thus returns a correct session on both $S_{\mathtext{even}}$ and $S_{\mathtext{odd}}$. It remains to show that there is only one all-flipping $V$-assignment of each parity.
   
   We only cover the even case, the odd case is symmetric. Recall that $S_{\mathtext{even}}$ contains two sessions for every even $V$-assignment. Assume that there are $k$ even all-flipping assignments $g_1,\dots,g_k$ with $k\ge2$. We choose the values of the variables $x_i$ in the $S_{\mathtext{even}}$ such that for the two local sessions $s^1_1$ and $s^1_2$ based on $g_1$, we get the $g$-queries $(0,1,1,\dots,1)$ and $(1,0,0,\dots,0)$, and for $l\in\set{2,\dots,k}$, for the two local sessions $s^l_1$ and $s^l_2$ based on $g_l$, we get the $g$-queries $(1,1,1,\dots,1)$ and $(0,0,0,\dots,0)$. This is possible since $g_1,\dots,g_k$ are all-flipping.
  
   Since $\Pi$ is successful, $\Pi$ returns a local session $s\in S_{\mathtext{even}}$ when the global session $S_{\mathtext{even}}$ is run. Due to the first part, $s$ is all-flipping, and hence $s$ is based either on $g_1$ or on some $g_l$ with $l\ge2$.  We construct a session $S'_{\mathtext{even}}$ which contains none of the sessions from $S_{\mathtext{even}}$ based on $g_1$ or $g_l$. We do this by reversing the values of $x_1$ in these four local sessions. Clearly, the original four sessions are not contained in $S'_{\mathtext{even}}$. The $g$-queries from the affected sessions are as follows:

    \begin{itemize}
     \item For the two local sessions based on $g_1$, we get the queries $(1,1,1,\dots,1)$ and $(0,0,0,\dots,0)$, 
     \item For the two local sesisons based on $g_l$, we get the queries $(0,1,1,\dots,1)$ and $(1,0,0,\dots,0)$.
    \end{itemize}
  
    Hence the set of $g$-queries resulting from $S_{\mathtext{even}}$ and $S'_{\mathtext{even}}$ is identical. Since $S'_{\mathtext{even}}$ is obtained from $S_0$ by swapping the values of $x_i$ for the same $V$-assignment, $S'_{\mathtext{even}}$ is again a normal global session. Therefore, $S_{\mathtext{even}}$ and $S'_{\mathtext{even}}$ are indistinguishable to the adversary, and hence $\Pi$ returns the same local sessions $s$ when $S'_{\mathtext{even}}$ is run. Since this session is not present in $S'_{\mathtext{even}}$, we have a contradiction. 
  \end{enumerate}
 \end{proof}
 
 In the remainder of the proof, we only consider global sessions that satisfy the following criteria:
 
 \begin{itemize}
  \item For each $V$-assignment $g\colon V\rightarrow\set{0,1}$, there is exactly one local session $s$ based on $g$.
  \item Due to the above, for each $i$ and each $V_i$-assignment $g$, there are two local sessions $s^{i,g}_0$ and $s^{i,g}_1$ based on $g$. For these two sessions, we have that $s_0^{i,g}(x_i)=\overline{s_1^{i,g}(x_i)}$.
  \item The schedule follows the conditions for normal global sessions.
 \end{itemize}
 
 We call such global sessions \emph{complete normal sessions}, since every $V$-assignment appears, and it is easy to see that every such sessions is normal. For each $i$ and each $V_i$-assignment $g$, the local sessions $s^{i,g}_0$ and $s^{i,g}_1$ assign different values to the single variable $v_i\in V\setminus\var{f_i}$, and they also assign different values to the variable $x_i$. Therefore, either both of these sessions assign the same value to $x_i$ and $v_i$, or both of them assign different values to these variables.
 
 \begin{notation}
 Let $g$ be a $V_i$-assignment. As argued above, one of the two following cases occurs:
 \begin{itemize}
  \item Either $s^{i,g}_0(x_i)=s_0^{i,g}(v_i)$ and $s^{i,g}_1(x_i)=s_1^{i,g}(v_i)$, in this case we say that $x_i$ has positive polarity in $g$,
  \item or $s^{i,g}_0(x_i)=\overline{s^{i,g}_0(v_i)}$ and $s^{i,g}_1(x_i)=\overline{s^{i,g}_1(v_i)}$, in this case we say that $x_i$ has negative polarity in $g$.
 \end{itemize}
 \end{notation}
 
 We also say that $x_i$ has positive (negative) polarity in a $V$-assignment $g$, if $x_i$ has positive (negative) polarity in $g\cap\var{f_i}$.

 In the remainder of the proof, we will carefully change the polarities of variables for appropiate assignments to construct global sessions which are indistinguishable for the adversary, but do not contain the session that $\Pi$ reports. Hence we summarize the effect of reversing polarity of a variable in the following fact, which follows immediately from the definition of $i$-flipping: Reversing the polarity of $x_i$ for the $V$-assignment $v$ flips the value of $f_i(s)$ for both local sessions $s$ based on $v$.
 
 \begin{prooffact}\label{prooffact:effect of polarity change}
  Let $J\subseteq\set{1,\dots,n}$, let $s$ be a local session based on the $V$-assignment $g$, and let $s'$ be the local session obtained from $s$ by changing the polarities of all $x_j$ in $\restr{g}{\var{f_i}\cap V}$ with $j\in J$. Then
  
  $$f_i(s')=f_i(s)\oplus
  \begin{cases}
   1, & \mathtext{ if } i\in J\mathtext{ and }g\mathtext{ is }i\mathtext{-flipping}, \\
   0, & \mathtext{ otherwise.}
  \end{cases}$$
 \end{prooffact}

 Recall that due to Fact~\ref{prooffact:one all-flipping even and odd assignment}, there is exactly one all-flipping $V$-assignment of each parity, namely $g_{\mathtext{even}}$ and $g_{\mathtext{odd}}$. Clearly, $g_{\mathtext{even}}\neq g_{\mathtext{odd}}$, since their parity differs. Without loss of generality, we assume $g_{\mathtext{even}}(y_m)\neq g_{\mathtext{odd}}(y_m)$. We now make a case distinction, depending on whether there is another variable $y_k\in V\setminus\set{y_m}$ for which these assignments differ. These cases are significantly different, since if such a variable does not exist, then there are values $i$ (namely those for which $y_m\notin\var{f_i}$) for which $g_{\mathtext{even}}$ and $g_{\mathtext{odd}}$ agree on all variables in $V\cap\var{f_i}$, and hence from the input values for this $i$, sessions based on $g_{\mathtext{even}}$ and on $g_{\mathtext{odd}}$ cannot be distinguished. This makes the construction somewhat easier. This is not the case if $g_{\mathtext{even}}$ and $g_{\mathtext{odd}}$ differ for more than one variable in $V$.
  
 \paragraph{Case 1: $g_{\mathtext{odd}}(y_k)=g_{\mathtext{even}}(y_k)$ for all $k\in\set{1,\dots,m-1}$.} Then $g_{\mathtext{even}}$ and $g_{\mathtext{odd}}$ differ only in the value of $y_m$. In particular, the two local sessions $s_{\mathtext{even}}$ and $s_{\mathtext{odd}}$ based on $g_{\mathtext{even}}$ and $g_{\mathtext{odd}}$ agree on all $y_k$ for $k\in\set{1,\dots,m-1}$. Due to Fact~\ref{prooffact:effect of polarity change}, we can choose the polarities of the variables $x_i$ such that in the resulting global session $S$, we have the following:
 
 \begin{itemize}
  \item $f_i(s_{\mathtext{even}})=0$ for all $i$,
  \item in the local session $s_{\mathtext{odd}}$:
  \begin{itemize}
   \item for all $i$ with $y_m\in\var{f_i}$, let $f_i(s_{\mathtext{odd}})=0$.  This can be chosen independently from the above since  $g_{\mathtext{even}}$ and $g_{\mathtext{odd}}$ differ in a variable from $\var{f_i}\cap V$,
   \item for the remaining $i$ with $y_m\notin\var{f_i}$, we necessarily have $f_i(s_{\mathtext{odd}})=1$ since for these $i$, $g_{\mathtext{even}}$ and $g_{\mathtext{odd}}$ agree for the variables in $\var{f_i}\cap V$, hence 
   $s_{\mathtext{even}}(x_i)=\overline{s_{\mathtext{odd}}(x_i)}$. Since $g_{\mathtext{even}}$ and $g_{\mathtext{odd}}$ are all-flipping, this implies $f_i(s_{\mathtext{odd}})=\overline{f_i(s_{\mathtext{even}})}=1$.
  \end{itemize}
 \end{itemize}
 
 Hence the local session $s_{\mathtext{even}}$ results in the $g$-query $(0,0,\dots,0)$, and $s_{\mathtext{odd}}$ yields the query $(\underbrace{1,\dots,1}_{i\mathtext{ with } y_m\notin\var{f_i}},0,\dots,0)$. The polarities for the remaining sessions are irrelevant.
 
 Since $\Pi$ is successful, a local session $s$ from $S$ is returned when $S$ is run. Due to Fact~\ref{prooffact:adv strategy depends on private variables}, $s$ is all-flipping, and hence $s$ is based on $g_{\mathtext{even}}$ or on $g_{\mathtext{odd}}$, i.e., $s\in\set{s_{\mathtext{even}},s_{\mathtext{odd}}}$. We construct a complete normal session $S'$ that is indistinguishable from $S$ for the adversary, and which does not contain $s$.
 
 We construct $S'$ by reversing, in the assignment $g_{\mathtext{even}}$, the polarity of all $x_i$ with $y_m\notin\var{f_i}$. For such $i$, $g_{\mathtext{even}}$ and $g_{\mathtext{odd}}$ agree on all variables in $\var{f_i}\cap V$. Therefore, $S'$ contains neither $s_{\mathtext{even}}$ nor $s_{\mathtext{odd}}$, and hence does not contain $s$. We now show that $S'$ and $S$ are indistinguishable for the adversary, i.e., that they result in the same $g$-queries.
 
 Clearly, only the $g$-queries resulting from the local sessions based on $v_{\mathtext{even}}$ and $v_{\mathtext{odd}}$ can differ between $S$ and $S'$, and for both sessions, due to Fact~\ref{prooffact:effect of polarity change}, the values of the $f_i$-functions in these local sessions are reversed for all $i$ with $y_m\notin\var{f_i}$, since $g_{\mathtext{even}}$ and $g_{\mathtext{odd}}$ are $i$-flipping for all these $i$.
 Hence the local session in $S'$ based on $g_{\mathtext{even}}$ results in the $g$-query $(\underbrace{1,\dots,1}_{i\mathtext{ with } y_m\notin\var{f_i}},0,\dots,0)$, and the local session in $S'$ based on $g_{\mathtext{odd}}$ now results in the $g$-query $(0,0,\dots,0)$. The set of $g$-queries is therefore unchanged, and hence $S'$ is indistinguishable from $S$ for the adversary. Therefore, the adversary returns $s_1$ or $s_2$, fails on $S'$ as claimed.

 \paragraph{Case 2: there is some $k\in\set{1,\dots,m-1}$ with $v_{\mathtext{odd}}(y_k)=1$.} In this case $g_{\mathtext{even}}$ and $g_{\mathtext{odd}}$ differ in at least one variable from $\var{f_i}\cap V$ for all $i$. Hence for every $i$, we can set the polarities of $x_i$ in $g_{\mathtext{odd}}$ and in $g_{\mathtext{even}}$ independently.
  
 We again construct a complete global session as above by choosing appropiate polarities.
 
 In the following, let $\mathtext{pty}$ be either $\mathtext{even}$ or $\mathtext{odd}$.
 
 Let $s^{\mathtext{pty}}_1$ be the local session based on $v_{\mathtext{pty}}$, and let $s^{\mathtext{pty}}_2$ be the local session based on the assignment obtained from $v_{\mathtext{pty}}$ by reversing the value of $y_m$ (note that since $g_{\mathtext{even}}$ and $g_{\mathtext{odd}}$ differ in more than one variable, this assignment is different from both $g_{\mathtext{even}}$ and $g_{\mathtext{odd}}$, and in fact has a different parity than $\mathtext{pty}$).
 
 We consider the $g$-queries resulting from $s^{\mathtext{pty}}_1$ and $s^{\mathtext{pty}}_2$. These are the values $f_i(s^{\mathtext{pty}}_1)$ and $f_i(s^{\mathtext{pty}}_2)$ for all $i$. For all $i$, let $\alpha^{\mathtext{pty}}_i=f_i(s^{\mathtext{pty}}_1)$, and let $\beta^{\mathtext{pty}}_i=f_i(s^{\mathtext{pty}}_2)$. Since $s^{\mathtext{pty}}_1$ is based on $g_{\mathtext{pty}}$ which is all-flipping, we can choose the polarities of the $x_i$ for $g_{\mathtext{pty}}$ to achieve any value of $\alpha_i$. We fix the polarities as follows:
 
 \begin{itemize}
  \item For $i$ with $y_m\notin\var{f_i}$, we have that $s^{\mathtext{pty}}_1$ and $s^{\mathtext{pty}}_2$ agree on all variables in $V\cap\var{f_i}$. Hence $s^{\mathtext{pty}}_1(x_i)\neq s^{\mathtext{pty}}_2(x_i)$ must hold for such $i$. Since $g_{\mathtext{pty}}$ is all-flipping, we have $\beta^{\mathtext{pty}}_i=\overline{\alpha^{\mathtext{pty}}_i}$ for all such $i$.
  \item For $i$ with $y_m\in\var{f_i}$, since $s^{\mathtext{pty}}_1(y_m)\neq s^{\mathtext{pty}}_2(y_m)$, we can choose the polarities of $x_i$ in $s^{\mathtext{pty}}_1\cap V$ and in $s^{\mathtext{pty}}_2\cap V$ independently. We choose these polarities  such that $\alpha^{\mathtext{pty}}_i=\beta^{\mathtext{pty}}_i$. 
 \end{itemize}

 The remaining polarities are chosen arbitrarily. Let $S$ be the resulting global session, and let $s\in S$ be the session returned when $S$ is run. Due to Fact~\ref{prooffact:adv strategy depends on private variables}, $s$ is based on an all-flipping $V$-assignment, therefore $s\in\set{s^{\mathtext{even}}_1,s^{\mathtext{odd}}_1}$. Let $S'$ be the complete normal session obtained from $S$ by reversing the polarities of all $x_i$ with $y_m\notin\var{f_i}$ in $g_{\mathtext{even}}$ and in $g_{\mathtext{odd}}$. Clearly, both local sessions $s^{\mathtext{even}}_1$ and  $s^{\mathtext{odd}}_1$ do not occur in $S'$, and hence $s$ does not occur in $S'$. To show that $S$ and $S'$ are indistinguishable for the adversary, we show that the set of resulting $g$-queries is the same. Obviously, only the $g$-queries resulting from the local sessions based on the same $V$-assignments as the four sessions $s^{\mathtext{even}}_1$, $s^{\mathtext{even}}_2$, $s^{\mathtext{odd}}_1$ and $s^{\mathtext{odd}}_2$ are affected.
 
 Due to Fact~\ref{prooffact:effect of polarity change}, for the local sessions based on $\restr{s^{\mathtext{pty}}_1}{V}$ and $\restr{s^{\mathtext{pty}}_2}{V}$, the $g$-queries resulting from these local sessions in $S'$ are obtained from the values for the corresponding sessions in $S$, except that the values of the $f_i$ with $y_m\notin\var{f_i}$ are reversed.
 
 Therefore, the $g$-queries resulting from the sessions based on the same $V$-assignments as $s^{\mathtext{pty}_1}$ and $s^{\mathtext{pty}_2}$ are as follows:
 
 Instead of 
 $$(\underbrace{\alpha^{\mathtext{pty}}_1,\dots,\alpha^{\mathtext{pty}}_k,}_{i\mathtext{ with }y_m\notin\var{f_i}},\underbrace{\alpha^{\mathtext{pty}}_{k+1},\dots,\alpha^{\mathtext{pty}}_n}_{i\mathtext{ with }y_m\in\var{f_i}}) \mathtext{ and }(\underbrace{\overline{\alpha^{\mathtext{pty}}_1},\dots,\overline{\alpha^{\mathtext{pty}}_k},}_{i\mathtext{ with }y_m\notin\var{f_i}},\underbrace{\beta^{\mathtext{pty}}_{k+1},\dots,\beta^{\mathtext{pty}}_n}_{i\mathtext{ with }y_m\in\var{f_i}}),$$
 
 the resulting $g$-queries are now  
 $$(\underbrace{\overline{\alpha^{\mathtext{pty}}_1},\dots,\overline{\alpha^{\mathtext{pty}}_k},}_{i\mathtext{ with }y_m\notin\var{f_i}},\underbrace{\alpha^{\mathtext{pty}}_{k+1},\dots,\alpha^{\mathtext{pty}}_n}_{i\mathtext{ with }y_m\in\var{f_i}})\mathtext{ and }(\underbrace{\alpha^{\mathtext{pty}}_1,\dots,\alpha^{\mathtext{pty}}_k,}_{i\mathtext{ with }y_m\notin\var{f_i}},\underbrace{\beta^{\mathtext{pty}}_{k+1},\dots,\beta^{\mathtext{pty}}_n}_{i\mathtext{ with }y_m\in\var{f_i}}).$$
 
 Since for all $i$ with $y_m\in\var{f_i}$, we have that $\alpha^{\mathtext{pty}}_i=\beta^{\mathtext{pty}}_i$, these are the exact same sets of $g$-queries. Therefore, the two local sessions based on the same $V$-assignments as $s^{\mathtext{pty}}_1$ and $s^{\mathtext{pty}}_2$ lead to the same $g$-queries in both $S$ and $S'$. Since this is true for both parities, the set of $g$-queries in $S$ and in $S'$ is identical, and thus $S$ and $S'$ are indistinguishable for the adversary. It follows that $\Pi$ returns the local session $s$ when $S'$ is performed. Since $s\notin S'$, this shows that $\Pi$ fails.
 
 This completes the proof of Theorem~\ref{theorem:classification for private variable normal form}
\end{proof}

Using Theorem~\ref{theorem:classification for private variable normal form}, Theorem~\ref{theorem:each function has private variable} now follows rather easily:

\theoremeachfunctionhasprivatevariable*

\begin{proof}
 Due to Lemma~\ref{lemma:simple cases}, we can assume that none of the conditions stated in that lemma are satisfied, in particular, this means that each $f_i$ has a single private variable $x_i$. Since the existence of a tracking strategy implies insecurity due to Theorem~\ref{theorem:tracking strategies work}, we assume that there is no tracking strategy for $\protocol$ and prove that the protocol is secure. To do this, we indirectly assume that $\protocol$ is insecure.
 
 Due to Lemma~\ref{lemma:simple cases}, we can assume that there is no $i$ with $V\subseteq\var{f_i}$. Therefore, each $\var{f_i}$ contains at most $\card{V}-1$ of the variables in $V$. By adding edges from variable nodes to the nodes $f_i$, we can ensure that each $f_i$ sees exactly $\card{V}-1$ variables of $V$, since adding edges never makes an insecure protocol secure, the thus-obtained $\protocol$ is still insecure, and due to Lemma~\ref{lemma:tracking strategies for private variables}, there still is no tracking strategy for $\protocol$.
 
 If after these additions, there is a variable $v\in V$ such that $v\in\var{f_i}$ for all $i\in\set{1,\dots,n}$, then we remove this variable $v$ from $\protocol$ and the protocol remains insecure (again, this follows from Theorem~\ref{theorem:embedding} but can also be seen directly), and there still is no tracking strategy for $\protocol$ due to Lemma~\ref{lemma:tracking strategies for private variables}. Hence we can consecutively remove all variables $v$ appearing in all $\var{f_i}$. If for the thus-reduced protocol we have $\card{V}\leq 2$, the theorem follows from Lemma~\ref{lemma:simple cases}. Otherwise, we have obtained a flat protocol that is in normal form, insecure, but for which no tracking strategy exists. This is a contradiction to Theorem~\ref{theorem:classification for private variable normal form}.
\end{proof}

We therefore have seen that for flat protocols where every service-node has a private variable, the simple tracking strategy explained in the introduction is indeed the only successful adversary strategy in the sense that if this strategy fails, then all other strategies fail as well.

\section{Proof of Embedding Theorem~\ref{theorem:embedding}}

In this section, we prove Theorem~\ref{theorem:embedding}. We first discuss the insecurity-preserving transformations required for the proof. In addition to cloning, the following operations are used:

\begin{itemize}
 \item Introducing a \emph{bypass} for an edge $u\rightarrow v$ means connecting every predecessor of $u$ directly with $v$ and removing the edge $u\rightarrow v$.
 \item Removing a \emph{removable node} means removing a node $u$ that has no outgoing edges and for which there is a node $v\neq u$ such that each predecessor of $u$ is also a predecessor of $v$. 
 \item \emph{Splitting} an edge $u\rightarrow v$ means introducing a new node $w$, removing the edge $u\rightarrow v$ and introducing edges $u\rightarrow w$ and $w\rightarrow v$.
 \item \emph{Unsplitting} an edge means reversing the splitting operation.
 \item Restricting to a \emph{closed sub-protocol} means removing all nodes that are not part of an induced subgraph satisfying natural closure properties.
\end{itemize}

Figures~\ref{figure:protocol},~\ref{figure:bypass}, and~\ref{figure:cloning} visualize the effect of the bypass and cloning transformation: Figure~\ref{figure:protocol} contains an excerpt of a protocol. Figure~\ref{figure:bypass} shows the effect of bypassing the edge $u\rightarrow v$, and Figure~\ref{figure:cloning} shows the effect of cloning the node $v$. Splitting is visualized in Figures~\ref{figure:protocol 2} and~\ref{figure:splitting}; the former contains an excerpt of a protocol, the latter shows the effect of splitting on this excerpt. It is easy to see that the splitting operation does not affect the security of a protocol. The operation is interesting because it allows us to re-write protocols into a \emph{layered} form without affecting security.

\begin{minipage}{3.5cm}
 \begin{tikzpicture}
 \node at   (0,0) (u) {$u$};
 \node [above of = u] (v) {$v$};
 
 \node[below of = u] (tn)     {$t_n$};
 \node[left=-2mm of tn] (dots1) {$\dots$};
 \node[left=-2mm of dots1] (t1)     {$t_1$};
 
 \node[right of = u] (s1)     {$s_1$};
 \node[right=-2mm of s1] (dots2)     {$\dots$};
 \node[right=-2mm of dots2] (sm)     {$s_m$};
 
 \node[above of = v] (dots3) {$\dots$};
 \node[left=-2mm of dots3] (w1) {$w_1$};
 \node[right=-2mm of dots3] (wl) {$w_\ell$};
 
 \draw [->] (t1) edge [bend left] (u);
 \draw [->] (tn) edge [bend left] (u);

 \draw [->] (s1) edge [bend right] (v);
 \draw [->] (sm) edge [bend right] (v);
 
 \draw [->] (u) edge (v);
 
 \draw [->] (v) edge (w1);
 \draw [->] (v) edge (wl);
\end{tikzpicture}
\captionof{figure}{Protocol}\label{figure:protocol}
\end{minipage}
\begin{minipage}{3.5cm}
\begin{tikzpicture}
 \node at   (0,0) (u) {$u$};
 \node [above of = u] (v) {$v$};
 
 \node[below of = u] (tn)     {$t_n$};
 \node[left=-2mm of tn] (dots1) {$\dots$};
 \node[left=-2mm of dots1] (t1)     {$t_1$};
 
 \node[right of = u] (s1)     {$s_1$};
 \node[right=-2mm of s1] (dots2)     {$\dots$};
 \node[right=-2mm of dots2] (sm)     {$s_m$};
 
 \node[above of = v] (dots3) {$\dots$};
 \node[left=-2mm of dots3] (w1) {$w_1$};
 \node[right=-2mm of dots3] (wl) {$w_\ell$};
 
 \draw [->] (t1) edge [bend left] (u);
 \draw [->] (tn) edge [bend left] (u);
 
 \draw [->] (t1) edge [bend left] (v);
 \draw [->] (tn) edge [bend left] (v);

 \draw [->] (s1) edge [bend right] (v);
 \draw [->] (sm) edge [bend right] (v);
 
 \draw [->] (v) edge (w1);
 \draw [->] (v) edge (wl);
\end{tikzpicture}
\captionof{figure}{Bypass}\label{figure:bypass}
\end{minipage}
\begin{minipage}{4cm}
 \begin{tikzpicture}
 \node at   (0,0) (u) {$u$};
 \node [above of = u] (v) {$v$};
 \node [left of = v]  (v') {$v'$};
 
 \node[below of = u] (tn)     {$t_n$};
 \node[left=-2mm of tn] (dots1) {$\dots$};
 \node[left=-2mm of dots1] (t1)     {$t_1$};
 
 \node[right of = u] (s1)     {$s_1$};
 \node[right=-2mm of s1] (dots2)     {$\dots$};
 \node[right=-2mm of dots2] (sm)     {$s_m$};
 
 \node[above of = v] (dots3) {$\dots$};
 \node[left=-2mm of dots3] (w1) {$w_1$};
 \node[right=-2mm of dots3] (wl) {$w_\ell$};
 
 \draw [->] (t1) edge [bend left] (u);
 \draw [->] (tn) edge [bend left] (u);

 \draw [->] (s1) edge [bend right] (v);
 \draw [->] (sm) edge [bend right] (v);

 \draw [->] (s1) edge [bend left=10] (v');
 \draw [->] (sm) edge [bend left=10] (v');

 \draw [->] (u) edge (v);
 \draw [->] (u) edge [bend left] (v');
 
 \draw [->] (v') edge (w1);
 \draw [->] (v) edge (wl);
\end{tikzpicture}
\captionof{figure}{Cloning}\label{figure:cloning}
\end{minipage}

We now give formal defintions of these treansformations and show that each of them preserves insecurity of a protocol. We then use these results to prove the theorem.

\begin{definition}
 Let $\protocol$ be a protocol, and let $u$, $v$ be non-variable nodes in $\protocol$.
 \begin{enumerate}
  \item if $u\rightarrow v$ is an edge, then the \emph{$u$-$v$-bypass} of $\protocol$ is obtained by the following operations:
    \begin{itemize}
      \item for each $w$ such that $(w,u)$ is an edge in $\protocol$, add an edge $(w,v)$ to $\protocol$,
      \item remove the edge $(u,v)$.
    \end{itemize}
  \item If $u$ has no outgoing edges in $\protocol$, then $u$ is \emph{removable}, if there is a node $w\neq u$ such that all predecessors of $u$ are also predecessors of $w$.
  \item Let $S$ be a subset of the successor nodes of $v$ in $\protocol$. Then \emph{$S$-cloning} $v$ results in a protocol obtained from $\protocol$ as follows:
  \begin{itemize}
   \item introduce a new node $v'$,
   \item for all edges $(w,v)$ in $\protocol$, introduce an edge $(w,v')$,
   \item for each $s\in S$, replace the edge $(v,s)$ with $(v',s)$.
  \end{itemize}
 \end{enumerate}
\end{definition}

``Cloning'' a variable node of an insecure protocol can result in a secure one, as the adversary now is required to take more variables into account. A simple example is the flat protocol $g(f_1(x,x),f_2(x,x))$, which is clearly insecure, but by repeatedly cloning $x$, we obtain the protocol $g(f_1(x,x'),f_2(x'',x'''))$, which is secure (see Example~\ref{example:simple secure protocol}, which discusses the same protocol with different variable names). Therefore, we only consider cloning non-variable nodes.

\begin{definition}
 Let $\protocol$ be a protocol, and let $(u,v)$ be an edge in $\protocol$. Then \emph{splitting $(u,v)$} results in the following protocol:
 \begin{itemize}
   \item remove the edge $(u,v)$,
   \item add a new node $w$,
   \item add edges $(u,w)$ and $(w,v)$.
 \end{itemize}
\end{definition}

\begin{wrapfigure}[12]{l}{6.5cm}
\begin{minipage}{3cm}
\begin{tikzpicture}
 \node at   (0,0) (u) {$u$};
 \node [above=10mm of u] (v) {$v$};
 \draw [->] (u) edge (v);

 \node [below of = u] (dots1) {$\dots$};
 \node [left=-2mm of dots1] (s1) {$s_1$};
 \node [right=-2mm of dots1] (sn) {$s_n$};
 
 \draw[->] (s1) edge [bend left=10] (u);
 \draw[->] (sn) edge [bend right=10] (u);
 
 \node [above of = v] (dots2) {$\dots$};
 \node [left=-2mm of dots2] (t1) {$t_1$};
 \node [right=-2mm of dots2] (tm) {$t_m$};
 
 \draw[->] (v) edge [bend left=10] (t1);
 \draw[->] (v) edge [bend right=10] (tm);
\end{tikzpicture}
\captionof{figure}{Protocol}
\label{figure:protocol 2}
\end{minipage}
\begin{minipage}{3cm}
\begin{tikzpicture}
 \node at   (0,0) (u) {$u$};
 \node [above=10mm of u] (v) {$v$};
 \node [above=2.5mm of u] (w) {$w$};
 \draw [->] (u) edge (w);
 \draw [->] (w) edge (v);
 
 \node [below of = u] (dots1) {$\dots$};
 \node [left=-2mm of dots1] (s1) {$s_1$};
 \node [right=-2mm of dots1] (sn) {$s_n$};
 
 \draw[->] (s1) edge [bend left=10] (u);
 \draw[->] (sn) edge [bend right=10] (u);
 
 \node [above of = v] (dots2) {$\dots$};
 \node [left=-2mm of dots2] (t1) {$t_1$};
 \node [right=-2mm of dots2] (tm) {$t_m$};
 
 \draw[->] (v) edge [bend left=10] (t1);
 \draw[->] (v) edge [bend right=10] (tm);
\end{tikzpicture}
\captionof{figure}{Splitting}
\label{figure:splitting}
\end{minipage}
\end{wrapfigure}

Intuitively, splitting $(u,v)$ introduces an additional node in the protocol that can be used by the adversary to simply copy its input. 

We show that all of the above-introduced operations preserve \emph{insecurity} of a protocol: If we apply one of the operations to an insecure protcol, then the resulting protocol is insecure as well. 

\begin{lemma}\label{lemma:bypass removable cloning}
 Let $\protocol$ be an insecure protocol, and let $\protocol'$ be obtained from $\protocol$ as
 \begin{enumerate}
  \item\label{enum:lemma:bypass removable cloning:cloning} the result of cloning a node $v$, or
  \item\label{enum:lemma:bypass removable cloning:bypass} the $u$-$v$-bypass of $\protocol$ for some nodes $u$ and $v$ of $\protocol$, or
  \item\label{enum:lemma:bypass removable cloning:removable node} the result of removing a single removable node and all of its incoming edges from $\protocol$.
 \end{enumerate}
 Then $\protocol'$ is insecure as well.
\end{lemma}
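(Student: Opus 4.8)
The plan is to handle each of the three cases by turning a winning adversary strategy $\Pi$ for $\protocol$ into a winning strategy $\Pi'$ for $\protocol'$. In every case $\Pi'$ will internally maintain a simulated run of $\protocol$ under $\Pi$: from the real $\protocol'$-run it builds a $\protocol$-run, feeds $\Pi$ the corresponding simulated view, and translates $\Pi$'s chosen action back into an action on the real $\protocol'$-run; when $\Pi$ eventually prints a local session, $\Pi'$ prints the same one. What makes this possible is that each of the three transformations is harmless from the adversary's side: she either loses access only to a query she did not need (removable node), gains strictly more freedom (cloning), or has the data she relies on re-routed to her in a different but still fully visible form (bypass). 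Moreover, the print-action is judged only against $S$ as a multiset of assignments to $\var{\protocol}=\var{\protocol'}$, and $S$ in that sense is unchanged by the transformation and does not depend on the adversary's replies; hence it suffices that the simulated $\protocol$-run is a legitimate run, under $\Pi$, of \emph{some} global session with the same underlying multiset of local sessions as the $\protocol'$-session being played.

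For the removable-node case (item~\ref{enum:lemma:bypass removable cloning:removable node}), $\protocol'$ arises from $\protocol$ by deleting a node $u$ with no successors all of whose predecessors are predecessors of some other node; a $\protocol'$-run is then exactly a $\protocol$-run with the $u$-queries omitted, and those queries feed nothing while their arguments are a sub-tuple of another node's arguments, so $\Pi'$ just re-inserts them into the simulation (after their predecessors have been answered) and acts on $\Pi$'s $u$-replies only internally. For the bypass case (item~\ref{enum:lemma:bypass removable cloning:bypass}), in $\protocol'$ the node $v$ no longer receives the bit $I_i(u)$ but instead receives $u$'s own inputs $I_i(w)$, $w$ a predecessor of $u$, directly; thus every $\protocol'$-$v$-query exposes to the adversary precisely the arguments of the matching $u$-query, from which she recovers $I_i(u)$ as $\Pi$'s reply to that $u$-query in the simulation and feeds it into the simulated $v$-query, while $u$ itself is still queried in $\protocol'$ and handled verbatim. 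For the cloning case (item~\ref{enum:lemma:bypass removable cloning:cloning}), $\Pi'$ treats $v$ and the new clone $v'$ as a single virtual node: it pairs up pending $v$- and $v'$-queries carrying equal arguments---delaying a reply via \emph{wait} until the partner query has appeared, which is guaranteed since the partner is pending at the latest once the schedule is exhausted---and answers each matched pair with the single bit $\Pi$ assigns to the corresponding virtual $v$-query, so that the simulated $\protocol$-run in which $v$ and $v'$ are identified is consistent.

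The main obstacle is that the adversary sees neither the index of the querying user nor the remaining schedule, so the simulation cannot literally follow ``the same'' users: in the bypass and cloning cases several users may issue $v$- (or $u$-) queries with identical arguments, and $\Pi'$ must decide which simulated-$\protocol$ event each real query corresponds to. The resolution is exactly the observation above---since the print-action is judged only against the multiset of variable-assignments, the matching may be fixed arbitrarily (say, first-come-first-served within each argument class), and one only needs to verify that the resulting simulated history is a valid $\protocol$-run under $\Pi$ for a global session with the correct local sessions. Carrying this out is delicate precisely because $\Pi$ may be \emph{inconsistent}, i.e.\ its reply to a query can depend on the whole history and not just on the query's arguments, so the bookkeeping must propagate the full simulated view rather than a cached reply table; together with the availability constraints on \emph{wait}, this is where the argument becomes technical.
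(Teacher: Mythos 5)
Your high-level plan---turn a winning strategy $\Pi$ for $\protocol$ into a strategy $\Pi'$ for $\protocol'$ by maintaining a simulated $\protocol$-run and printing whatever $\Pi$ prints---is the same as the paper's, and your removable-node case matches the paper's argument. The other two cases contain genuine gaps. For cloning, your pairing mechanism rests on delaying replies via \emph{wait} until the partner $v$/$v'$-query with equal arguments has shown up. But \emph{wait} is only available when the next scheduled query can be performed: a schedule in which a user's $v$-query is immediately followed by that same user's query to a successor of $v$ leaves you with no admissible action except answering the $v$-query, long before any $v'$-query exists. So the ``delay until the partner is pending'' step is not executable in general. The paper's construction answers every query immediately and instead enforces, by bookkeeping, that the $i$-th $v$-query and the $i$-th $v'$-query carrying the same argument tuple receive the same bit, namely $\Pi$'s reply to the $i$-th simulated $v$-query with those arguments; no waiting is needed.

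More importantly, the matching problem you yourself single out as the main obstacle is not resolved by your argument. Fixing the matching arbitrarily (first-come-first-served within an argument class) and then saying ``one only needs to verify that the simulated history is a valid $\protocol$-run for a session with the correct local sessions'' is circular: that verification is exactly the step that can fail. Take two users $A,B$ with identical inputs to $v$ but different values of their other variables, a schedule in which $A$ queries $v$ first while $B$ queries $v'$ first, and a $\Pi$ that answers the two simulated $v$-queries differently---inconsistent strategies are precisely the ones the paper shows can be unavoidable. Under a count-based matching, $A$ then echoes one reply at the successors kept by $v$ and the other reply at the successors moved to $v'$; the simulated view is still a valid view, but of a global session in which the values of $A$ and $B$ outside $\var{v}$ are swapped, i.e.\ of a \emph{different} multiset, so $\Pi$'s correctness guarantees nothing and the printed assignment may lie outside $S$. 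The same ambiguity hits your bypass case, where you keep $u$, forward its queries verbatim, and try to match each real $v$-query to ``the matching $u$-query'': when $u$ has further successors, a wrong match makes the bit you insert into $A$'s simulated $v$-query disagree with the $u$-reply $A$ visibly echoes elsewhere. The paper's bypass proof is engineered to avoid any such matching: it first clones $u$ so that the copy feeding $v$ has $v$ as its only successor, removes that copy, and then generates the simulated $u$-query freshly at the moment of the real $v$-query, so its reply is consumed immediately by the same simulated user and echoed nowhere else. Your write-up needs an idea of this kind (and a genuine argument for the cloning case) rather than the appeal to the multiset of local sessions being unchanged.
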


Note that we cannot remove all removable nodes of a protocol in one step, since there may be different nodes $u$ and $v$ with the same set of predecessors, but we cannot remove both of them without rendering the adversary's strategy defect.

\begin{proof}
  Note that in all three cases, the set of variables of $\protocol$ is unchanged. In particular, an assignment $I$ is a local session for $\protocol$ if and only if $I$ is a local session for $\protocol'$. Hence, let $S$ be a multiset of local sessions for $\protocol$ (or, equivalently, for $\protocol'$).
  \begin{enumerate}
    \item Since $\protocol$ is insecure, there is a successful strategy $\Pi$ against $\protocol$. We construct a strategy $\Pi'$ against $\protocol'$ that essentially simulates $\Pi$. The strategy $\Pi'$ proceeds as follows: During the run of the protocol, $\Pi$ keeps a list of entries of the form $(i,I_v,r)$, where $i\in\mathbb N$, $I_v$ is an assignment to the inputs to $v$, and $r$ is a single bit. Such an entry states that when $\Pi$ was queried for the $i$-th time with the input values $I_v$ at the node $v$, the reply was $i$. (Recall that an adversary's strategy may be inconsistent in the sense that when replying to the same question twice, it may produce different answers.) Additionally, $\Pi'$ keeps track of the queries made at the nodes $v$ and $v'$.
 
    \begin{itemize}
      \item for each query at a node $w\notin\set{v,v'}$: perform the query in a simulation of the original protocol with strategy $\Pi$.
      \item for each query at a node $w\in\set{v,v'}$: Let $I_v$ be the assignment to the input values from the user question. Let the current query be the $i$-th time that the value $I_v$ was asked at the node $w$. Check whether an entry $(i,I_v,r)$ is present in the above-mentioned list. If so, reply to the query with $r$. Otherwise, perform a query of the node $v$ in the simulated protocol $\protocol$ with strategy $\Pi$, and let $r$ be the reply. Store the entry $(i,I_v,r)$ and reply with $r$.
    \end{itemize}
 
  Clearly, this results in a simulation of $\protocol$ with strategy $\Pi$ on a global session with the same multiset of local sessions as the actually running global session. Essentially, $\Pi'$ simulates the caching of the value for $v$ that is done when $\protocol$ is performed. Therefore, since $\Pi$ is a successful strategy, $\Pi$ will eventually produce a local session. The strategy $\Pi'$ simply returns the local session produced by $\Pi$ and is successful.

  \item We first modify $\protocol$ so that $v$ is the only successor of $u$ in $\protocol$: We introduce a clone $u'$ of $u$, and connect all successors of $u$ that are different from $v$ to $u'$ instead. This protocol, which we denote with $\protocol_{u'}$ remains insecure due to point~\ref{enum:lemma:bypass removable cloning:cloning} of the present Lemma. We now connect all predecessors of $u$ directly to $v$, remove the node $u$, and call the resulting protocol $\protocol'$. Clearly, $\protocol'$ is identical to the protocol obtained as the $u$-$v$-bypass of $\protocol$ (the node $u'$ of $\protocol'$ is exactly the node $u$ of the bypass). Since $\protocol_{u'}$ is insecure, there is a strategy $\Pi$ for $\protocol_{u'}$. Let $g_1,\dots,g_n$ be the predecessor nodes of $u$ in $\protocol_{u'}$ (recall that $u$ cannot be a variable).
  The adversary can simulate performing the strategy $\Pi$ for the protocol $\protocol_{u'}$ when the protocol $\protocol'$ is run as follows:

  \begin{itemize}
   \item Note that no queries to $u$ appear, since $\protocol'$ does not contain the node $u$.
   \item For each query of some node different from $v$: Simply reply to the request as the original strategy $\Pi$ does.
   \item For each query of $v$: Due to construction of $\protocol'$, the query to $v$ in $\protocol'$ includes all arguments to a query of $u$ in $\protocol_{u'}$, i.e., the values for $g_1,\dots,g_n$. Therefore, the adversary can use the given values for $g_1,\dots,g_n$ to simulate the query to $u$ preceding the $v$-query in $\protocol_{u'}$, and hence, when answering the actual query to $v$, has access to the value of $u$ computed by the strategy $\Pi$. 
  \end{itemize}
  
  Since $\Pi$ is successful against all possible schedules, in particular for all schedules in which each $u$-query is directly followed by the $v$-query of the corresponding local session, the simulated strategy $\Pi$ successfully returns a local session from $S$. Hence the adversary simply can output this session.

  \item Let $\Pi$ be a strategy for the original protocol $\protocol$. We show how the adversary can apply a modification of $\Pi$ for the protocol $\protocol'$. Let $u$ be the removed node, and let $w$ be a node such that each $u$-predecessor (in $\protocol$) is also a $w$-predecessor (in $\protocol$). In this case, each node different from $u$ receives the same number of queries, independently of whether protocol $\protocol$ or $\protocol'$ is run. The adversary can simulate $\Pi$ as follows: After each query to $w$, simulate the corresponding query to $u$ (as in the case above, the adversary can obtain all necessary input-values for the simulated $u$-query from the actual $w$-query). Since $u$ has no outgoing edges, the value returned by $\Pi$ for the $u$-query can be ignored. Since $\Pi$ is successful in particular for all scheduler that perform each $u$-query directly after the $w$-query of the corresponding local session, the simulation is correct and hence, as above, the adversary can output the local session eventually returned by $\Pi$.
 \end{enumerate}
\end{proof}

\begin{lemma}\label{lemma:splitting edge}
 Let $\protocol'$ be a protocol obtained from $\protocol$ by splitting an edge $(u,v)$. Then $\protocol$ is secure if and only if $\protocol'$ is secure.
\end{lemma}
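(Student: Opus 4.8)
The plan is to prove both implications through their contrapositives, by turning a winning adversary strategy for one protocol into a winning strategy for the other, exactly in the style of the bypass, cloning and removable-node arguments in Lemma~\ref{lemma:bypass removable cloning}. Two observations underpin the translations. First, splitting $(u,v)$ only adds the fresh node $w$, which is non-variable, so $\var{\protocol}=\var{\protocol'}$; hence the local sessions, and therefore the multisets $S$ that the adversary must hit, are literally the same objects for $\protocol$ and $\protocol'$. Second, $w$ is a pure relay: in $\protocol'$ its only predecessor is $u$ and its only successor is $v$, so the only information about $u$'s value that ever reaches $v$ in $\protocol'$ passes through $w$. Consequently the ``intended'' behaviour of $w$ is the identity, and, conversely, whatever behaviour the adversary displays at $w$ can be absorbed into its behaviour at $v$.

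For ``$\protocol'$ insecure $\Rightarrow$ $\protocol$ insecure'' (i.e.\ $\protocol$ secure $\Rightarrow$ $\protocol'$ secure), let $\Pi'$ be a winning strategy for $\protocol'$. When $\protocol$ is run, the adversary internally maintains a run of $\protocol'$ under $\Pi'$ on the global session obtained from the current one by inserting, for each session $i$, the query $(i,w)$ immediately before $(i,v)$ (this is a legal schedule for $\protocol'$). Every query to a node other than $v$ is forwarded verbatim to this virtual run and $\Pi'$'s reply is relayed back. When session $i$ performs its $v$-query in $\protocol$, whose argument for $u$ is some value $a$, the adversary has the virtual run perform the corresponding $w$-query with argument $a$, reads off $\Pi'$'s reply $b$, then has the virtual run perform the $v$-query with $b$ in place of $a$, reads off $\Pi'$'s reply $\rho$, and answers $\rho$. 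When $\Pi'$ prints $I$, the adversary prints $I$; since the virtual global session has the same multiset $S$ of local sessions, $I\in S$, so the simulated strategy wins.

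For ``$\protocol$ insecure $\Rightarrow$ $\protocol'$ insecure'', let $\Pi$ be winning for $\protocol$. When $\protocol'$ is run, the adversary answers every $w$-query by echoing its single argument, and maintains a virtual run of $\protocol$ under $\Pi$ obtained by deleting all $w$-queries from the current schedule. Because the echoed reply forces the stored value of $w$ to equal the stored value of $u$ in every session, the $v$-query observed in $\protocol'$ carries exactly the tuple of values that $v$ would receive in $\protocol$; hence all non-$w$-queries are forwarded to the virtual run unchanged and $\Pi$'s replies relayed back. When $\Pi$ prints $I$, the adversary prints $I$; the virtual run is a genuine run of $\protocol$ on a global session whose multiset of local sessions is $S$, so $I\in S$.

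The part that needs care, precisely as in Lemma~\ref{lemma:bypass removable cloning}, is the bookkeeping that keeps the real and virtual runs synchronised in the presence of two features of the model: the adversary may answer identical queries differently (inconsistency), so the virtual state must be tracked per query occurrence rather than per query value; and the adversary may choose to \emph{wait}. The latter is the only genuinely delicate point, and it arises in the first direction: $\Pi'$ may decline to answer the virtual $w$-query (or the ensuing virtual $v$-query) right away, so the simulating adversary cannot yet answer the pending real $v$-query. The remedy is to let the simulating adversary defer that reply, keeping the virtual run in lockstep with the real run by processing other real events in the meantime; since runs are finite and every query is answered before a run ends, $\Pi'$ eventually commits to $b$ and $\rho$, at which point the deferred reply is issued. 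Everything else --- checking legality of the inserted and deleted schedules, and that the variable sets and hence the local sessions coincide --- is routine. A slightly cleaner packaging of the first direction is to first transform $\Pi'$ into a winning strategy for $\protocol'$ that answers every $w$-query immediately with the identity (using the same deferral argument), after which the passage to a strategy for $\protocol$ is immediate.
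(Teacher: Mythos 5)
Your proof is correct in substance, and half of it coincides with the paper's: for ``$\protocol$ insecure $\Rightarrow\protocol'$ insecure'' the paper does exactly what you do --- answer every $w$-query by echoing its single input and observe that the $v$-query of each local session then carries the same tuple as in $\protocol$, so a winning strategy for $\protocol$ can be run unchanged. For the other direction, however, the paper takes a much shorter route that you re-derive by hand: it notes that un-splitting is precisely a $w$-$v$-bypass followed by removal of the then removable node $w$, and invokes Lemma~\ref{lemma:bypass removable cloning}, so ``$\protocol'$ insecure $\Rightarrow\protocol$ insecure'' is a two-line consequence of already established insecurity-preserving operations. Your direct simulation (inserting a virtual $(i,w)$-query, read off from the observed real $v$-query, immediately before the virtual $(i,v)$-query, deferring the real $v$-reply until $\Pi'$ commits) essentially replays the composition of the bypass and removable-node simulations for this special case. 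What each approach buys: the paper's reduction is shorter and keeps all the delicate simulation bookkeeping in one place; your self-contained argument makes explicit the two points the paper glosses over, namely per-occurrence bookkeeping for inconsistent strategies and the deferral needed when $\Pi'$ waits. Be aware, though, that the deferral step is still only a sketch at the same level as the paper's own bypass proof: making the wait-availability flags of the virtual run consistent with an ex-post virtual schedule while the real run is blocked on a deferred $v$-reply requires some care, and neither your write-up nor the paper's spells this out. Also, your closing suggestion --- first normalizing $\Pi'$ so that it answers every $w$-query immediately with the identity --- is essentially equivalent to the lemma itself, so it repackages rather than simplifies the argument.
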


\begin{proof}
 Since reversing the splitting operation can be seen as applying a bypass transformation and removing a removable node, we know that $\protocol$ can be obtained from $\protocol'$ by these two operations. Therefore, due to Lemma~\ref{lemma:bypass removable cloning}, if $\protocol'$ is insecure, then $\protocol$ is insecure as well. 
 
 For the converse, assume that $\protocol$ is insecure, and let $w$ be the node introduced in the process of splitting $(u,v)$. Then clearly a successful strategy $\Pi$ for $\protocol$ can be applied for $\protocol'$ as follows:
 \begin{itemize}
  \item Answer every $w$-request with its input (i.e., the value for $u$).
  \item Due to the structure of $\protocol$, the $v$-euqry for each local session is performed after the $w$-query of the same local session. Therefore, the result of the $w$-query---due to the point above, this is simply the value for $v$---is then available for processing the $v$-query of the same local session, and hence the strategy $\Pi$ can be performed.
 \end{itemize}
\end{proof}

The final ``basic operation'' we discuss is essentially a subset condition: Any ``well-formed'' subset of an insecure protocol is insecure, provided that the subset contains all the nodes required to keep track of the variables relevant for the ``subset protocol.'' 

\begin{definition}
 A subset $\protocol'$ of a protocol $\protocol$ is a \emph{closed sub-protocol of $\protocol$} if all predecessors of nodes in $\protocol'$ are elements of $\protocol'$, and there is a node $u\in\protocol'$ that has no successors in $\protocol'$ and such that each path starting in $\protocol'$ and ending in $\protocol\setminus\protocol'$ visits $u$.
\end{definition}

\begin{lemma}\label{lemma:closed sub-protocol}
 Let $\protocol$ be insecure, and let $\protocol'$ be a closed sub-protocol of $\protocol$. Then $\protocol'$ is insecure as well.
\end{lemma}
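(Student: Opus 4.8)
The plan is to convert a winning adversary strategy $\Pi$ for $\protocol$ into a winning strategy $\Pi'$ for $\protocol'$, by letting $\Pi'$ run $\Pi$ on an \emph{extension} of the $\protocol'$-run that $\Pi'$ observes. The first thing I would extract from the closure condition is a structural fact about the distinguished node $u\in\protocol'$ (the one with no successors in $\protocol'$ through which every path from $\protocol'$ into $\protocol\setminus\protocol'$ must pass): no node of $\protocol'$ other than $u$ can have a successor outside $\protocol'$, because an edge $v\to w$ with $v\in\protocol'\setminus\set{u}$ and $w\notin\protocol'$ would itself be a path from $\protocol'$ into $\protocol\setminus\protocol'$ avoiding $u$. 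Since $\protocol'$ is also predecessor-closed — so no edge enters $\protocol'$ from the outside and $\vars(\protocol')=\vars(\protocol)\cap\protocol'$ — the only information that crosses between $\protocol'$ and $\protocol\setminus\protocol'$ during any run is the single bit $I_i(u)$ that each session $I_i$ carries out of $\protocol'$.

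Given a global session $(S',\sigma')$ of $\protocol'$, I would extend it to a global session $(S,\sigma)$ of $\protocol$: each $I'\in S'$ becomes $\widehat{I'}$ by assigning $0$ to all variables in $\vars(\protocol)\setminus\vars(\protocol')$, and $\sigma$ first performs all $\protocol'$-queries in the order dictated by $\sigma'$ and then all $\protocol\setminus\protocol'$-queries in a fixed topological order. This is a legal schedule precisely because the only cross edges leave $u$, which is already queried in the first phase. During the real $\protocol'$-run, $\Pi'$ maintains a simulation of the $\protocol$-run on $(S,\sigma)$ under $\Pi$: it hands each $\protocol'$-query it receives, unchanged, to the simulated $\Pi$ and replies exactly as $\Pi$ does. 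Since $\protocol'$-queries only involve predecessors inside $\protocol'$, and $\Pi$'s decisions depend only on the identity-free view, an easy induction shows that the real $\protocol'$-run under $\Pi'$ is, step for step, the first phase of the $\protocol$-run on $(S,\sigma)$ under $\Pi$.

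When the real $\protocol'$-run is about to terminate, $\Pi'$ completes the simulation internally — playing out the second phase of $(S,\sigma)$ — and then issues the $\advcmd{print}$ action that $\Pi$ would, restricted to $\vars(\protocol')$. The point that makes the internal continuation possible is that $\Pi'$ does not need to know which $u$-reply belonged to which session: every query of the second phase has arguments that are functions only of (copies of) the bits $I_i(u)$, of the constants $0$ fed in by variables outside $\protocol'$, and of $\Pi$'s own earlier replies in that phase — and $\Pi'$ has observed the \emph{multiset} $\set{I_i(u) \mid I_i \in S}$ in the first phase. Because the adversary view ignores session indices, pairing these bits with the continuing sessions in any order produces the same view for $\Pi$; hence $\Pi'$ reconstructs exactly the view $\Pi$ has on $(S,\sigma)$. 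Since $\Pi$ is winning it prints some $I\in S$; as $I=\widehat{I'}$ for a unique $I'\in S'$ and $\widehat{I'}$ restricted to $\vars(\protocol')$ is $I'$, the answer $I'\in S'$ makes $\Pi'$ win on $(S',\sigma')$.

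The step I expect to be the real obstacle is the reconstruction argument: one has to be convinced that an anonymized transcript of the $\protocol'$-part alone pins down the remainder of the $\protocol$-run, and this is exactly where the existence of a \emph{single} exit node $u$ is essential — with two exit nodes $u_1$ and $u_2$, $\Pi'$ could not recover how the bit leaving $\protocol'$ at $u_1$ and the bit leaving at $u_2$ are correlated inside one session, so it could not rebuild $\Pi$'s view. The remaining bookkeeping — the reply timestamps, the availability flag for \emph{wait}, and the fact that $\Pi$, and therefore the simulation, may be inconsistent — is routine and is handled just as in the proof of Lemma~\ref{lemma:bypass removable cloning}.
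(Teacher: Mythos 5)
Your proof is correct and follows essentially the same route as the paper's: answer the real $\protocol'$-queries by running the winning strategy $\Pi$ on them, then internally continue the run over $\protocol\setminus\protocol'$ with dummy inputs and the stored $u$-replies, and finally print the restriction of $\Pi$'s output to $\vars(\protocol')$. The only difference is that you spell out the re-pairing/anonymity argument justifying that the $u$-bits can be fed to the simulated continuations in any order (yielding a genuine run on the same multiset of sessions under some schedule), a point the paper's proof leaves implicit.
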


\begin{proof}
 By definition of closed sub-protocols, every variable from $\var{\protocol'}$ is only connected to nodes in $\protocol\setminus\protocol'$ via an outgoing edge of $u$. Since $\protocol$ is insecure, let $\Pi$ be a strategy for $\protocol$. We construct a strategy $\Pi'$ for $\protocol'$ as follows:
 
 \begin{itemize}
  \item let all user sessions for the protocol $\protocol'$ complete. Use the strategy $\Pi$ to determine the answers to the queries. Since $\protocol'$ is predecessor-closed, these queries do not require any values from nodes in $\protocol\setminus\protocol'$.
  \item after all these sessions have completed: For each completed session, simulate a user session in the remainder of the protocol, i.e., $\protocol\setminus\protocol'$, using arbitrary values as user inputs, and the previously-determined return value of the query at $u$ when needed (i.e., for the simulation of $\Pi$ at the successor nodes of $u$).
  \item Since $\Pi$ is a correct strategy, $\Pi$ eventually returns a local session $I$ of $\protocol$. The restriction of $I$ to the variables in $\protocol'$ is a local session of $\protocol'$.
 \end{itemize}

\end{proof}

We now give the proof of the embedding theorem:

\theoremembedding*

\begin{proof}
 We use the results established in this section, to prove that if $\protocol$ is insecure, then so is $\protocol'$. We prove that $\protocol'$ can be obtained from $\protocol$ using the operations bypassing, removing removable nodes, cloning, splitting, reversing splitting, and taking closed sub-protocols. For this, we perform a number of transformations on $\protocol$. 
 
 \begin{enumerate}
  \item\label{embedding step:remove intermediate nodes} \emph{Remove intermediate nodes in $\protocol$.} While in $\protocol$, there is a path $\varphi(u)\rightsquigarrow v\rightsquigarrow\varphi(w)$ for $u,w\in\protocol'$ and $v\notin\varphi(\protocol')$, do the following:
     \begin{itemize}
        \item Let the path be $\varphi(u)\rightarrow v_1\rightarrow v_2\rightarrow\dots\rightarrow v_n=\varphi(w)$ with $v_i\notin\varphi(\protocol')$ for $i\in\set{1,\dots,n-1}$.
        \item Use the bypass operation to remove the edges $\varphi(u)\rightarrow v_1$ and $v_1\rightarrow v_2$, and introduce an edge $w\rightarrow v_2$ for each $w$ with $w\rightarrow v_1$.
        \item The remaining protocol remains insecure due to Lemma~\ref{lemma:bypass removable cloning}.\ref{enum:lemma:bypass removable cloning:bypass}. Further, since 
        \begin{enumerate}
          \item there is no pair $(u,v)$ for which there is a path $\varphi(u)\rightarrow\varphi(v)$ with only intermediate nodes from $\protocol\setminus\varphi(\protocol')$ after this change but not before, and
          \item we still have the property that every path leaving $\range(\protocol')$ does so via the $\varphi$-image of an output node of $\protocol'$,
          \item predecessor nodes of $\varphi$-images of $\var{\protocol'}$ are not affected,
        \end{enumerate}
        $\varphi$ remains a secure embedding.
        \item Continue the above operation until $n=1$, i.e., the path $\varphi(u)\rightarrow v_1\rightarrow v_2\rightarrow\dots\rightarrow v_n=\varphi(w)$ has been replaced with an edge $(\varphi(u),\varphi(w))$.
     \end{itemize}
     After this operation, there are no edges $(\varphi(u),v)$ in $\protocol$ anymore where $v\notin\varphi(\protocol')$ and $u$ has a successor in $\protocol'$, and $\varphi$ still is a secure embedding of $\protocol'$ into $\protocol$.

  \item \emph{Make $\varphi$ injective.} While there are nodes $u\neq v\in\protocol'$ with $\varphi(u)=\varphi(v)$, do the following, starting with the image of the root of $\protocol'$ (i.e., we can inductively assume that $\varphi$ is injective on the successors of $u$ and $v$, and hence $\varphi^{-1}(u)$ and $\varphi^{-1}(v)$ are well-defined):
    \begin{itemize}
       \item Let $w=\varphi(u)=\varphi(v)$
       \item Introduce a new node $w'$ into $\protocol$
       \item For each edge $(s,w)$ in $\protocol$, introduce an edge $(s,w')$
       \item For each edge $(w,t)$ such that there is no edge $v\rightarrow\varphi^{-1}(t)$ in $\protocol'$, remove the edge $(w,t)$ from $\protocol$ and add the edge $(w',t)$
       \item Redefine $\varphi(u)$ as $\varphi(u)=w'$.
    \end{itemize}
    Since this transformation is exactly the cloning operation, it follows from Lemma~\ref{lemma:bypass removable cloning}.\ref{enum:lemma:bypass removable cloning:cloning} that the resulting protocol is still insecure. Clearly, after the transformation, the modified $\varphi$ is still an embedding, and still there are no edges $(\varphi(u),v)$ in $\protocol$ where $v\notin\varphi(\protocol')$ and $u$ has a successor in $\protocol'$, and $\varphi$ still is a secure embedding of $\protocol'$ into $\protocol$.
   \item \emph{Remove irrelevant variables} For each $x\in\vars(\protocol)\setminus\range(\chi)$, remove the variable $x$ completely from $\protocol$. The protocol remains insecure, since the adversary can always simulate a protocol run with the variable $x$ still present by using the value $0$ for the input $x$. In addition, remove all nodes from $\protocol\setminus\varphi(\protocol')$ that have no ancestor nodes that are a variable. Clearly, $\protocol$ remains insecure after this transformation since these nodes cannot help the adversary's strategy (consider a schedule since all these nodes are queries before any other).
   \item \emph{Transform the inputs of the copy of $\protocol'$ to input nodes of $\protocol$.} For each $x\in\vars(\protocol')$, all variables $y$ from $\protocol$ such that $y\neq\chi(x)$ and there is a path from $y$ to $\varphi(x)$ have been removed in the step above. (Note that for each $x\in\range(\chi)$, there can be only one $w\in\vars(\protocol')$ such that $x\rightsquigarrow\varphi(w)$ is a path in $\protocol$.) After this, the path from $\chi(x)$ to $\varphi(x)$ in $\protocol$ consists of nodes with in-degree $1$, remove all of these nodes. The protocol remains insecure due to Lemma~\ref{lemma:splitting edge}. Now use Lemma~\ref{lemma:bypass removable cloning}.\ref{enum:lemma:bypass removable cloning:bypass} to connect $\chi(x)$ to all successors of $\varphi(x)$, remove all outgoing connections of $\varphi(x)$, remove the node $\varphi(x)$ using Lemma~\ref{lemma:bypass removable cloning}.\ref{enum:lemma:bypass removable cloning:removable node} and redefine $\varphi(x)=\chi(x)$. Clearly, the resulting protocol is still insecure and $\varphi$ remains an embedding. After this transformation, we have that if $u\rightarrow\varphi(v)$ is an edge in $\protocol$, then $u\in\varphi(\protocol')$, i.e., $\varphi(\protocol')$ is predecessor-closed: Each remaining node in $\protocol\setminus\varphi(\protocol')$ has an ancestor $\chi(x)$ for some $x\in\var{\protocol'}$. Hence if $u\rightarrow\varphi(v)$ is an edge, then we have that $\chi(x)=\varphi(x)\rightsquigarrow u\rightarrow (v)$. Such a path with $u\notin\varphi(\protocol')$ does not exist in $\protocol$ anymore after application of step~\ref{embedding step:remove intermediate nodes}.
 \end{enumerate}
 
 Since all the above steps preserve the insecurity of $\protocol$ and the fact that $\varphi$ is a secure embedding of $\protocol'$ into $\protocol$, we therefore can without loss of generality assume that $\protocol$ is already the result of the above steps. In particular, this implies:
     
 \begin{itemize}
  \item there are no paths $\varphi(u)\rightsquigarrow v\rightsquigarrow(w)$ in $\protocol$ with $v\notin\varphi(\protocol')$,
  \item $\varphi$ is injective,
  \item for each $x\in\vars(\protocol')$, we have that $\chi(x)=\varphi(x)\in\vars(\protocol)$.
  \item no variable $x\in\vars(\protocol)\setminus\range(\chi)$ is connected to a node in $\varphi(\protocol')$.
  \item $\varphi(\protocol')$ is predecessor-closed.
 \end{itemize}
 
 We now show that $\varphi(\protocol')$ is a closed sub-protocol. 

 \begin{itemize}
  \item By definition, $\varphi(\protocol')$ is an induced subgraph, and by the above, $\varphi(\protocol')$ is predecessor-closed.
  \item We show that every path from $\varphi(\protocol')$ to a node from $\protocol\setminus\varphi(\protocol')$ visits $\varphi(r_{\protocol'})$, where $r_{\protocol'}$ is the root of $\protocol'$. Hence let $\varphi(u)\rightsquigarrow v$ be a path in $\protocol$, where $v\notin\varphi(\protocol')$, and $u\neq r_{\protocol'}$. Since $\varphi$ is an embedding, there is some $w\in\protocol'$ such that $v\rightsquigarrow\varphi(w)$ is a path in $\protocol$. This is a contradiction, since due to the above, there are no such paths in $\protocol$ anymore after the transformation.
 \end{itemize}
 
 Therefore, $\varphi(\protocol')$ is indeed a closed sub-protocol. Since $\protocol$ is insecure, it follows from Lemma~\ref{lemma:closed sub-protocol} that $\varphi(\protocol')$ is insecure as well. 
 
 Since all intermediate nodes on paths $\varphi(u)\rightsquigarrow\varphi(v)$ that are no elements from $\varphi(\protocol')$ have been removed, and for all $u\rightarrow v$, such a path $\varphi(u)\rightsquigarrow\varphi(v)$ exists in $\protocol'$, it follows that if $\varphi(u)\rightarrow\varphi(v)$ is an edge in $\protocol$, then $u\rightarrow v$ is an edge in $\protocol'$. We can without loss of generality assume that the other direction is true as well, since adding edges only makes a protocol more insecure. Since $\varphi$ is injective, this implies that $\varphi$ is in fact an isomorphism. Therefore, $\protocol'\approx\varphi(\protocol')$ is insecure as well as claimed.
\end{proof}

\section{Security Proofs for Deep Protocols}

We now give the proofs for the corollaries in Section~\ref{sect:embedding application}. Both of these results follow easily by applying the embedding technique introduced in Section~\ref{sect:deep:embedding} to the results obtained for flat protocols in Section~\ref{sect:two cases flat secure protocols}.

\corollarydisjointvariablesgeneralization*

\begin{proof}
 This result follows directly from Theorem~\ref{theorem:disjoint variables} and Theorem~\ref{theorem:embedding}: Consider the protocol $\protocol'=g(f_1(x^1_1,\dots,x^1_{k_1},\dots,f_n(x^n_1,\dots,x^n_{k_n})$, where $\set{x^i_1,\dots,x^i_{k_i}}$ is the set of variables $x$ such that $x\rightsquigarrow f_i$ is a path in $\protocol$. Then, due to Theorem~\ref{theorem:disjoint variables}, the protocol $\protocol'$ is secure. Clearly, the function $\varphi$ defined with $\varphi(x)=x$ for all $x\in\var{\protocol}$, $\varphi(f_i)=f_i$ for all relevant $i$, and $\varphi(g)=r_\protocol$ where $r_\protocol$ is the root of $\protocol$ (if $\protocol$ does not have a root, we add one connected to all nodes of $\protocol$ without an outgoing edge), with the function $\chi(x)=x$ for all $x\in\var{\protocol}$ constitutes a secure embedding of $\protocol'$ into $\protocol$. Therefore, Theorem~\ref{theorem:embedding} implies that $\protocol$ is secure.
\end{proof}

\corollaryprivatevariablesgeneralization*

\begin{proof}
 This result follows directly from Theorem~\ref{theorem:each function has private variable} and Theorem~\ref{theorem:embedding}: Let $L_i=\set{f_1,\dots,f_m}$, and for each $f_j\in L_i$, let $\vars(f_j)=\set{x^j_1,\dots,x^j_{k_j}}$. By theorem~\ref{theorem:each function has private variable}, the protocol $\protocol':=g(f_1(x^1_1,\dots,x^1_{k_1}),\dots,f_m(x^m_1,\dots,x^m_{k_m}))$ is secure. Due to Theorem~\ref{theorem:embedding}, it suffices to construct a secure embedding $\varphi$ of $\protocol'$ into $\protocol$. We define this embedding by $\varphi(x)=x$ for all $x\in\var{\protocol}$, $\varphi(f_j)=f_j$ for all $f_i\in L_i$, and $\varphi(g)=r_\protocol$ where $r_\protocol$ is the root of $\protocol$ (if $\protocol$ does not have a root, we add one just as in the proof of Corollary~\ref{corollary:disjoint variables generalization}). We define the function $\chi$ as $\chi(x)=x$ for each $x\in\var{\protocol}$. Clearly, this constitutes an embedding, and hence $\protocol$ is secure as claimed.
\end{proof}

\end{appendix}

\end{document}